\newtheorem{theorem}{Theorem}
\newtheorem{lemma}{Lemma}
\def\BibTeX{{\rm B\kern-.05em{\sc i\kern-.025em b}\kern-.08em
    T\kern-.1667em\lower.7ex\hbox{E}\kern-.125emX}}
\newcounter{appendx}
\newcommand{\real}[1]{\Re\left({#1}\right)}
\renewcommand{\exp}[1]{\mathrm{e}^{#1}}
\newcommand{\E}[2][]{\mathbb{E}_{#1}\!\left[{#2}\right]}
\renewcommand\nomgroup[1]{%
  \item[\itshape
  \ifstrequal{#1}{N}{Notation}{%
  \ifstrequal{#1}{A}{Acronyms}}%
]}
\begin{document}
\bstctlcite{IEEEexample:BSTcontrol}

\title{ On the Number of Maintenance Cycles in Systems with Critical and Non-Critical Components}

\author{Guanchen Li, \IEEEmembership{Student Member, IEEE}, and Dimitri Kagaris, \IEEEmembership{Member, IEEE}
\thanks{Guanchen Li and Dimitri Kagaris are with the School of Electrical, Computer, and Biomedical Engineering, Southern Illinois University, Carbondale, IL 62901, USA. Email: guanchen.li@siu.edu, \,kagaris@engr.siu.edu. }}


\maketitle
 \thispagestyle{plain}
 \pagestyle{plain}
\begin{abstract}

We present a novel mathematical framework for computing the number of maintenance cycles in a system with critical and non-critical components, where ``critical'' (CR) means that the component's failure is fatal for the system's operation and renders any more repairs inapplicable, whereas ``non-critical" (NC) means that the component  can undergo corrective maintenance (replacement or minimal repair) whenever it fails, provided that the CR component is still in operation. Whenever the NC component fails, the CR component can optionally be preventively replaced.
 We  extend traditional renewal theory (whether classical or generalized) for various maintenance scenarios for a system composed of one CR and one NC component in order to compute the average number of renewals of NC under the restriction (``bound") necessitated by CR. 
We also develop approximations in closed form for the proposed ``bounded" renewal functions. We validate our formulas by simulations on a variety of component lifetime distributions, including actual lifetime distributions of wind turbine components.

\end{abstract}

\begin{IEEEkeywords}
Renewal Theory, Bounded Renewals, Preventive Maintenance, Corrective Maintenance.
\end{IEEEkeywords}
\IEEEpeerreviewmaketitle
\nomenclature[A,01]{NC}{Non-critical}
\nomenclature[A,02]{CR}{Critical}
\nomenclature[A,03]{MR}{Minimal repair}
\nomenclature[A,04]{CM}{Corrective maintenance}
\nomenclature[A,05]{PM}{Preventive maintenance}
\nomenclature[A,06]{g-renewal}{Generalized renewal }

\nomenclature[N,01]{\(T\), \(T_i\)}{lifetime of NC, $i$th lifetime of NC}
\nomenclature[N,02]{\(\tilde T\), \(\tilde T_i\)}{lifetime of CR, $i$th lifetime of CR}
\nomenclature[N,03]{\(F_T(\cdot)\), \(\bar F_T(\cdot)\), \(f_T(\cdot)\)}{CDF, survival function, and pdf for NC}
\nomenclature[N,04]{\( F_{\tilde T}(\cdot)\), \(\bar  F_{\tilde T}(\cdot)\), \(f_{\tilde T}(\cdot)\)}{CDF, survival function, and pdf for CR}

\nomenclature[N,05]{\(S_n\)}{Summation of the NC's lifetimes  till  the $n$th renewal
}
\nomenclature[N,06]{\(M(\cdot)\)}{Classical renewal function}
\nomenclature[N,07]{\(M_g(\cdot)\)}{g-renewal function}
\nomenclature[N,08]{\(\hat M(\cdot)\)}{Approximated classical renewal function}
\nomenclature[N,09]{\(\hat M_g(\cdot)\)}{Approximated g-renewal function}
\nomenclature[N,10]{\(N_X(\cdot)\)}{Number of renewals for Process ``X''}
\nomenclature[N,11]{\(W_X(\cdot)\)}{Bounded renewal function for Process ``X''}
\nomenclature[N,12]{\(\hat W_X(\cdot)\)}{ Approximated \(W_X(\cdot)\)}
\nomenclature[N,13]{\(p_B\)}{ Probability of NC's lifetime being greater than that of CR in Process B}
\nomenclature[N,14]{\(p_{\mathcal B_{MR}}(y)\)}{ Probability of NC being greater than CR given that NC has been working for time $y$ in Process B-MR}

\nomenclature[N,16]{\(\tau(n)\)}{Approximated average time until the $n$th renewal in Process B-MR}
\nomenclature[N,17]{\(\rho_X(n)\)}{Probability of NC being greater than CR for $n$ renewals in Process B-MR under  consideration ``X''}

\printnomenclature

\section{Introduction.}\label{S:intro} 
Dependencies existing among the components of a multi-unit system also complicate its maintenance. Usually, when a component needs to undergo corrective maintenance (CM) upon failure, or preventive maintenance (PM) after a given period of time, it is cost-effective to perform some form of preventive maintenance on other components too. For maintenance planning and cost management, it is extremely helpful to know the number of maintenance sessions that a component has to undergo until a desired time $t$. Renewal theory \cite{rausand2003system}  plays an important role in that regard. Based on that, the fundamental question is how to minimize the overall maintenance cost given the general trade-off between expensive replacements or repairs upon failure, versus scheduled or opportunistic maintenance actions while components are still functioning. Recent works on the investigation of efficient maintenance policies include:
 Xu et al. \cite{XU2018141} examined the relationship between defects in a two-unit system with repairs. To simulate how the hazard rate of one component affects the health condition of another, a state-space discretization approach was suggested.
Castanier et al. \cite{CASTANIER2005109} developed a condition-based maintenance approach for a two-unit degrading system in which each unit is maintained through successive non-periodic examinations and can be maintained by  full replacements.
Zhang et al. \cite{zhang2011condition} proposed an iterative process to reduce overall maintenance costs through a condition-based and opportunistic maintenance strategy for a sereal two-unit degrading system.
Li et al. \cite{li2018maintenance} provided an approach to address the optimization of system maintenance and steady state availability of a system with two units connected in series.
Zequeira  and Berenguer \cite{zequeira2004maintenance} regarded as a two-unit standby system that can function well if at least one units is not damaged.
Salari and Makis \cite{salari2017optimal} provided a preventive and opportunistic maintenance strategy for a two-unit model with economic dependency that may be used for distribution networks of electric power. 
Shafiee and Finkelstein \cite{shafiee2015optimal} proposed an ideal age-based group maintenance plan for a multi-unit series system that reduces the average long-term maintenance expense per unit of time for the system. Corrective maintenance (replacement) is performed on a component when it reaches a certain degree of degradation, and preventative maintenance is also performed on the other components at the same time. Additionally, the entire system receives routine preventative maintenance.
Wang et al. \cite{WANG20191} analyzed the age- and condition-based opportunistic maintenance strategy for a serial two-unit system where one unit is health monitored.
Zhang et al. \cite{zhang2018optimal} considered a system in which  the failure of the first unit  is likely to cause the failure of the second unit, whereas the failure of the second unit is fatal for the entire system.
By taking into account a two-component system with failure interactions, they were able to establish an optimum strategy for minimizing the long-term average maintenance cost.
Syamsundar et al. \cite{syamsundar2021estimating} examined PM and CM models on a repairable one-unit system based on various combinations of maintenance types and calculated maintenance efficacy by providing failure times and repairs data.
Wang et al.\cite{wang2017preventive} 
examined two preventive maintenance models by means of the generalized geometric process.
Sun et al.\cite{sun2018scheduling} introduced two preventive maintenance models to describe the fluctuations of the saturation effect and determine the long-run overall cost.
Yu et al. \cite {6045309} considered various degrees of maintenance efficacy for imperfect Kijima and non-linear preventive maintenance models with limited CM operations.
Doyen et al. \cite{6046108} established a  framework to model and assess  aging and maintenance efficiency in  systems with corrective and preventative maintenance.
Kijima et al. \cite{KIJIMA1988194} compared a periodic replacement issue with a general repair, where a system is only changed when necessary and is fixed anytime it breaks down.
In our recent work \cite{IJRQSE}, we focused on renewals in a system that has two units that are coupled in series. In the event of a breakdown, any unit receives corrective maintenance (i.e., complete replacement, perfect repair, or minimal repair). When one of the units fails and must be repaired, the other unit may  optionally be replaced  (if  the cost is allowable). 

In this study, we take into account a two-unit system with a critical (CR) and a non-critical (NC) component. 
In contrast to \cite{IJRQSE}, where system operation continues indefinitely, when the CR component fails, the entire system is presumed to stop operating, and no further maintenance or repair is required. As long as the CR has not failed, the NC component may perform corrective maintenance (i.e., it may be completely replaced/perfectly repaired to an ``as-good-as-new" condition or minimally repaired to an ``as-bad-as-old" state).
Since the whole system is assumed to shut down when the CR fails, CR can only undertake preventive full replacement (on the occasions when NC fails) or simply left as is. 

If one considers the NC component alone (without regard to CR), the average number of renewals of NC is readily calculated  (see, e.g., \cite{rausand2003system}).
The average number of NC renewals under the restriction (or ``bound") set by CR, however, has not been studied (to the best of our knowledge). In this work, we first formulate new ``bounded" renewal functions to address the above goal.
  Then we establish closed-form approximations that work for arbitrary lifetime distributions of NC and CR. We validate our formulas via Monte Carlo simulations for various combinations of distributions, including actual lifetime distributions for wind turbine components. Preliminary results of this research appear in \cite{Kaga2206}.

The rest of the paper is organized as follows: A brief review of renewal theory is given in Section \ref{S:Pre}. The derivations of the renewal functions and the approximate models for perfectly repaired NC (Process A) and minimally repaired NC (Process A-MR) without preventive maintenance for CR are respectively given in Section \ref{S:ModelA} and \ref{S:ModelAMR}. Section \ref{S:ModelB} and \ref{S:ModelBMR} provide the derivations of renewal functions and the approximations of the models for perfectly repaired NC (Process B) and minimally repaired NC (Process B-MR) with preventive maintenance for CR, respectively. 
Monte Carlo simulations are used to confirm that the derivations are correct for various component lifetime selections.
In Section \ref{S:case}, the illustrations of the application of our formulas
on actual lifetime distributions that have been obtained for
wind turbine components are provided \cite{shafiee2015optimal}, \cite{WANG20191}. 
Finally, Section  \ref{S:Conclusion} gives conclusions and future research. All  proofs are given in the Appendix (Section \ref{S:Appendix}).

\section{Overview of renewal theory.}\label{S:Pre} 

 \subsection{Overview of classical renewal theory.}\label{S:class} 

 A renewal process (RP) is a counting process where the inter-arrival intervals of the event in concern are independent and identically distributed  random variables with arbitrary distribution (see, for instance, \cite{rausand2003system}).

Let each inter-arrival interval be represented by the random variable $T_i, i \ge 1$. Let the underlying cumulative distribution function (CDF) of each $T_i$  be  $F_T(t)=P{(T_i\leq t)}$, and the corresponding probability density function (pdf) be   $f_T(t)$.

Let the time at the completion of  the $n$th renewal be  $S_n= T_1 +T_2 +...+ T_n = \sum_{i=1}^nT_i$. 
 The distribution function $L^{(n)}(t)$ of $S_n$ is given by the convolution  of the distribution function of $S_{n-1}$ and that of the $n$th lifetime $T_n$ respectively, i.e.,  
\begin{eqnarray}  \label{eqn:conofclass}
L^{(n)}(t)=\int_0^tL^{(n-1)}(t-x)f_T(x)dx.
\end{eqnarray}

In the classical renewal process, each renewal can be associated with a full replacement or perfect repair of the component. Let $N(t)$ be the counting process that gives the number of renewals in the time interval (0,$t$]:
$
N(t) = \max\{ n: S_n \leq t \}
$, and it also follows that $P{(N(t)\geq n)}=P{(S_n\leq t)}=L^{(n)}(t)$.

Let $M(t)$ be the mean number of renewals in the time interval (0,$t$]. Then the renewal function can be expressed as: 
\begin{eqnarray}  \label{eqn:class}
\!\!\!\!\!\!\!\!\!\!\!\!&&M(t) \!\!=\!\!\E{N(t)} \!\!=\!\!\sum_{n=1}^{\infty}\!\!P(N(t)\!\!\geq\!\! n) \!\!=\!\!\sum_{n=1}^{\infty}\!\!P(S_n\!\!\leq\!\! t)\!\!= \!\!\sum_{n=1}^{\infty}\!\!L^{(n)}(t) \nonumber\\
\!\!\!\!\!\!\!\!\!\!\!\!&&\iff M(t) = F_T(t) + \int_0^t M(t-x) dF_T(x).
\end{eqnarray}

The computation of Eq. \ref{eqn:class} is in general difficult since it is recursive. If the Laplace transform and its inverse can be calculated, one approach would be to use
\begin{dmath}\label{lapclass}
M^*(s)=\frac{f_T^*(s)}{s(1-f_T^*(s))}
.\end{dmath}
Otherwise, approximation methods like those in \cite{Bartholomew,deligonul_1985,tortorella2005numerical,ran2006some,garg1998approximations} can be used. 


\subsection{Overview of Generalized Renewal Theory.} 

The generalized renewal (g-renewal) theory of \cite{Kijima1986,kijima2002generalized} allows the lifetime distribution of  the $n+1$th interval to depend on the value of $S_n=\sum_{i=1}^n T_i$, in contrast with the classical renewal theory where each inter-occurrence interval is  independent of the others.
Let $P(T_i \le t) =F_T(t|y=S_{i-1})$ be the 
 conditional CDF of each $T_i$  with density $f_T (x|y) $, and let the corresponding  survival function be $\bar F_T(t|y)=1-F_T(t|y)$.

Let $N_g(t)$ be the g-renewal counting process. Let $P_n(y,t)$ be the probability that $N_g(t)$ increases to at least  $n$  in the interval $(y,t]$: 
 \begin{eqnarray} \label{eqn:pky}
 P_n(y,t)=\int_y^t P_{n-1}(z,t)f_T (z-y|y)dz.
 \end{eqnarray}
Then the following relationship holds: 
  \begin{eqnarray} \label{eqn:pky1}
  P_n(0 ,t)=P{(N_g(t)\geq n)}=P{(S_n\leq t)}.
 \end{eqnarray}
 
The g-renewal function $M_g (t)$ for counting the number of minimal repairs is given as 
\begin{eqnarray}\label{eqn:g-renew}
\!\!\!\!\!\!\!\!\!\!\!\!\!\!&&M_g (t) \!\!=\!\!\E{N_g(t)} \!\!=\!\!\sum_{n=1}^{\infty}\!\!P(N_g(t)\!\!\geq\!\! n) \!\!=\!\!\sum_{n=1}^{\infty}\!\!P(S_n\!\!\leq \!\!t)\!\!=\!\!\sum_{n=1}^{\infty}\!\!P_n(0,t) \nonumber\\
\!\!\!\!\!\!\!\!\!\!\!\!\!\!&&\iff M_g (t)= F_T(t|0) + \int_0^{t}  F_T(t-y| y) d M_g(y).
\end{eqnarray}

(We note that Eq. \ref{eqn:g-renew} generalizes the classical renewal function, since for the latter, $f_T(t| y)=f_T(t)$ and $F_T(t| y)=F_T(t)$.)

An approximation for the g-renewal function 
(Eq. \ref{eqn:g-renew}) has been given in \cite{KIJIMA1988194} utilizing the approximation of  \cite{deligonul_1985} for classical renewals.

\section{Bounded renewal process for perfectly repaired NC under limited lifetime of CR \\(Process A).}\label{S:ModelA} 
 
\subsection {Model assumptions.}

(a) The NC and CR start working ``as good as new.''

(b) When NC fails, it is  fully replaced (perfectly repaired). 

(c) When NC fails and is replaced, CR is simply left to operate  as is (with reevaluation of its remaining lifetime or not).

(d) When CR fails, the whole system operation stops (its mission comes to an end).

(e) All replacements/repairs take negligible time.

An example for this model is a battery operated sensor node that monitors some phenomenon, and which communicates with a battery-operated relay node. The sensor node is assumed to be placed at a location that is impractical or costly to access, whereas the relay node is easily accessible. If the battery of the sensor node dies, the system mission terminates. If the battery of the relay node dies while the sensor node is still in operation, it is just replaced immediately. The question then is how many battery replacements of the relay node we will have by a given time $t$. Another example is shown in Section \ref{S:casestudy1}.

An illustration of Process A is given in Figure~\ref{Fig.PA}. In this process, there are two scenarios: (i) Whenever NC fails, CR simply continues to work without any further action; and (ii) When NC fails, CR continues as usual but the distribution of its remaining lifetime is reevaluated based on the time it has spent in operation so far (``memory effect").  By time $t_1$, NC has failed and been fully replaced (or perfectly repaired) four times. At time $t_1$, CR fails and the system operation terminates. Consequently the number of renewals of NC will continue to be  4 for any time $t>t_1$

\begin{figure}[tb]
\centering  
\subfigure[]{
\label{Fig.PA.1}
\def\svgwidth{230pt}
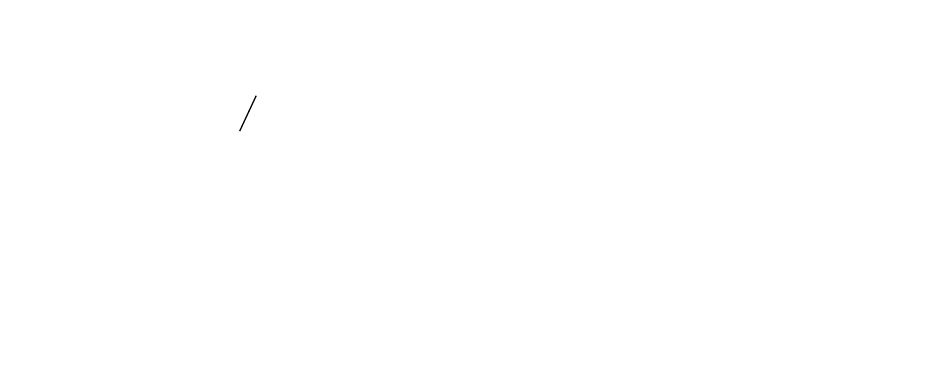
}
\subfigure[]{
\label{Fig.PA.2}
\def\svgwidth{230pt}
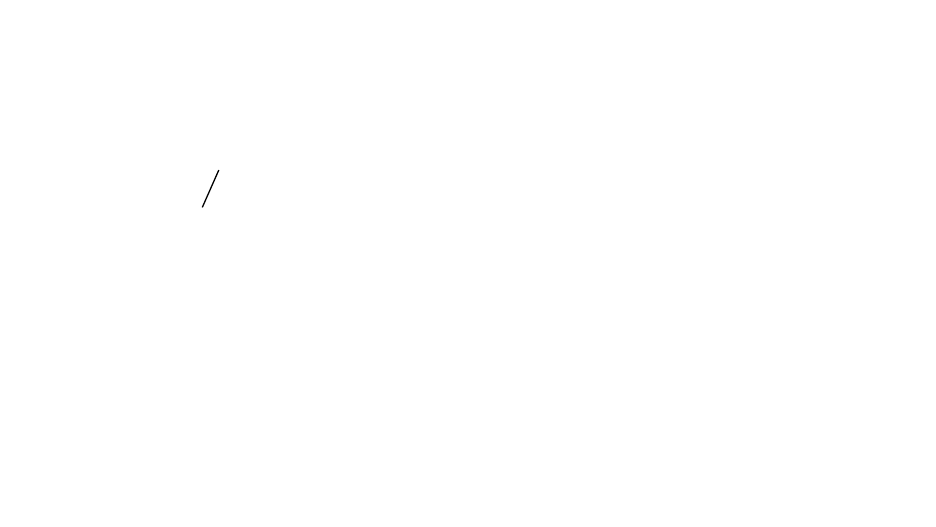}
\caption{Illustration of Process A: (a) Without reevaluation of CR's remaining lifetime after each failure of NC; (b) With reevaluation of CR's remaining lifetime after each failure of NC.}
\vspace{-4mm}\label{Fig.PA}
\end{figure}

Let $T_i$ denote the $i$-th lifetime of NC, and let  $F_{T}(t)$, $\bar F_{ T}(t)$, and $f_{ T}(t)$ denote the underlying CDF, survival function, and pdf of NC. 
Similarly, let $\tilde T$ denote the lifetime of CR, with  CDF $F_{\tilde T}(t)$,  survival function   $\bar F_{\tilde T}(t)= 1-P(\tilde T \hiderel \leq t)=1- F_{\tilde T}(t)$, and pdf  $f_{\tilde T}(t) = d(F_{\tilde T}(t))/dt$.

\begin{lemma}\footnote{All proofs of lemmas and theorems are given in the Appendix.} \label{lemma:A}
The probability  $P(  N_{A}(t) \geq n)$ of having at least $n$ renewals by time $t$ in Process A does not depend on whether the lifetime of CR is reevaluated or not at every time that NC fails.
\end{lemma}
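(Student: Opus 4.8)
The plan is to condition on the value of $\tilde T$ (the total operational lifetime of CR, which determines when the system mission terminates) and on the NC inter-arrival times $T_1,T_2,\ldots$, and then observe that the event $\{N_A(t)\geq n\}$ depends only on these quantities through the deterministic relation
\begin{eqnarray*}
\{N_A(t)\geq n\} = \{S_n \leq t\}\cap\{S_n \leq \tilde T\},
\end{eqnarray*}
where $S_n=\sum_{i=1}^n T_i$ as in Section \ref{S:Pre}. In other words, $n$ renewals of NC have occurred by time $t$ if and only if the $n$th NC failure happens before time $t$ \emph{and} before the CR failure. The crucial point is that the $T_i$ and $\tilde T$ are, in both the reevaluation and the no-reevaluation variants, the \emph{same} collection of independent random variables with the \emph{same} marginal/joint law: reevaluating CR's remaining lifetime when NC fails is merely a relabelling of the already-sampled value $\tilde T$ via the memoryless bookkeeping identity $P(\tilde T > y + s \mid \tilde T > y) = \bar F_{\tilde T}(y+s)/\bar F_{\tilde T}(y)$, and does not introduce any new randomness or alter the joint distribution of $(T_1,\ldots,T_n,\tilde T)$.

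The key steps, in order, would be: (i) Fix a realization and write the number of NC renewals by time $t$ explicitly as $N_A(t)=\max\{n : S_n\leq t,\ S_n\leq\tilde T\}$, justifying this from the model assumptions (a)--(e) — NC is perfectly repaired on each of its own failures, CR simply continues, and the process halts at $\tilde T$. (ii) For the no-reevaluation scenario, note that CR's failure time is literally the single random variable $\tilde T\sim F_{\tilde T}$, so $P(N_A(t)\geq n)=P(S_n\leq t,\ S_n\leq\tilde T)$ with $S_n$ independent of $\tilde T$. (iii) For the reevaluation scenario, unwind the recursion: after the $i$th NC failure at accumulated time $S_i$, the remaining CR lifetime is drawn from the conditional law of $\tilde T - S_i$ given $\tilde T > S_i$; a telescoping argument over $i=1,\ldots,n-1$ (using that the successive conditional survival factors multiply back to $\bar F_{\tilde T}(S_n)$) shows that the probability that CR survives past $S_n$ is again exactly $\bar F_{\tilde T}(S_n)$ — i.e.\ $P(\text{CR alive at }S_n \mid T_1,\ldots,T_n)=\bar F_{\tilde T}(S_n)$, identical to step (ii). (iv) Integrate over the law of $S_n$ (equivalently, over $L^{(n)}$ as in Eq.\ \ref{eqn:conofclass}) to conclude that $P(N_A(t)\geq n)$ coincides in the two scenarios.

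The main obstacle I anticipate is step (iii): one must be careful that ``reevaluation'' is interpreted as the physically correct conditional-survival update rather than as a fresh independent draw from $F_{\tilde T}$ (the latter would genuinely change the answer). Making this precise means formalizing the reevaluation scenario as a sequence of conditional distributions and verifying the telescoping product
\begin{eqnarray*}
\prod_{i=1}^{n-1}\frac{\bar F_{\tilde T}(S_{i+1})}{\bar F_{\tilde T}(S_i)}\cdot \bar F_{\tilde T}(S_1) \;=\; \bar F_{\tilde T}(S_n),
\end{eqnarray*}
together with an induction on $n$ to handle the nesting of the conditioning events cleanly; the base case $n=1$ is immediate since no reevaluation has yet occurred. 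Once the telescoping identity is in hand, the rest is a routine integration, and the equality of $M_A(t)=\sum_{n\geq 1}P(N_A(t)\geq n)$ in both scenarios follows at once. I would also remark that this argument does \emph{not} rely on any particular form of $F_T$ or $F_{\tilde T}$, so the lemma holds for arbitrary lifetime distributions, which is exactly the generality needed for the later sections.
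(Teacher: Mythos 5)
Your proposal is correct and follows essentially the same route as the paper's proof: the paper writes $P(N_A^{\rm wr}(t)\geq n)$ as a nested integral whose integrand contains the product of conditional survival factors $\bar F_{\tilde T}(x_i\mid x_1+\cdots+x_{i-1})=\bar F_{\tilde T}(x_1+\cdots+x_i)/\bar F_{\tilde T}(x_1+\cdots+x_{i-1})$, and the telescoping cancellation you identify in step (iii) is exactly how that integrand collapses to $\bar F_{\tilde T}(x_1+\cdots+x_n)$, matching the no-reevaluation expression. Your added caution that reevaluation must mean the conditional-survival update rather than a fresh independent draw is a worthwhile clarification, but the substance of the argument is identical.
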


\subsection {Derivation of renewal function for Process A.}

\begin{theorem}\label{theo:A}
For Process A, the mean number of renewals $W_A(t)$ is :
\begin{dmath}\label{eqn:theorem1}
 W_{A}(t)\!\!=M(t)  \bar F_{\tilde T}(t)+\int_0^t  M(x) f_{\tilde T}(x)dx,
\end{dmath}
where $M(t)=F_T(t) + \int_0^t M(t-x) dF_T(x)$ is the classical renewal function.
\end{theorem}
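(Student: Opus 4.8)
The plan is to express the Process-A counting process directly in terms of the classical NC renewal process $N(\cdot)$ and the CR lifetime $\tilde T$, and then take the expectation by conditioning on $\tilde T$. First I would invoke Lemma~\ref{lemma:A}, which permits working in the ``no reevaluation'' scenario; there $\tilde T$ is a single random variable, independent of the i.i.d.\ sequence $\{T_i\}$ that generates $N(\cdot)$ (and hence $\{S_n\}$). Because the system terminates at $\tilde T$ and NC is perfectly repaired until then, a renewal epoch $S_n$ is counted by time $t$ iff $S_n\le t$ \emph{and} $S_n\le\tilde T$; equivalently, $N_{A}(t)=N\bigl(\min(t,\tilde T)\bigr)$.

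Next I would compute $W_{A}(t)=\E{N_{A}(t)}=\E{N(\min(t,\tilde T))}$ by splitting on $\{\tilde T>t\}$ versus $\{\tilde T\le t\}$ and using the independence of $N(\cdot)$ from $\tilde T$. On $\{\tilde T>t\}$ one has $N_{A}(t)=N(t)$, contributing $\E{N(t)}\,\P{\tilde T>t}=M(t)\,\bar F_{\tilde T}(t)$. On $\{\tilde T\le t\}$, conditioning further on $\tilde T=x$ gives $\E{N(x)}=M(x)$, so the contribution is $\int_0^t M(x)\,f_{\tilde T}(x)\,dx$. Adding the two pieces yields Eq.~\eqref{eqn:theorem1}, with $M(\cdot)$ the classical renewal function of Eq.~\eqref{eqn:class} because NC is perfectly repaired.

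As a cross-check I would also present the series route: $W_{A}(t)=\sum_{n\ge1}\P{N_{A}(t)\ge n}=\sum_{n\ge1}\P{S_n\le t,\ S_n\le\tilde T}$, which by independence equals $\sum_{n\ge1}\int_0^t \bar F_{\tilde T}(x)\,dL^{(n)}(x)=\int_0^t \bar F_{\tilde T}(x)\,dM(x)$ (the interchange of sum and integral is legitimate since all terms are nonnegative, by Tonelli / monotone convergence, and $\sum_n L^{(n)}=M$). Integration by parts, together with $M(0)=0$ and $\bar F_{\tilde T}(0)=1$, then recovers $M(t)\bar F_{\tilde T}(t)+\int_0^t M(x)f_{\tilde T}(x)\,dx$.

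I expect no deep obstacle here; the work is bookkeeping. The two points requiring care are: (i) using Lemma~\ref{lemma:A} to justify that the Process-A counting distribution genuinely coincides with that of $N(\min(t,\tilde T))$ under true independence of $N(\cdot)$ and $\tilde T$ (so that the conditioning step is valid); and (ii) the summation/integration interchanges, which hold by nonnegativity. Boundary behaviour (the event $\{\tilde T=t\}$, which is null when $F_{\tilde T}$ is continuous, and the value at $x=0$) should be mentioned but causes no difficulty.
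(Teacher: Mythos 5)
Your argument is correct and is essentially the paper's own proof: the paper likewise invokes Lemma~\ref{lemma:A}, establishes $P(N_A(t)\ge n)=P(S_n\le\min(t,\tilde T))$, splits on $\{\tilde T\ge t\}$ versus $\{\tilde T<t\}$ using the independence of $S_n$ and $\tilde T$, and sums over $n$ to obtain $M(t)\bar F_{\tilde T}(t)+\int_0^t M(x)f_{\tilde T}(x)\,dx$. Your phrasing via $\E{N(\min(t,\tilde T))}$ and your series cross-check are just two equivalent renderings of that same decomposition, so nothing further is needed.
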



Note that if CR never fails, 
$ W_{A}(t)$ becomes identical to the classical renewal function $M(t)$ for component NC. 

\subsection {Approximation of bounded renewal function for Process A.}
Given that the  renewal function of Process A (Eq. \ref{eqn:theorem1}) involves the classical renewal function, we approximate it by using  Bartholomew's  approximation $\hat M(t)$ for the classical renewal function $M(t)$:

\begin{dmath}\label{eqn:appB}
M(t)\hiderel \approx \hat M(t)\hiderel= F_T(t)+\int_0^t\frac{F_T^2(x)}{\int_0^x \bar F_T(u)du}dx.
\end{dmath}
Thus, we approximate Eq. \ref{eqn:theorem1} by: 
\begin{dmath} \label{eqn:approxA1}
\hat W_{A}(t)\!\!=\hat M(t)  \bar F_{\tilde T}(t)+\int_0^t  \hat M(x) f_{\tilde T}(x)dx.
\end{dmath}

\subsection{Numerical Results.}

We used Monte Carlo simulations (MATLAB 2020a platform) to evaluate how well  the derived bounded renewal function for Process A (Eq. \ref{eqn:theorem1}) and its approximation (Eq. \ref{eqn:approxA1}) fit the data.
For the examples in
  Figure \ref{Fig.exampleA}(a) and \ref{Fig.exampleA}(b), the distributions of NC and CR allow for the  computation (via Laplace transforms) of the exact bounded renewal function curve, against which the simulation and approximation curves  can be compared. As can be seen, all are very close together. For the example in Figure \ref{Fig.A.5} and \ref{Fig.A.6}, the exact bounded renewal curve cannot be obtained, and consequently the approximation curve can only be compared to the simulation  curve, which it follows again very closely. 
 
\begin{figure}[tb]
\vspace{-4mm}
\centering  
\subfigure[
$T \!\sim\!$ Gamma($\kappa_1=2, \theta_1=10$); $\tilde T \sim$ Gamma($ \kappa_2=2, \theta_2=0.5$).]{
\label{Fig.A.3}
\includegraphics[width=0.22\textwidth]{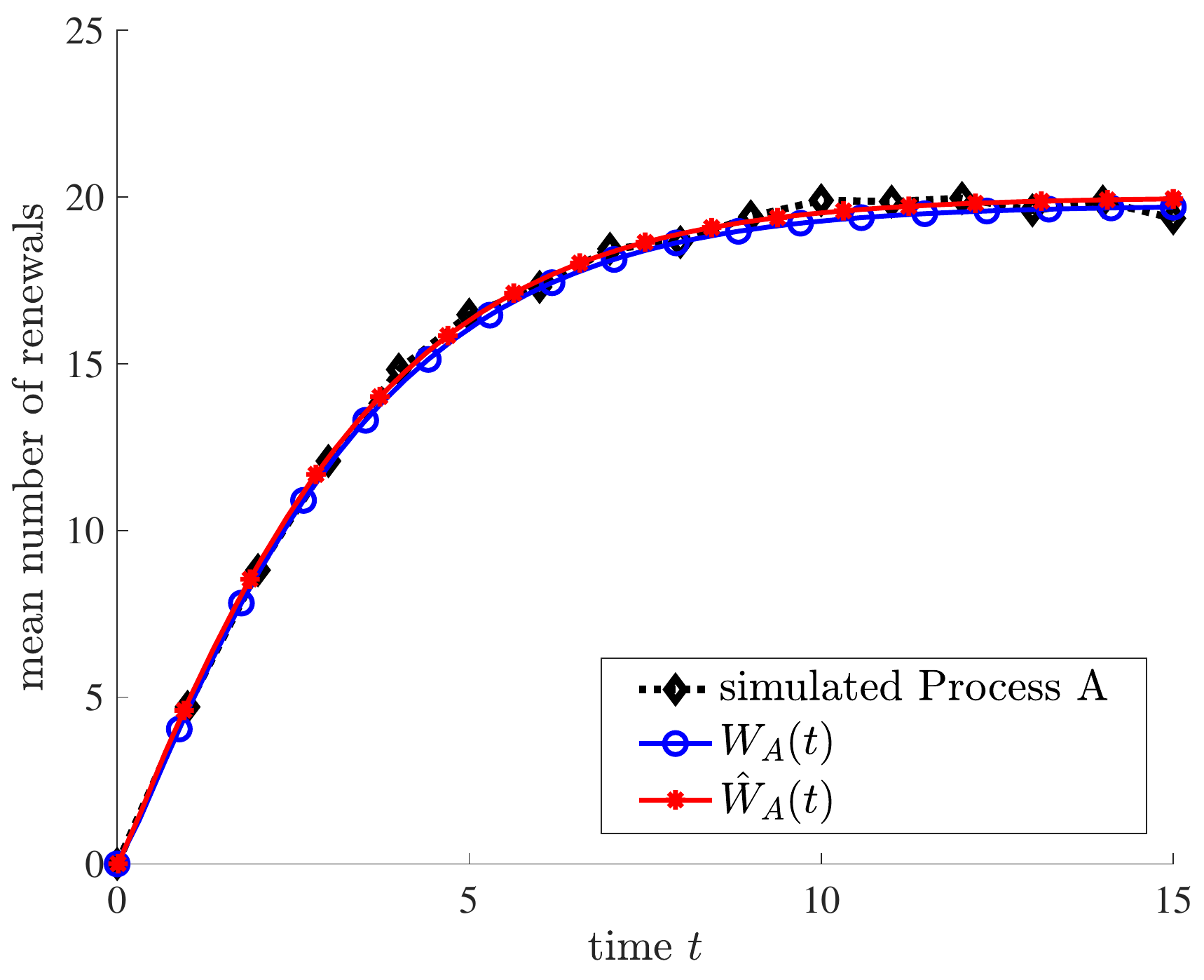}}
\subfigure[$T \sim$ Gamma($\kappa_1=2, \theta_1=1$); $\tilde T \sim$ Gamma($ \kappa_2=2, \theta_2=0.1$).]{
\label{Fig.A.4}
\includegraphics[width=0.23\textwidth]{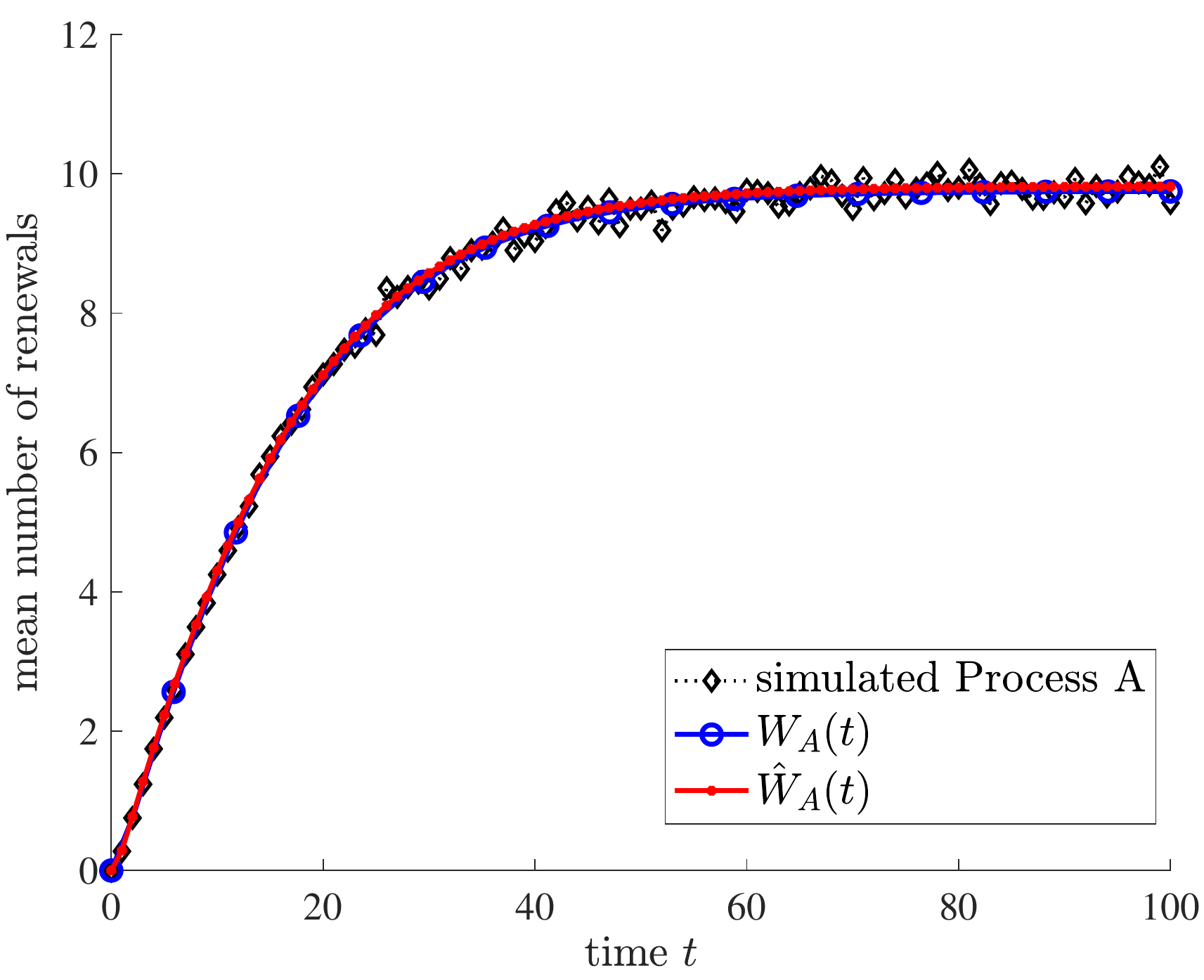}}
\subfigure[
$T \sim$ Rayleigh($\lambda=10$);\qquad \qquad $\tilde T \sim$ Gamma($\kappa_2=2, \theta=0.5$).]{
\label{Fig.A.5}
\includegraphics[width=0.23\textwidth]
{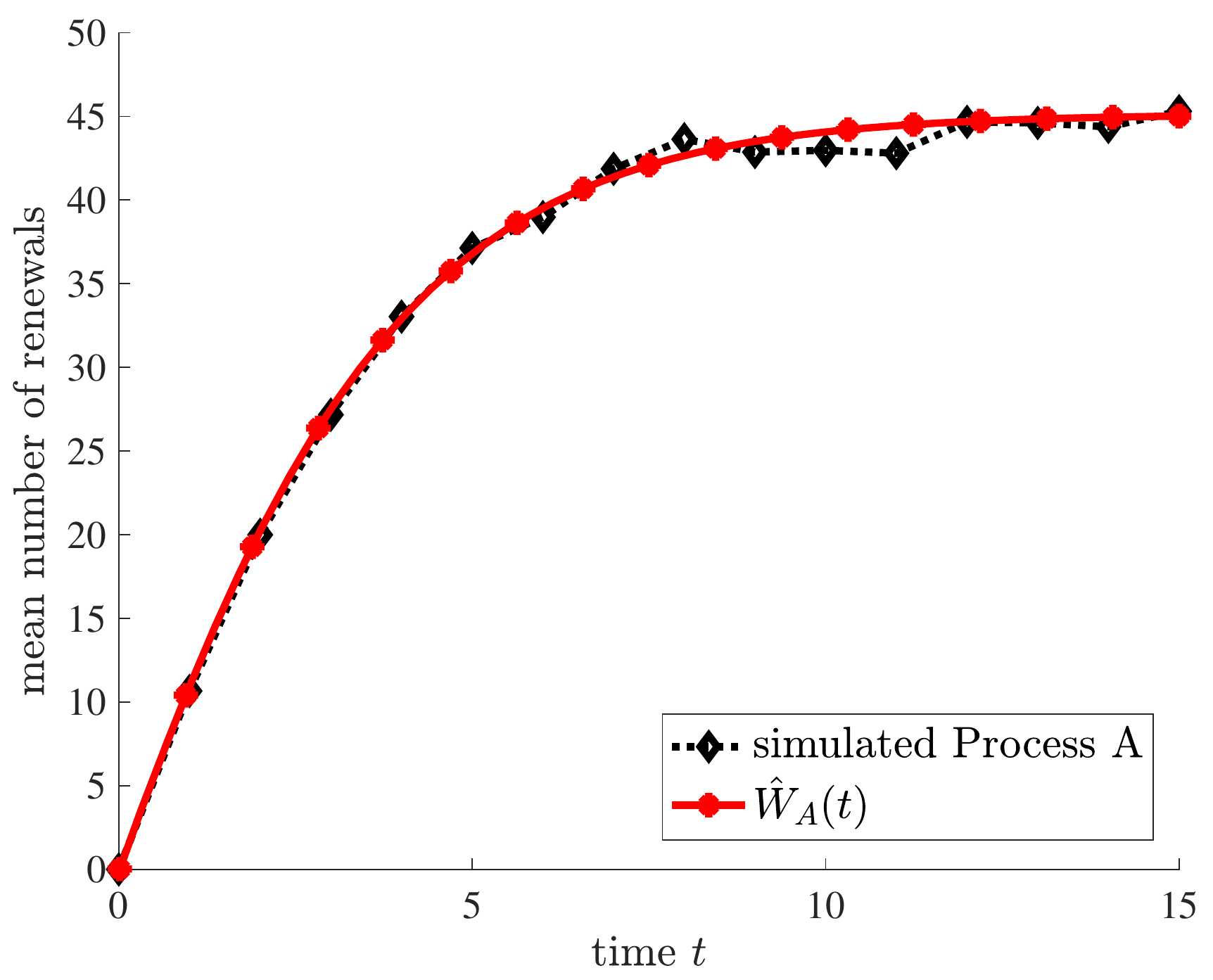}}
\subfigure[$T \sim$ Rayleigh($\lambda=1$);\qquad \qquad$\tilde T \sim$ Gamma($\kappa_2=2, \theta=0.1$).]{
\label{Fig.A.6}
\includegraphics[width=0.23\textwidth]
{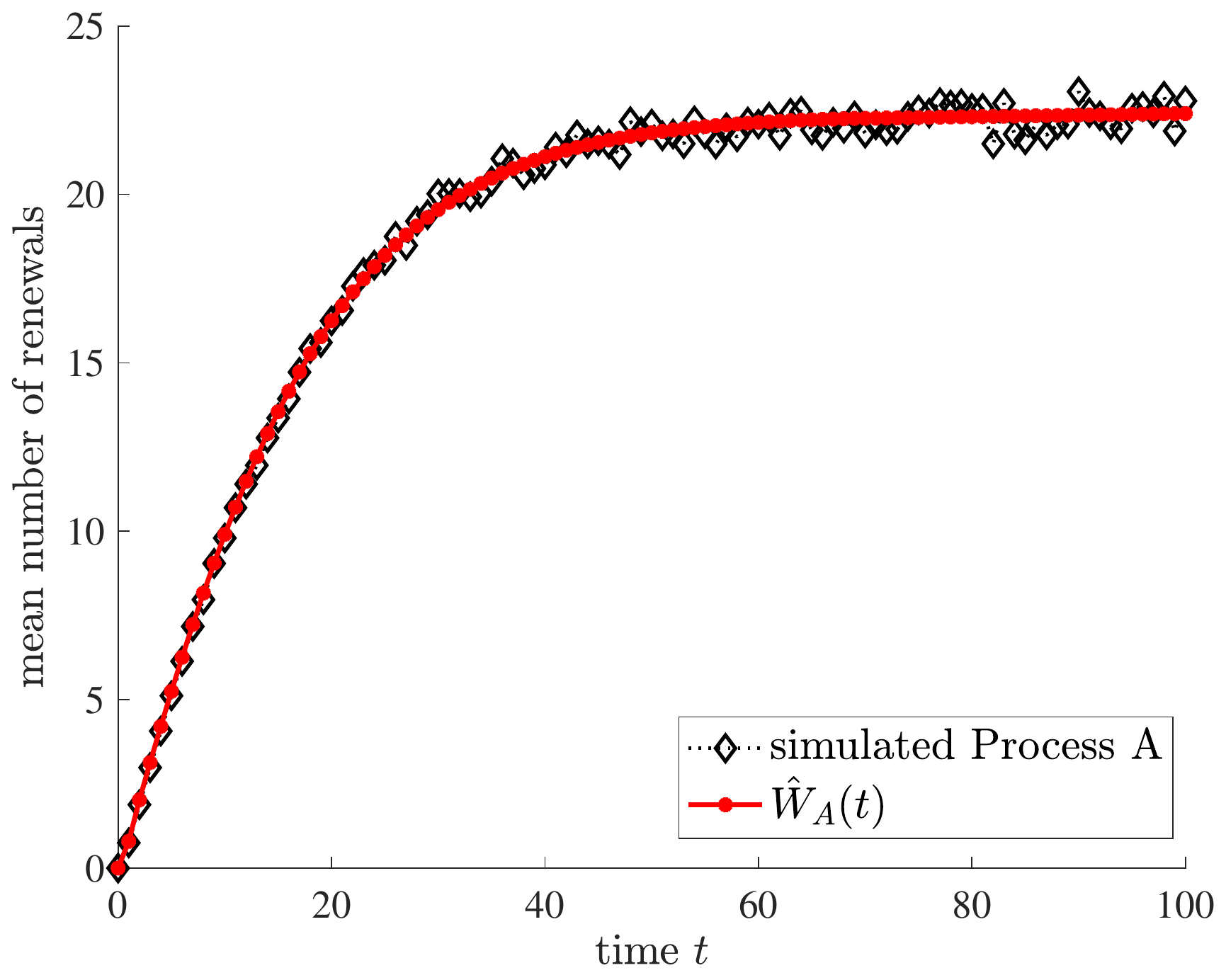}}
\caption{Example plots for Process A.}
\vspace{-6mm}\label{Fig.exampleA}
\end{figure}

\section{Bounded renewal process for  minimally repaired NC under limited lifetime of CR \\ (Process A-MR).}\label{S:ModelAMR} 
 
\subsection {Model assumptions.}

(a) The NC and CR start working ``as good as new.''

(b) When NC fails, it is  minimally repaired (``as bad as old''). 

(c) When NC fails and is minimally repaired, CR is simply left to operate  as is (with reevaluation of its remaining lifetime or not).

(d) When CR fails, the whole system operation stops (its mission comes to an end).

(e) All replacements/repairs take negligible time.


The difference between Process A-MR and Process A is that when a failure occurs in NC (denoted by  ``x" in Fig. \ref{Fig.PA}), NC is minimally repaired  whereas it is  fully replaced/perfectly repaired in Process A. In both processes, the treatment of CR is the same (i.e., it simply keeps working  whenever NC fails). Similarly to Lemma \ref{lemma:A}, it can be shown that:

\begin{lemma}\label{lemma:AMR}
The probability  $P(  N_{A_{MR}}(t) \geq n)$ of having at least $n$ renewals by time $t$ in Process A-MR does not depend on whether CR is reevaluated or not at every time that NC fails.
\end{lemma}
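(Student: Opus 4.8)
The plan is to mirror the argument used for Lemma~\ref{lemma:A}, which established the analogous invariance for Process~A, and adapt it to the minimal-repair setting. The key observation is that in Process~A-MR the treatment of CR is \emph{exactly the same} as in Process~A: CR runs on its own clock, and the only question is whether, each time NC is serviced, we relabel the remaining life of CR by its conditional distribution given survival so far, or leave the clock untouched. Since this relabeling is just a change of bookkeeping on the \emph{same} underlying random variable $\tilde T$, the actual failure time of CR, and hence the length of the time window during which NC renewals can accumulate, is identical under either convention. The event $\{N_{A_{MR}}(t)\geq n\}$ depends only on (i) the sequence of NC inter-failure/repair times and (ii) the CR failure time $\tilde T$; neither of these is altered by reevaluating CR.

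First I would fix a realization of the NC failure process and of $\tilde T$, and note that under the ``no reevaluation'' convention, the $n$th NC renewal occurs by time $t$ iff $S_n^{MR}\le \min(t,\tilde T)$, where $S_n^{MR}$ is the time of the $n$th minimal repair. Then I would argue that under the ``reevaluation'' convention the remaining-lifetime relabeling of CR at each NC event does not change $\tilde T$ itself: if $\tilde T$ has survived up to the $k$th NC event at time $s_k$, then conditioning on $\{\tilde T > s_k\}$ and resampling the residual $\tilde T - s_k$ from $\bar F_{\tilde T}(\cdot\,)/\bar F_{\tilde T}(s_k)$ is, by the tower property / memoryless-conditioning identity, distributionally the same as carrying the original $\tilde T$ forward. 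Consequently the pair (NC-event times, CR failure time) has the same joint law under both conventions, so $P(N_{A_{MR}}(t)\ge n)$ is unchanged. This is essentially the same coupling argument as in Lemma~\ref{lemma:A}; the minimal-repair nature of NC's maintenance only changes how $S_n^{MR}$ is generated (via the conditional intensity / g-renewal mechanism of Eq.~\ref{eqn:pky}), but $S_n^{MR}$ is generated \emph{independently of the CR convention}, so it plays no role in the invariance.

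The main (and only) subtlety I anticipate is making precise that the minimal-repair process for NC is genuinely independent of the CR reevaluation step — i.e.\ that NC's repairs depend only on NC's own accumulated age and not on CR's state. Under the model assumptions (b)--(c) this is true by construction: NC is minimally repaired whenever it fails, with no feedback from CR, and CR is passive (it is ``simply left to operate as is'') in Process~A-MR. So once I state this independence explicitly, the rest follows from the same conditioning identity as before. I would therefore present the proof as: (1) cite the structure of the Lemma~\ref{lemma:A} argument; (2) observe that the NC process is driven by assumptions (b)--(c) with no dependence on CR; (3) invoke the conditional-survival identity to show the joint distribution of (NC event times, $\tilde T$) is convention-independent; (4) conclude that $\{N_{A_{MR}}(t)\ge n\}$, being a measurable function of that pair, has convention-independent probability. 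No new computation is needed beyond what Lemma~\ref{lemma:A} already required.
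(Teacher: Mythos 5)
Your argument is correct and rests on exactly the same identity the paper uses: the telescoping of the conditional survival functions $\bar F_{\tilde T}(x\mid y)=\bar F_{\tilde T}(x+y)/\bar F_{\tilde T}(y)$, so that the product of step-by-step reevaluated survivals collapses to the single unconditional survival $\bar F_{\tilde T}(S_n)$, with the NC minimal-repair mechanism entering only through the (convention-independent) densities $f_T(\cdot\mid S_{i-1})$. The paper itself gives no separate proof of this lemma and simply defers to the Lemma~\ref{lemma:A} computation, which is precisely what you do, phrased as a distributional/coupling statement rather than as the explicit nested integral.
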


\subsection {Derivation of renewal function for Process A-MR.}

Let $S_i$, $i\ge 0$, be the time that NC has been in operation just before its $i$th minimal (and instantaneous) repair. 
The fact that  NC is minimally repaired each time means that its $n$th lifetime  $T_n$ depends on $S_{n-1}$.
As was done in \cite{KIJIMA1988194}, we assume this dependence is only on the value $y=S_{n-1}$ and independent of $n$. That is, the corresponding CDF   is $F_{ T}(x|y)=P( T_{n}\hiderel \leq x | S_{n-1}= y)=P( S_{n}\hiderel \leq x+y | S_{n}>S_{n-1}= y)=\frac{F_{ T}(x+y)-F_{ T}(y)}{\bar F_{ T}(y)}$.  
Likewise, 
the corresponding survival function and pdf are: 
$
\bar F_{ T}(x|y)=\frac{\bar F_{ T}(x+y)}{\bar F_{ T}(y)},
$
and 
$
f_{ T}(x|y)=\frac{f_{T}(x+y)}{1-F_{ T}(y)},
$
where $F_{ T}(x)$ (pdf $f_{ T}(x)$) is the initial lifetime  distribution of NC. The corresponding functions for CR remain the same as in Process A.


\begin{theorem}\label{theo:AMR}
The mean number of renewals $  W_{A_{MR}}(t)$ for Process A-MR is given by:
\begin{dmath} \label{eqn:AMR}
 W_{A_{MR}}(t)\!\!=M_g (t)  \bar F_{\tilde T}(t)+\int_0^t  M_g(x) f_{\tilde T}(x)dx
.\end{dmath}
where $M_g (t)=F_T(t|0) + \int_0^t M_g(y) f_T(t-y|y)dy$ is the g-renewal function that involves NC only.
\end{theorem}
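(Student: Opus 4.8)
The plan is to mirror the structure of the proof of Theorem~\ref{theo:A}, replacing the classical renewal function by the g-renewal function wherever the ``as-bad-as-old'' nature of the repairs enters. First I would set up the number-of-renewals variable $N_{A_{MR}}(t)$ and use Lemma~\ref{lemma:AMR} to fix, without loss of generality, the model in which CR is \emph{not} reevaluated when NC fails; this lets me treat $\tilde T$ as a single random variable independent of the NC repair history. The key observation is then the same conditioning idea as in Process A: given that CR fails at time $x$ (i.e., $\tilde T = x$), the system terminates at $x$, so the number of NC renewals accumulated is exactly the number of minimal repairs of NC in $(0,x]$, whose expectation is $M_g(x)$ by Eq.~\ref{eqn:g-renew}. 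Conditioning instead on $\tilde T > t$, the relevant count is the number of minimal repairs of NC in $(0,t]$, with expectation $M_g(t)$.

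Concretely, the main step is to write
\begin{dmath*}
W_{A_{MR}}(t) = \E{N_{A_{MR}}(t)} = \E{\E{N_{A_{MR}}(t) \mid \tilde T}},
\end{dmath*}
and split the outer expectation over the events $\{\tilde T \le t\}$ and $\{\tilde T > t\}$. On $\{\tilde T = x \le t\}$ the inner conditional expectation equals $M_g(x)$; on $\{\tilde T > t\}$ it equals $M_g(t)$. Summing the two contributions gives
\begin{dmath*}
W_{A_{MR}}(t) = \int_0^t M_g(x)\, f_{\tilde T}(x)\, dx + M_g(t)\,\bar F_{\tilde T}(t),
\end{dmath*}
which is Eq.~\ref{eqn:AMR}. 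The substitution of $M_g$ for $M$ is legitimate because the NC minimal-repair process, \emph{in isolation from CR}, is precisely the g-renewal process with conditional CDF $F_T(x\mid y) = (F_T(x+y)-F_T(y))/\bar F_T(y)$ described just before the theorem statement, so $\E{N_{A_{MR}}(t)\mid \tilde T = x} = M_g(\min(x,t))$ holds by definition of $M_g$.

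The step requiring the most care is the justification that conditioning on $\tilde T$ genuinely decouples the count of NC renewals from the CR lifetime — i.e., that the NC minimal-repair sequence $\{S_i\}$ and the CR lifetime $\tilde T$ are independent, so that the inner conditional expectation really is the unrestricted g-renewal function evaluated at the truncation time. This is where Lemma~\ref{lemma:AMR} does the essential work: it guarantees that the ``reevaluation or not'' modeling choice for CR is immaterial, and in the no-reevaluation version $\tilde T$ is drawn once at time $0$ independently of everything happening on the NC timeline. I would also need to note the standard measure-theoretic point that the interchange of expectation and the infinite sum $\sum_n P(N_{A_{MR}}(t)\ge n)$ is valid by monotone convergence (nonnegative terms), exactly as in Eq.~\ref{eqn:g-renew}, so that the identity $\E{N_{A_{MR}}(t)\mid \tilde T = x} = M_g(x)$ for $x\le t$ is on firm footing. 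Beyond that, the computation is routine and parallels Theorem~\ref{theo:A} verbatim.
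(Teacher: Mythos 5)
Your argument is correct, but it is not the route the paper takes for this theorem. The paper's proof of Theorem~\ref{theo:AMR} is computational: it writes $P(N_{A_{MR}}(t)\ge n)$ as an $n$-fold nested integral in the conditional densities $f_T(\cdot\,|\,y)$, performs a change of variables to the partial sums $y_i$, integrates the innermost integral by parts to split it into a boundary term $\bar F_{\tilde T}(t)F_T(t-y_{n-1}|y_{n-1})$ and a term $\int f_{\tilde T}(y_n)F_T(y_n-y_{n-1}|y_{n-1})\,dy_n$, and then recognizes these (after an interchange of integration order) as $\bar F_{\tilde T}(t)P_n(0,t)$ and $\int_0^t f_{\tilde T}(y)P_n(0,y)\,dy$ via the recursion in Eq.~\ref{eqn:pky}; summing over $n$ gives Eq.~\ref{eqn:AMR}. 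You instead lift the conditioning argument that the paper uses for Theorem~\ref{theo:A}: after invoking Lemma~\ref{lemma:AMR} to fix the no-reevaluation model, you observe that $N_{A_{MR}}(t)=N_g(\min(t,\tilde T))$ pathwise (since $\bigcap_{i\le n}\{T_i\le t-S_{i-1}\}\cap\{\tilde T\ge S_n\}=\{S_n\le\min(t,\tilde T)\}$), and then take $\mathbb{E}[\,\cdot\,|\,\tilde T]$ using the independence of $\tilde T$ from the NC minimal-repair sequence. Both arguments are valid and hinge on the same two facts — independence of $\tilde T$ from $\{S_n\}$ and the identification of the truncated count with the unrestricted g-renewal process — but your version establishes them at the level of the counting process in one step, whereas the paper verifies them term by term at the level of each $P(N_{A_{MR}}(t)\ge n)$. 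Your route is shorter and makes the structural parallel with Theorem~\ref{theo:A} explicit; the paper's integral manipulation has the advantage of not requiring the reader to accept the pathwise identity $N_{A_{MR}}(t)=N_g(\min(t,\tilde T))$ as a primitive, since everything is carried out directly on the defining multiple integrals. No gap in your argument, provided you state (as you do) that the conditional identity $\mathbb{E}[N_{A_{MR}}(t)\,|\,\tilde T=x]=M_g(\min(x,t))$ rests on the independence guaranteed by the no-reevaluation formulation.
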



Note that if the failure rate (or hazard rate) for CR is zero, the $ W_{A_{MR}}(t)$ equals the g-renewal function $M_g (t)$ involving only NC. Note also that $ W_{A_{MR}}(t)$ (in Eq. \ref{eqn:AMR}) generalizes the $ W_A(t)$ (in Eq. \ref{eqn:theorem1}), as $M_g(t)$ generalizes  classical renewal function.

\subsection {Approximation of bounded renewal function for Process A-MR.}

An approximation for the g-renewal function has been given in \cite{KIJIMA1988194}, but this is still relatively expensive to compute. In our context, we use an easier to compute approximation denoted by  $\hat M_g (t)$:
 \begin{eqnarray}\label{eqn:approxg}
M_g (t) \!\!\approx \!\!\hat M_g (t)\!\! = \!\!F_T(t|0) \!\!+\!\!\int_0^t \!\!\frac{F_T(u|0) \int_0^{u}  f_T(u\!-\!y| y)dy}{\int_0^{u} \bar F_T(u\!-\!y| y)dy}du.
\end{eqnarray}

The unknown functions are no longer included in the renewal function. By substituting $\hat M_g (t)$ for $M_g (t)$, we get an approximation formula $ \hat W_{A_{MR}}(t)$ for the renewal function of Process A-MR (Eq. \ref{eqn:AMR}) as follows:
\begin{dmath} \label{eqn:approxAMR}
 W_{A_{MR}}(t)\approx \hat W_{A_{MR}}(t)\!\!\hiderel=\hat M_g (t)  \bar F_{\tilde T}(t)+\int_0^t  \hat M_g(x) f_{\tilde T}(x)dx
.\end{dmath}

\subsection{Numerical results.}
Examples are shown in Figure \ref{Fig.ExampleAMR}. No exact renewal function can be determined for all combinations of lifetimes in these situations. The proposed approximations, as can be seen, closely match the simulated outcomes.

\begin{figure}[tb]
\centering  
\subfigure[ $T \sim$ Gamma($ \kappa_1=2, \theta_1=10$); $\tilde T \sim$ Gamma($\kappa_2=2, \theta_2=0.5$).]{
\label{Fig.ExampleAMR.3} 
\includegraphics[width=0.23\textwidth]{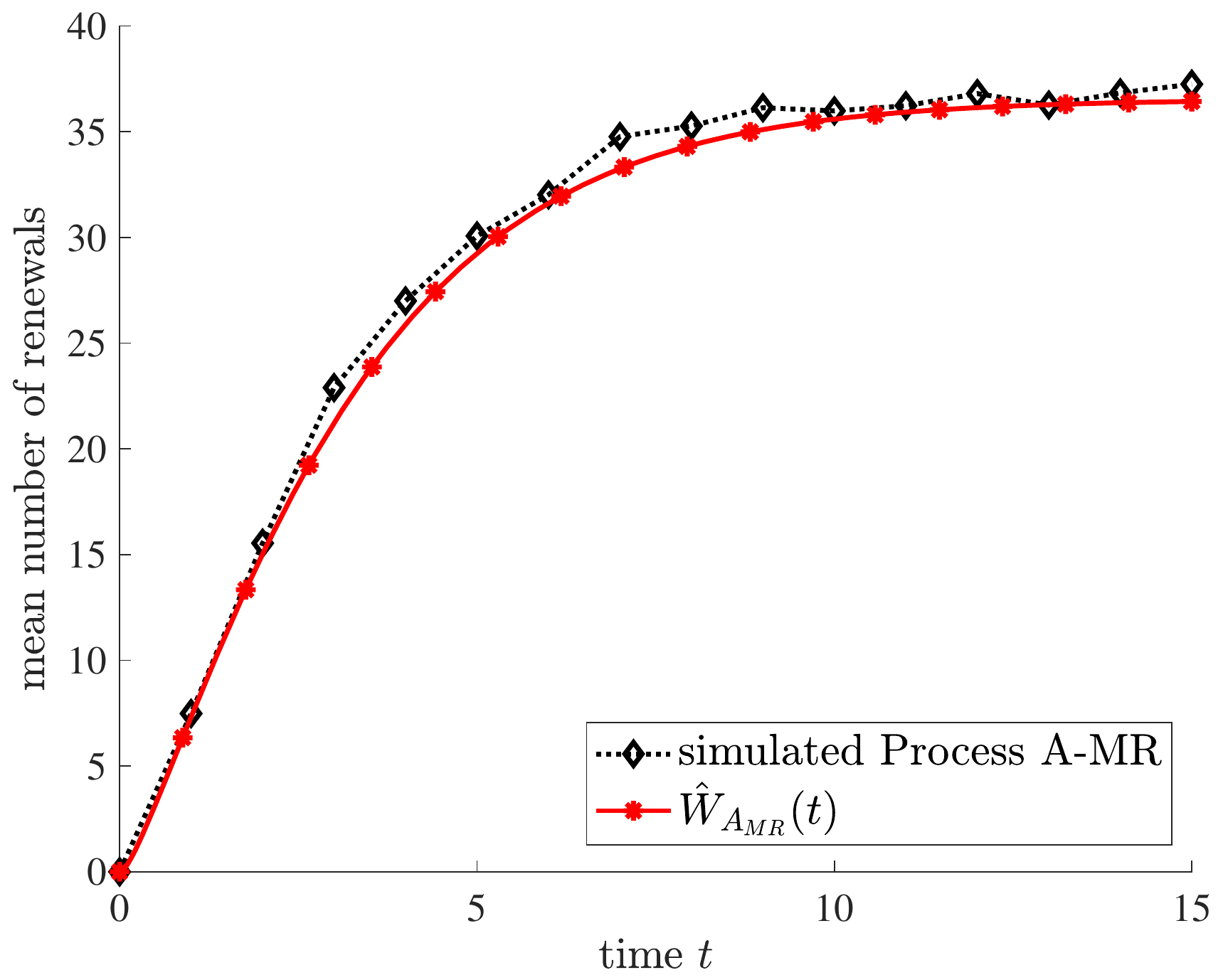}}
\subfigure[$T \sim$ Gamma($ \kappa_1=2, \theta_1=1$); $\tilde T \sim$ Gamma($ \kappa_2=2, \theta_2=0.1$).]{
\label{Fig.ExampleAMR.4}
\includegraphics[width=0.23\textwidth]{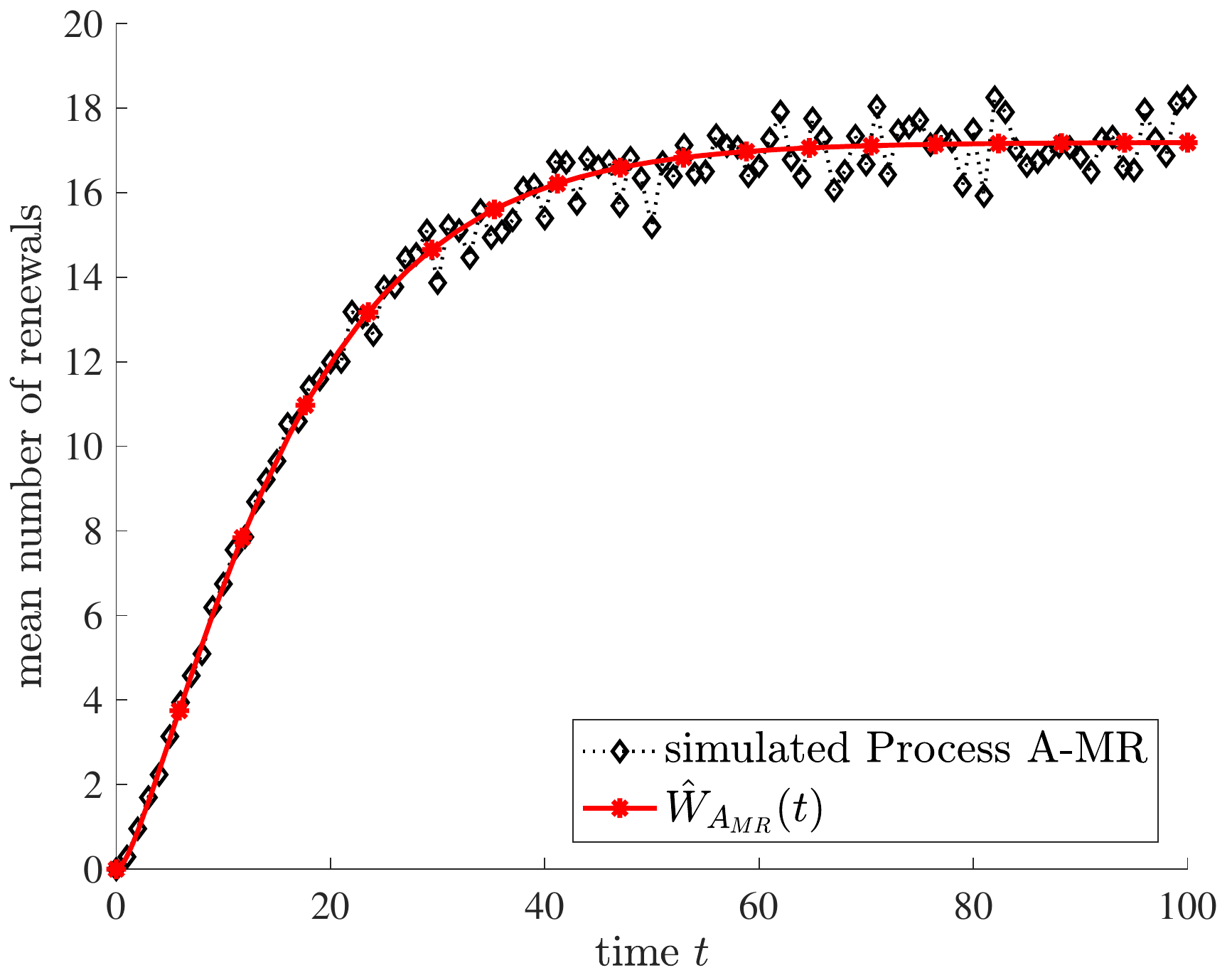}}
\subfigure[$T \sim$ Rayleigh($\lambda=0.1$);\qquad \qquad $\tilde T \sim$ Gamma($ \kappa=2, \theta=0.1$).]{
\label{Fig.ExampleAMR.5}
\includegraphics[width=0.23\textwidth]{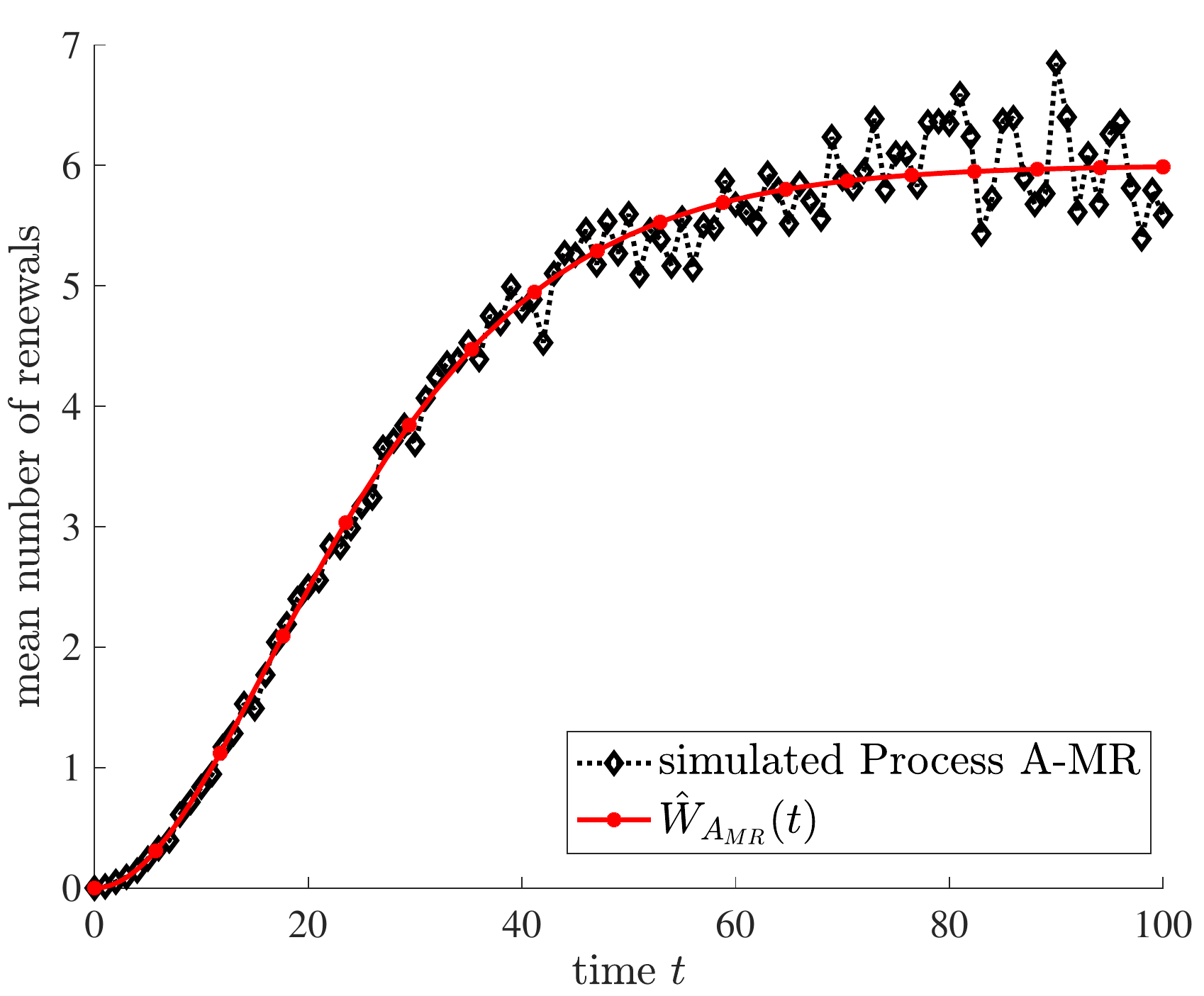}}
\subfigure[$T \sim$ Rayleigh($\lambda=0.1$);\qquad \qquad $\tilde T \sim$ Gamma($ \kappa=2, \theta=0.3$).]{
\label{Fig.ExampleAMR.6}
\includegraphics[width=0.23\textwidth]{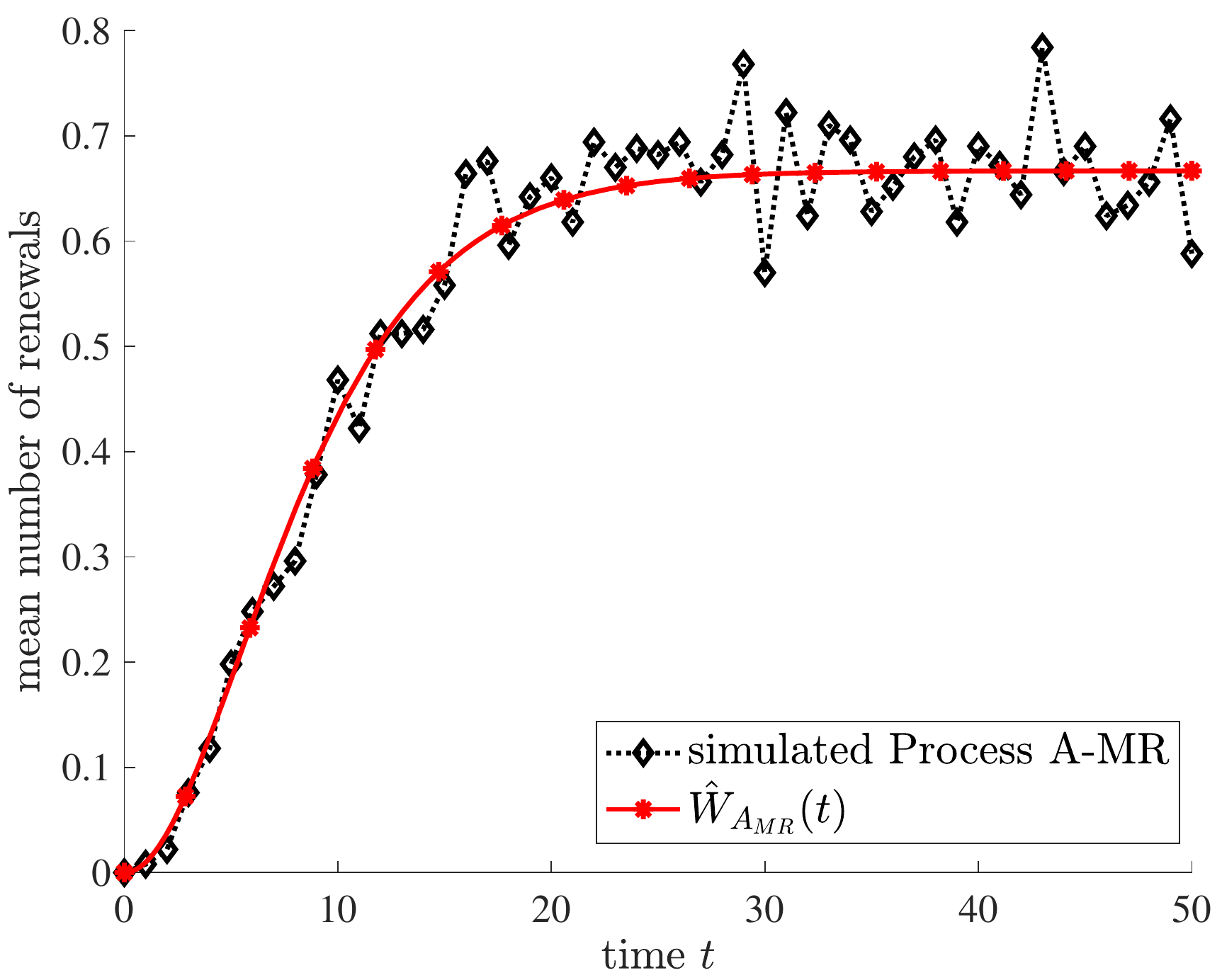}}
\caption{Example plots for Process A-MR}
\label{Fig.ExampleAMR}
\end{figure}

\section{Bounded renewal process for perfectly repaired NC with preventively replaced CR \\(Process B).}\label{S:ModelB} 
\subsection {Model assumptions.}

(a) The NC and CR start working ``as good as new.''

(b) When NC fails, it is  fully replaced (perfectly repaired). 

(c) When NC fails and is fully replaced, CR is also fully replaced in a preventive fashion.

(d) When CR fails, the whole system operation stops (its mission comes to an end).

(e) All replacements/repairs take negligible time.

An example for this model is a battery operated sensor node that monitors some phenomenon, and which communicates with a battery-operated relay node. Both the sensor node and the relay node are easily accessible but the sensor node is assumed to be critical for the system mission and when its battery dies, the system mission terminates. If the battery of the relay node dies while the sensor node is still in operation, the batteries of both the relay and the sensor node are replaced (correctively for the relay node, and preventively for the sensor node). 
Another example is shown in Section \ref{S:casestudy2}.

\subsection {Derivation of renewal function for Process B.}
An illustration of Process B is given in Fig.  \ref{fig:system_model2}, 
Every time  NC experiences a failure (marked by ``x" in Fig. \ref{fig:system_model2}), NC is fully repaired/replaced but in contrast with Process A, CR is preventively maintained (fully replaced) at the same time (marked by a circle in Fig. \ref{fig:system_model2}). 
Four renewals have occurred up to time $t_1$ in Fig. \ref{fig:system_model2}, because, in this case, CR stops working before NC dies. For any other time after $t_1$, the number of renewals of NC  remains at 4.

\begin{figure}[!t]\centering \vspace{-4mm}
  	\def\svgwidth{230pt} 
  	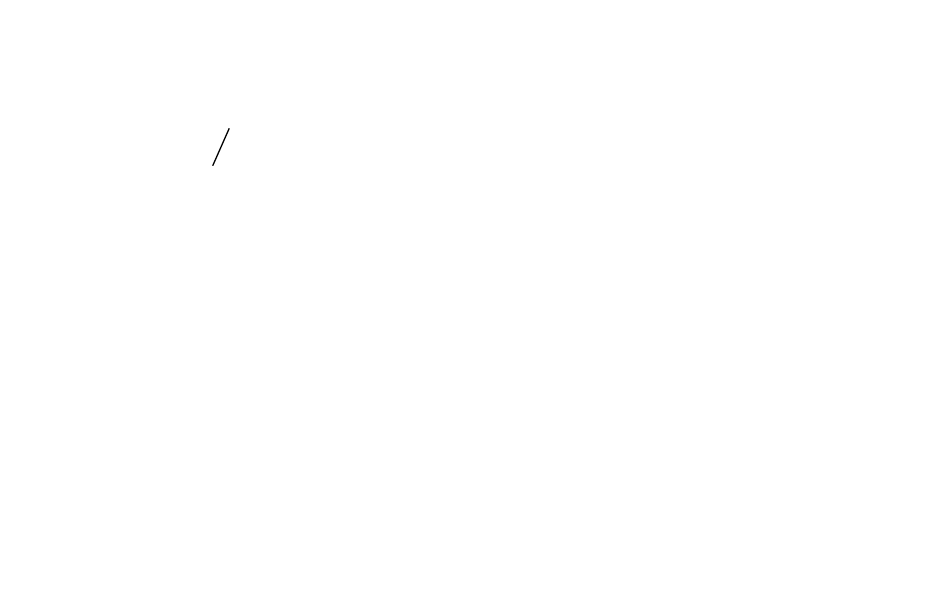 \vspace{-7mm}
  	\caption{{{ Illustration of Process B}}}\vspace{-4mm} \label{fig:system_model2}
  \end{figure}

Let the $i$-th lifetime of CR, be noted by $\tilde T_i$ with CDF, survival function, and pdf denoted by $F_{\tilde T}(t) = P(\tilde T_i \hiderel \leq t)$, $\bar F_{\tilde T}(t)= 1-P(\tilde T_i \hiderel \leq t)=1- F_{\tilde T}(t)$, and $f_{\tilde T}(t) = d(F_{\tilde T}(t))/dt$. The corresponding functions for  NC are denoted by $F_{T}(t)$, $\bar F_{ T}(t)$, and $f_{ T}(t)$.

\begin{theorem}\label{theo:B}
For Process B, the mean number of renewals $W_B(t)$ is:
\begin{dmath}\label{eqn:WAtGolden}
 W_{B}(t)=  F_{B}(t)+\int_0^t  W_{B}(t-x) f_{B}(x)dx
,\end{dmath}
where $ F_{B}(t)\!\!=\!\! \int_0^tf_T(x)\bar F_{ \tilde T}(x)dx$, and $ f_{B}(x)\!\!=\!\! f_T(x)\bar F_{ \tilde T}(x)$.
\end{theorem}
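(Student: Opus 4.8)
\textbf{Proof proposal for Theorem \ref{theo:B}.}

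The plan is to analyze Process B by conditioning on the outcome of the very first renewal attempt of NC, exploiting the crucial fact that in Process B \emph{both} NC and CR are replaced ``as good as new'' at every NC failure. This regeneration property means that after a successful first renewal at some time $x$, the process restarts from scratch, so $W_B$ should satisfy a classical renewal-type equation. Concretely, the first NC lifetime is $T_1$ with density $f_T$, and the first CR lifetime is $\tilde T_1$ with survival function $\bar F_{\tilde T}$; a first renewal actually occurs (i.e.\ is counted) only if NC fails before CR, that is, if $T_1 < \tilde T_1$. The sub-probability density of ``first renewal occurs at time $x$'' is therefore $f_B(x) = f_T(x)\bar F_{\tilde T}(x)$, and the probability that a first renewal occurs at all by time $t$ is $F_B(t) = \int_0^t f_T(x)\bar F_{\tilde T}(x)\,dx$. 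If no first renewal occurs by time $t$ (either CR failed first, terminating the system, or NC has not yet failed), the count is $0$.

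First I would make the regeneration argument rigorous by writing $W_B(t) = \E{N_B(t)} = \sum_{n\ge 1} P(N_B(t)\ge n)$ and splitting on the time $x\le t$ of the first counted renewal: $P(N_B(t)\ge n) = \int_0^t P(N_B(t)\ge n \mid \text{first renewal at } x)\, f_B(x)\, dx$. By assumptions (b) and (c), at time $x$ both components are renewed to ``as good as new,'' and by (d)–(e) the future evolution is a fresh, independent copy of Process B started at time $0$ but shifted to start at $x$. Hence $P(N_B(t)\ge n \mid \text{first renewal at }x) = P(N_B(t-x)\ge n-1)$ for $n\ge 2$, while for $n=1$ it is simply $1$ whenever $x\le t$. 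Summing over $n$ then yields
\begin{dmath*}
W_B(t) = \int_0^t f_B(x)\,dx + \int_0^t \Big(\sum_{n\ge 2} P(N_B(t-x)\ge n-1)\Big) f_B(x)\,dx = F_B(t) + \int_0^t W_B(t-x) f_B(x)\,dx,
\end{dmath*}
which is exactly Eq.\ \ref{eqn:WAtGolden}. I would also note $\int_0^t f_B(x)\,dx = F_B(t)$, consistent with the stated definitions.

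The main obstacle — and the place where care is genuinely needed — is justifying the regeneration/independence step: one must confirm that conditioning on ``the first counted renewal occurs at time $x$'' (the event $\{T_1 = x,\ \tilde T_1 > x\}$) leaves the post-$x$ process distributed exactly as a fresh Process B, with no residual ``memory'' in CR. This is where Process B differs fundamentally from Process A: because CR is \emph{preventively replaced} (not merely carried over), there is no reevaluation subtlety of the kind handled by Lemma \ref{lemma:A}, and the renewed CR lifetime $\tilde T_2$ is independent of everything before $x$. I would therefore emphasize that the only conditioning is $\tilde T_1 > x$ on the \emph{discarded} CR unit, which does not affect the law of the freshly installed $(T_2,\tilde T_2)$ pair. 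A secondary, more routine point is interchanging the infinite sum over $n$ with the integral over $x$, which is justified by monotone convergence since all terms $P(N_B(\cdot)\ge n)$ are nonnegative. Once regeneration is established, the renewal equation and the identification of $F_B$ and $f_B$ follow immediately.
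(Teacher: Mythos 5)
Your proposal is correct, and it takes a genuinely different route from the paper. The paper proves Theorem \ref{theo:B} by writing out $P(N_B(t)\ge n)$ as an explicit $n$-fold iterated integral of $f_T(x_i)\bar F_{\tilde T}(x_i)$ factors, recognizing it as the $(n-1)$-fold convolution of $F_B$ with $f_B$, passing to Laplace transforms to sum the resulting geometric series ($W_B^*(s)=f_B^*(s)/(s(1-f_B^*(s)))$), and then inverting to obtain the renewal equation. You instead give a first-step (regeneration) argument: condition on the time $x$ of the first counted renewal, observe that because \emph{both} NC and CR are replaced as good as new the post-$x$ process is a fresh, independent copy of Process B, and sum $P(N_B(t)\ge n)$ over $n$ using monotone convergence. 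Your identification of $f_B(x)=f_T(x)\bar F_{\tilde T}(x)$ as the sub-probability density of a counted first renewal, and your observation that the preventive replacement of CR is precisely what removes the memory issue that Lemma \ref{lemma:A} has to handle for Process A, are both exactly right. Your route is more elementary and arguably more transparent about \emph{why} the equation has classical renewal form; the paper's route has the side benefit of producing the Laplace-transform expression for $W_B^*(s)$ explicitly, which the paper then reuses in the proof of Lemma \ref{cor:greater} (the asymptotic value $p_B/(1-p_B)$ via the final value theorem), so if you adopted your proof you would want to derive that transform separately.
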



\subsection {Approximation for renewal function of Process B.}

The approximation for the average number of renewals in Process B (Eq. \ref{eqn:WAtGolden}) is more complicated than that for Process A and Process A-MR, since Eq. \ref{eqn:WAtGolden} is recursive (in terms of $ W_{B}(t)$). Moreover, approximations that have been proposed for the classical renewal function (see section \ref{S:class}) are not applicable in this ``bounded" renewal case, as the curve of classical renewals keeps increasing forever, whereas the curve of bounded renewals eventually flattens. 
 In the following, we show how to approximate Eq. \ref{eqn:WAtGolden} of Theorem \ref{theo:B}. We present first a modification to  the approximate solution of  \cite{Bartholomew} for  the classical renewal function (Eq.  \ref{eqn:class}) that involves NC alone.Then, in order to achieve a better degree of accuracy, we propose a more precise approximation for process B.

Let $p_B$ denote the probability that the lifetime of  NC $T$ is less than that of CR $\tilde T$ , i.e.,
\begin{eqnarray}\label{eqn:PB}
 p_B\!\!\!\!\!\!\!\!\!\!&&=P( \tilde T > T)
=\int_0^\infty P( \tilde T  > x) f_{ T}(x) dx \nonumber\\
&& = \int_0^\infty  \bar F_{\tilde T}(x)f_{ T}(x)dx
.\end{eqnarray}

Let $\hat G_{B}(t)=\int_t^\infty f_T(x)\bar F_{ \tilde T}(x)dx$. Since 
\begin{dmath}
\!\!\int_t^\infty \!\!\!\!\!\!f_T(x)\bar F_{ \tilde T}(x)dx\!+\!\int_0^t\!\!\!f_T(x)\bar F_{ \tilde T}(x)dx\!\hiderel=\!\!\int_0^\infty\!\!\!\!\!\! f_T(x)\bar F_{ \tilde T}(x)dx,
\end{dmath}
we have that $\hat G_{B}(t)+\hat F_{B}(t)=p_{B}$.

Then Eq. \ref{eqn:WAtGolden} can  be rewritten as:
\begin{dmath}
 W_{B}(t)
= F_{B}(t)\hiderel+\!\!\int_0^t \!\! w_{B}(t\!\!\hiderel-\!\!x)(p_{B}\!\!\hiderel- G_{B}(x))dx
=p_{B} W_{B}(t)\!\!\hiderel+ F_{B}(t)\hiderel-\int_0^t\!\! G_{B}(x)  w_{B}(t\!\!\hiderel-\!\!x) dx.
\end{dmath}

Since moreover
\begin{dmath}
\frac{F_{B}(t)+(p_{B}-1) W_{B}(t)}{\int_0^t   w_{B}(t-x) G_{B}(x)dx}=1,
\end{dmath}
we have that 
\begin{dmath}
 W_{B}(t)= F_{B}(t)+R\cdot\Big((p_{B}-1) W_{B}(t)+ F_{B}(t)\Big)
,\end{dmath}
where
\begin{dmath}
R=\frac{\int_0^t  w_{B}(t-x) F_{B}(x)dx}{\int_0^t  w_{B}(t-x) G_{B}(x)dx}
.\end{dmath}

Our approximation is obtained  by substituting  $R$ with  $R'$
given by:
\begin{dmath}
R'=\frac{\int_0^t F_{B}(x)dx}{\int_0^t  G_{B}(x)dx}
.\end{dmath}

 As a result, the function $\hat W_{B_0}(t)$ presented below can be used to approximate the renewal function Eq.\ref{eqn:WAtGolden}:
\begin{dmath} \label{eqn:BarWA}
 \hat W_{B_0}(t)=\frac{ F_{B}(t)+R' F_{B}(t)}{1+(1-p_{B})R'}
.\end{dmath}

As shown by simulation results (subsection \ref{sec:exampleB}), the approximation $ \hat W_{B_0}(t)$  is not very precise. A far better approximation is obtained by the theorem below (this theorem has a long proof given in the Appendix, and it is based on an intermediate approximation $\hat W_{B_1}(t) =\frac{p_B}{1-p_B}(1-\E{p_B^{N(t)}})$ obtained in Lemma 3 in the Appendix.)




  \begin{theorem}\label{theo:Bapp}
The renewal function $W_B(t)$ for Process B can be approximated by:
\begin{dmath}\label{eqn:TB3}
  \hat W_{B_2}(t) =\frac{p_B}{1-p_B}\Big(1-(p_B)^{M(t)}\Big)
,\end{dmath}
where $M(t)=\E{N(t)} $ denotes  the classical renewal function (Eq. \ref{eqn:class}) involving only component NC.
\end{theorem}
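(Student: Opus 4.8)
The plan is to reach $\hat W_{B_2}(t)$ through two successive approximations: first pass to the intermediate form $\hat W_{B_1}(t)=\frac{p_B}{1-p_B}\bigl(1-\E{p_B^{N(t)}}\bigr)$ (the content of Lemma~3 in the Appendix), and then replace the random renewal count $N(t)$ of the unbounded NC process by its mean $M(t)=\E{N(t)}$.

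For the first step, note that in Process~B each cycle $i$ pairs a fresh NC lifetime $T_i$ with a fresh CR lifetime $\tilde T_i$: the $i$th NC renewal occurs only when $\tilde T_i>T_i$, an event of probability $p_B$ by Eq.~\ref{eqn:PB}, and the system is terminated by CR at the first cycle with $\tilde T_i\le T_i$. I would decouple this termination mechanism from the renewal epochs --- i.e.\ treat the NC renewals as occurring at the epochs $S_n$ of the \emph{classical} (unbounded) NC renewal process and the ``CR outlives NC in cycle $i$'' indicators as i.i.d.\ Bernoulli$(p_B)$ --- so that the number of Process~B renewals by time $t$ becomes $\min\{K,N(t)\}$, where $K$ is the geometric index of the first CR-induced termination, with $P(K\ge j)=p_B^{\,j}$. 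Conditioning on $N(t)$ and summing the geometric tail gives
\[
\E{\min\{K,N(t)\}\,\big|\,N(t)}=\sum_{j=1}^{N(t)}p_B^{\,j}=\frac{p_B}{1-p_B}\bigl(1-p_B^{\,N(t)}\bigr),
\]
and averaging over $N(t)$ produces $\hat W_{B_1}(t)$. (As a sanity check, letting $t\to\infty$ recovers $\E{N_B(\infty)}=\frac{p_B}{1-p_B}$, the mean of the geometric total number of Process~B renewals.)

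For the second step, I would approximate $\E{p_B^{N(t)}}$ by $(p_B)^{\E{N(t)}}=(p_B)^{M(t)}$, i.e.\ evaluate the function $x\mapsto p_B^{\,x}$ at the mean of $N(t)$ rather than integrate it against the law of $N(t)$; this is a mean-field substitution justified by the concentration of $N(t)$ about $M(t)$ (its relative dispersion vanishes as $t\to\infty$ by the renewal central limit theorem), which makes the first-order correction to an expansion of $p_B^{N(t)}$ about $N(t)=M(t)$ negligible. Substituting into $\hat W_{B_1}(t)$ yields Eq.~\ref{eqn:TB3}.

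The main obstacle is the first step: legitimizing the decoupling, since $\{\tilde T_i>T_i\}$ --- which drives termination --- is \emph{not} independent of $T_i$, which drives the renewal epochs $S_n$ (for instance, conditioning $T_i$ to be small makes CR's survival nearly certain rather than of probability $p_B$, so the approximation degrades when CR is much slower than NC and for small $t$). One must argue that conditioning NC's lifetimes on CR's survival --- which replaces $f_T$ by the defective density $f_B$ of Theorem~\ref{theo:B} and only mildly shortens the epochs --- perturbs $W_B(t)$ little, and then separately control the replacement of $\E{p_B^{N(t)}}$ by $(p_B)^{M(t)}$. I would not attempt a rigorous error bound --- consistent with the paper's treatment of its other approximations --- and instead let the Monte Carlo comparison of subsection~\ref{sec:exampleB} confirm that the accumulated error is small.
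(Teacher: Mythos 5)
Your first step is essentially the paper's Lemma~3 in a cleaner package: writing the Process~B count as $\min\{K,N(t)\}$ with $K$ geometric and summing the tail $\sum_{j=1}^{N(t)}p_B^{\,j}$ reproduces exactly the paper's $\hat W_{B_1}(t)=\sum_{n\ge 1}P(N(t)\ge n)\,p_B^{\,n}=\frac{p_B}{1-p_B}\bigl(1-\E{p_B^{N(t)}}\bigr)$, and you correctly flag the same decoupling caveat (the survival indicator is not independent of $T_i$) that the paper relegates to a footnote. So far the two arguments coincide.

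The gap is in your second step. The substitution $\E{p_B^{N(t)}}\approx(p_B)^{M(t)}$ is \emph{not} justified by concentration of $N(t)$ about $M(t)$: $\mathrm{Var}(N(t))\sim\frac{\sigma_T^2}{\mu_T^3}t$ grows linearly, and because $x\mapsto p_B^{\,x}$ is convex the second-order (not first-order) term dominates the discrepancy. Indeed, the paper's Lemma~4 (asymptotic normality of $N(t)$) gives the lognormal moment $\E{p_B^{N(t)}}\approx(p_B)^{\delta M(t)}$ with $\delta=1+\frac{\sigma_T^2}{2\mu_T^2}\ln(p_B)<1$, so the ratio $\E{p_B^{N(t)}}/(p_B)^{M(t)}=(p_B)^{(\delta-1)M(t)}\to\infty$; the ``correction'' you dismiss is exponentially large in relative terms (only the absolute difference vanishes, since both quantities tend to $0$). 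The paper therefore does something different here: it computes the $\delta$-corrected expression explicitly, separately establishes the exact limit $\lim_{t\to\infty}W_B(t)=\frac{p_B}{1-p_B}$ via the Laplace-transform final-value theorem (Lemma~5), and then argues that dropping $\delta$ (i.e.\ using $(p_B)^{M(t)}$ rather than $(p_B)^{\delta M(t)}$) moves the approximation \emph{closer} to that exact asymptote --- which is the actual reason $\hat W_{B_2}$ is adopted, and is consistent with the observed ordering $W_B(t)>\hat W_{B_2}(t)>\hat W_{B_1}(t)$ in the simulations. You land on the right formula, but the mean-field rationale you give for the final substitution would not survive scrutiny; to close the argument you need either the lognormal computation plus the exact limit as an anchor, or some other reason to prefer $(p_B)^{M(t)}$ over the more faithful evaluation of $\E{p_B^{N(t)}}$.
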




%
%
%

We note that for the special case where the lifetime of CR follows the exponential distribution (with the distribution of the lifetime of NC being still arbitrary), the renewal functions of   Process A and Process B become identical. This is of course due to the fact that  the lack of memory in the exponential distribution cannot distinguish between ongoing operations and newly initiated operations.


\subsection{Numerical results.}\label{sec:exampleB}
The proposed renewal function for Process B (Eq. \ref{eqn:WAtGolden}) and its approximations as well as the intermediate approximation $\hat W_{B_1}(t) =\frac{p_B}{1-p_B}(1-\E{p_B^{N(t)}})$ (Lemma 3 in the Appendix) are evaluated numerically with respect to simulated results. For the examples in Figure  \ref{Fig:ExampleB}(a-b), the specific distributions of NC and CR allow for the computation (via
Laplace transforms) of the exact bounded renewal function
curve, against which the simulation and approximation curves
can be compared.  We first see that the improvement over the Bartholomew-based approximation of Eq. \ref{eqn:BarWA} is dramatic. By zooming in the figure, we can see that $W_B(t)>\hat W_{B_2}(t)>\hat W_{B_1}(t)$. For the examples in Figure \ref{Fig:ExampleB}(c-d), the simulation curves can only be compared against the approximation curves, since the lifetime distributions of NC and CR make it impossible to calculate the exact renewal functions. This emphasizes of course the significance of having an approximation to work with.
By zooming in the figure, we can see that $W_B(t)>\hat W_{B_2}(t)>\hat W_{B_1}(t)$. 

\begin{figure}[t]
 \vspace{-3mm}
\centering  
\subfigure[$T \!\sim\! \text{Gamma}( \kappa_1\!=\!2,\theta_1\!=\!1)$, $\tilde T\! \sim\! \text{Gamma}( \kappa_2\!=\!2, \theta_2\!=\!0.05)$.]{
\label{Fig.ExampleB.3} 
\includegraphics[width=0.23\textwidth]{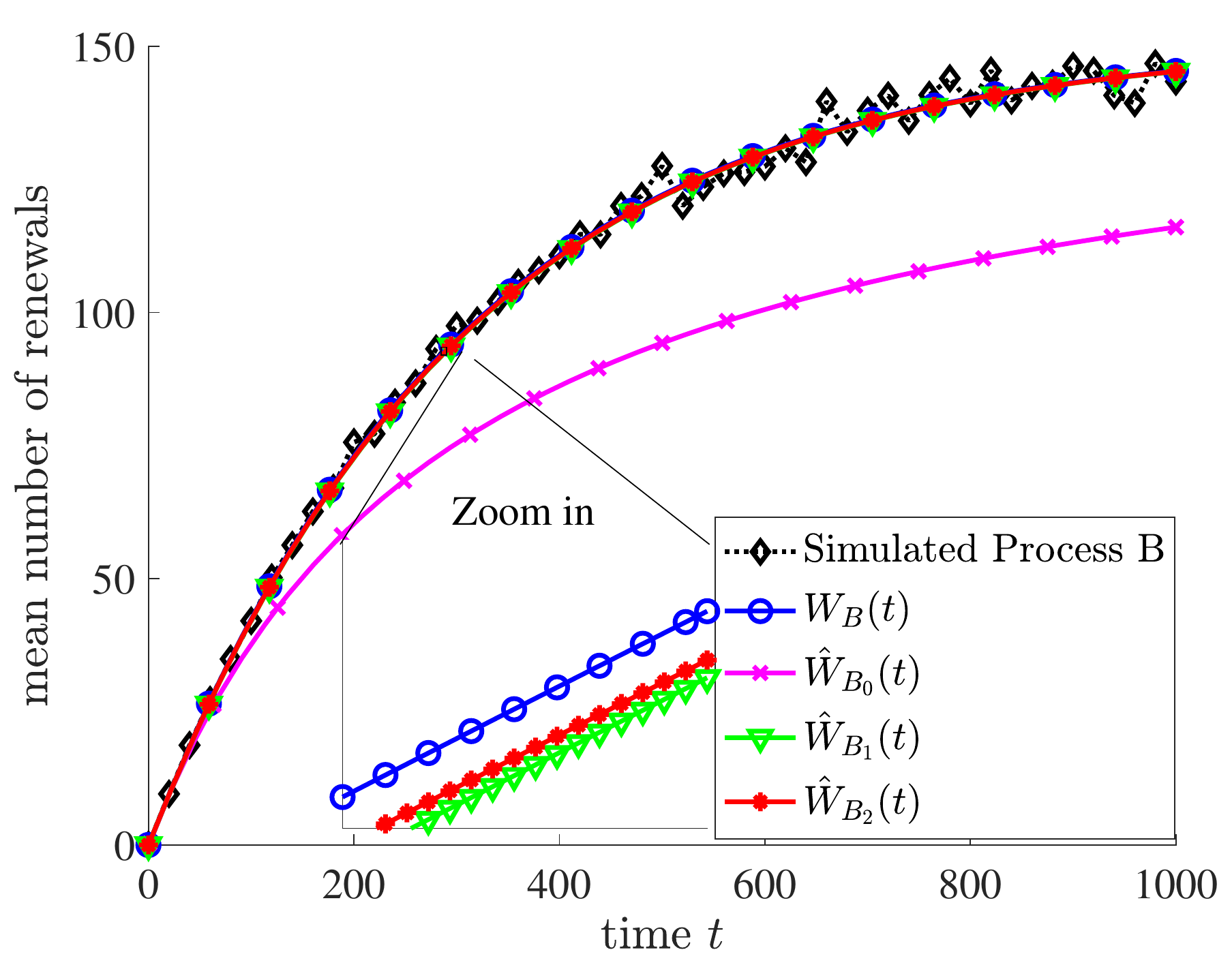}}
\subfigure[$T \!\sim\! \text{Gamma}( \kappa_1\!=\!2,  \theta_1\!=\!1)$, $\tilde T\! \sim\! \text{Gamma}( \kappa_2\!=\!2, \theta_2\!=\!0.1)$.]{
\label{Fig.ExampleB.4} 
\includegraphics[width=0.23\textwidth]{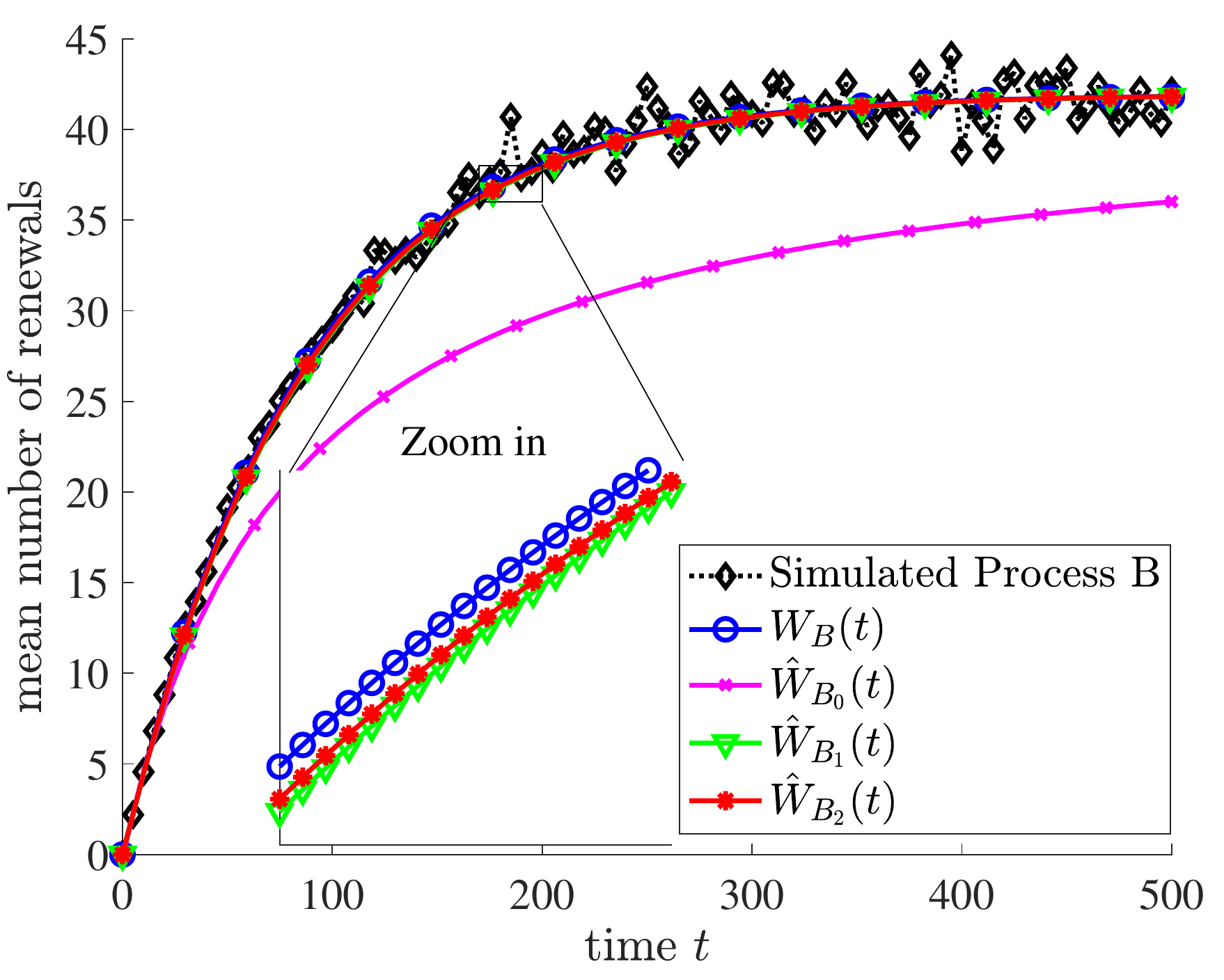}}
\subfigure[$T \!\sim\! \text{Gamma}( \kappa\!=\!2, \theta\!=\!1)$, $\,\,\,\,\,\,\,\,\,\,\,\tilde T\! \sim\! \text{Rayleigh}(\lambda\!=\!0.05)$.]{
\label{Fig.ExampleB.5}
\includegraphics[width=0.23\textwidth]{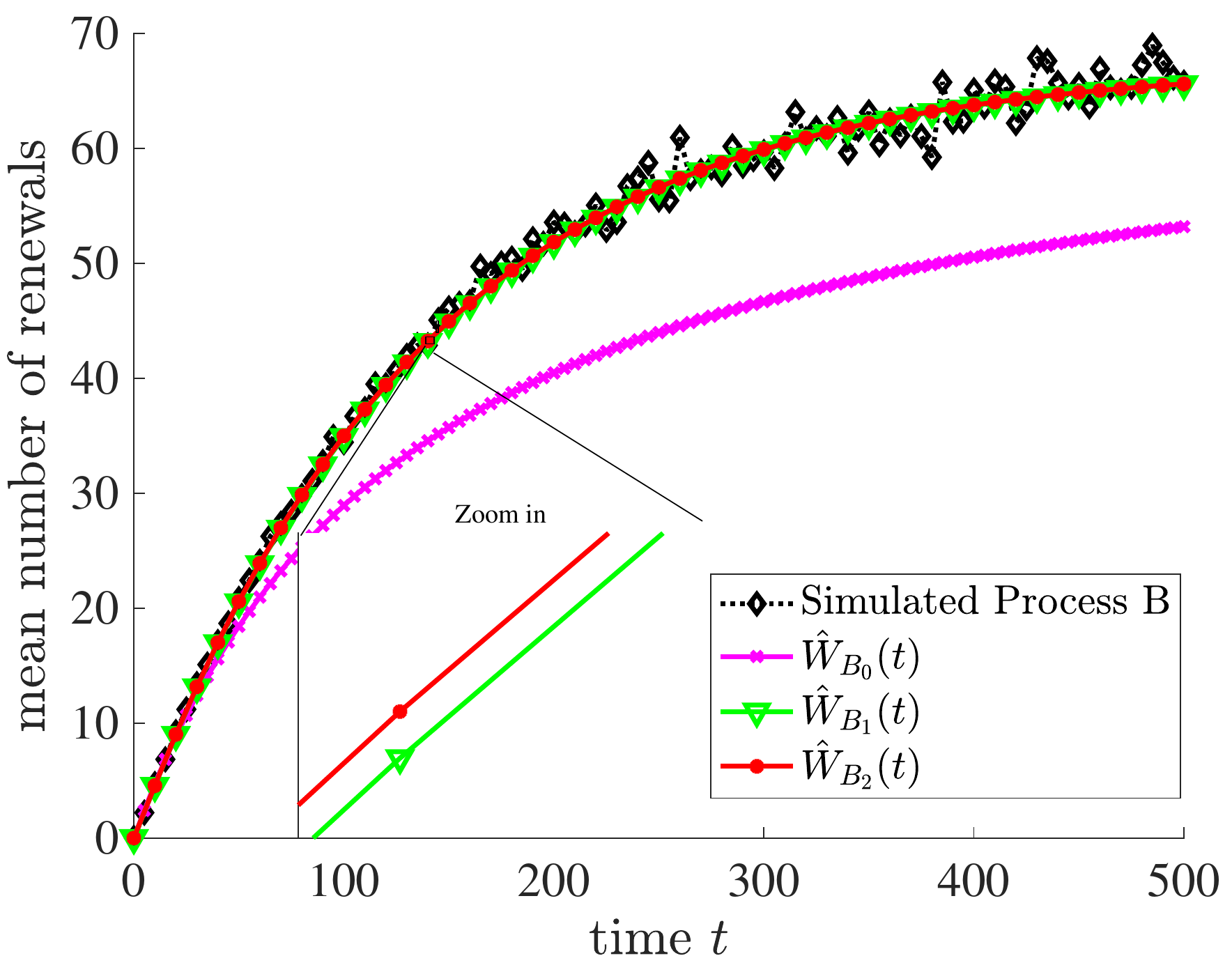}}
\subfigure[$T \!\sim\! \text{Gamma}( \kappa\!=\!2, \theta\!=\!1)$, $\,\,\,\,\,\,\,\,\,\,\,\tilde T\! \sim\! \text{Rayleigh}(\lambda\!=\!0.1)$.]{
\label{Fig.ExampleB.6}
\includegraphics[width=0.23\textwidth]{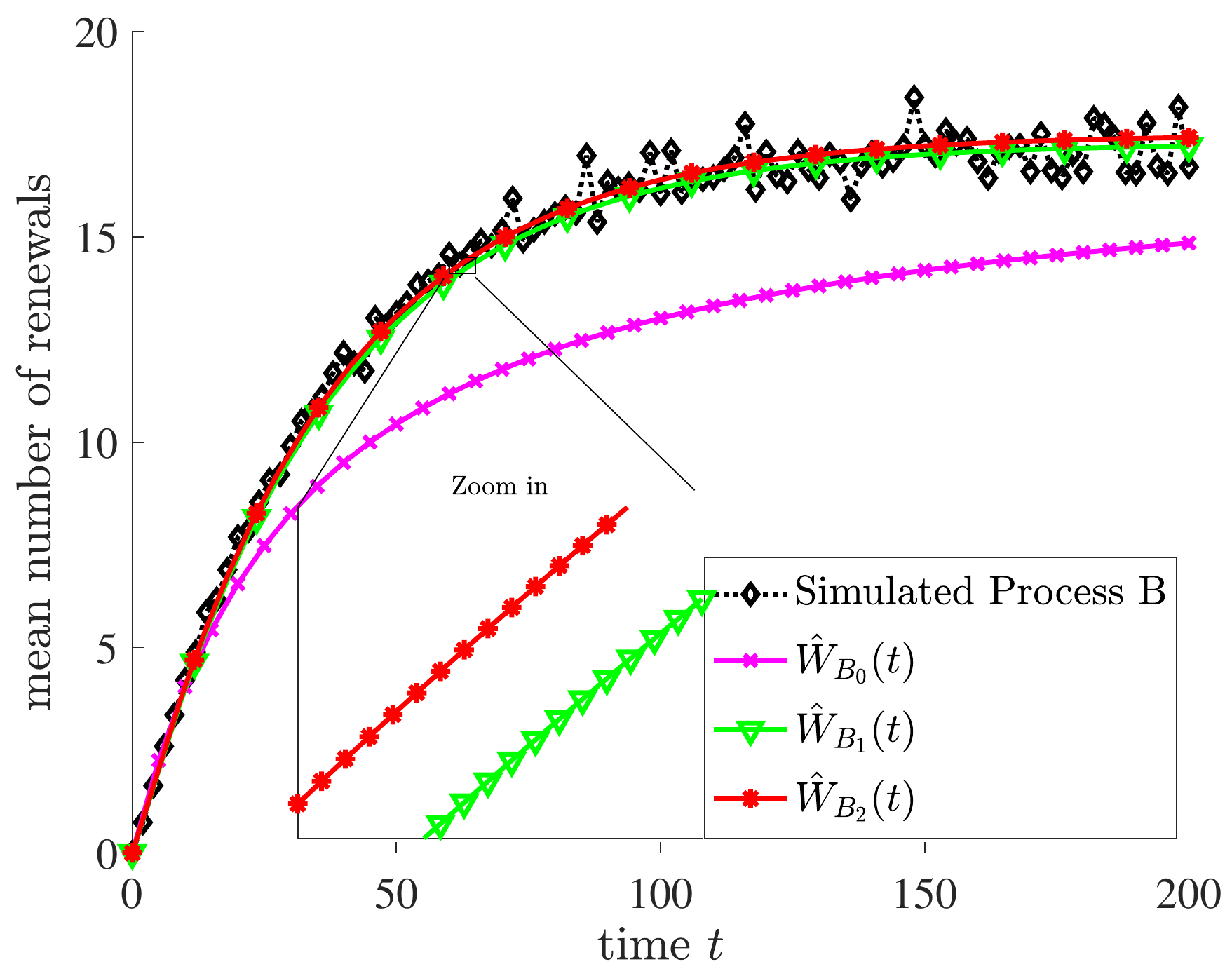}}
 \vspace{-3mm}
\caption{Example plots for Process B.}
 \vspace{-4mm}\label{Fig:ExampleB}
\end{figure}


\section{Bounded renewal process for  minimally repaired NC with preventively replaced CR \\ (Process B-MR).}\label{S:ModelBMR} 
\subsection {Model assumptions.}

(a) The NC and CR start working ``as good as new.''

(b) When NC fails, it is  minimally repaired. 

(c) When NC fails and is minimally repaired, CR is fully replaced in a preventive fashion.

(d) When CR fails, the whole system operation stops (its mission comes to an end).

(e) All replacements/repairs take negligible time.


The difference between Process B-MR and Process B is that when a failure  takes place at NC (shown by  an ``x" in Fig. \ref{fig:system_model2}), NC is minimally repaired  whereas it is  fully replaced/perfectly repaired in Process B. In both processes, the treatment of CR is the same (i.e., preventive full replacement whenever NC fails). 
\subsection {Derivation of renewal function for Process B-MR.}

\begin{theorem}\label{theo:BMR}
For Process B-MR, the mean number of renewals, $ W_{B_{MR}}(t)$, is:
\begin{dmath}\label{eqn:theorem2}
 W_{B_{MR}}(t)\!= \! F_{B_{MR}}(t|0)\!+\!\!\int_0^t\!\!  W_{B_{MR}}(y) f_{B_{MR}}(t\!-\!y|y)dy
.\end{dmath}
where $ F_{B_{MR}}(t|y)\!\!=\!\! \int_0^t\!\!\bar F_{ \tilde T}(x)f_T(x|y)dx$, 
and $ f_{B_{MR}}(t|y)\!\!=\!\!\bar F_{ \tilde T}(t) f_T(t|y)$.
\end{theorem}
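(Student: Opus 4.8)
The plan is to treat Process B-MR as the minimal-repair (generalized-renewal) analogue of Process B, exactly as Process A-MR is the analogue of Process A, and so to re-use the machinery of Section~\ref{S:Pre} together with the argument behind Theorem~\ref{theo:AMR}. First I would set up the embedded renewal sequence: put $S_0=0$ and, for $n\ge1$, let $S_n=T_1+\cdots+T_n$ be the accumulated operating time of NC just after its $n$th instantaneous minimal repair, so that $T_n\mid S_{n-1}=y$ has conditional density $f_T(x\mid y)=f_T(x+y)/\bar F_T(y)$. Since CR is preventively and fully replaced at every failure of NC, in each cycle $n$ the critical component starts afresh and carries an i.i.d.\ lifetime $\tilde T_n$ with survival function $\bar F_{\tilde T}$, independent of everything before that cycle. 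The $n$th renewal of NC is actually recorded (the system is still alive to see it) precisely when $T_i<\tilde T_i$ for every $i\le n$, in which case it occurs at time $S_n$; hence $W_{B_{MR}}(t)=\mathbb{E}[N_{B_{MR}}(t)]=\sum_{n\ge1}P\bigl(S_n\le t,\ T_i<\tilde T_i\ \text{for all }i\le n\bigr)$.

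The crux is the conditional law of a single recorded cycle. Conditioning on $S_{n-1}=y$ together with the survival event $\{T_i<\tilde T_i:\ i<n\}$, the pair $(T_n,\tilde T_n)$ is independent of the past, $T_n$ has density $f_T(\cdot\mid y)$, and $\tilde T_n$ is a fresh copy; hence the sub-probability density that the $n$th renewal is recorded at accumulated time $y+x$ equals $\bar F_{\tilde T}(x)\,f_T(x\mid y)=f_{B_{MR}}(x\mid y)$, with $\int_0^\infty f_{B_{MR}}(x\mid y)\,dx<1$, the deficit being the probability the system dies during cycle $n$. Its conditional CDF is $F_{B_{MR}}(t\mid y)=\int_0^t\bar F_{\tilde T}(x)f_T(x\mid y)\,dx$. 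Thus $N_{B_{MR}}$ is a generalized (Kijima-type) renewal counting process with the \emph{defective} conditional pair $\bigl(f_{B_{MR}}(\cdot\mid\cdot),F_{B_{MR}}(\cdot\mid\cdot)\bigr)$, just as $N_B$ in Theorem~\ref{theo:B} is an ordinary renewal process with the defective pair $(f_B,F_B)$.

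With this identification, the relations of Section~\ref{S:Pre} apply verbatim after replacing $f_T$ by $f_{B_{MR}}$ and $F_T$ by $F_{B_{MR}}$. Writing $P^{B_{MR}}_n(y,t)$ for the probability of at least $n$ further recorded renewals by time $t$ starting from memory $y$, one has $P^{B_{MR}}_1(y,t)=F_{B_{MR}}(t-y\mid y)$ and, as in Eq.~\ref{eqn:pky}, $P^{B_{MR}}_n(y,t)=\int_y^t P^{B_{MR}}_{n-1}(z,t)\,f_{B_{MR}}(z-y\mid y)\,dz$, with $P(N_{B_{MR}}(t)\ge n)=P^{B_{MR}}_n(0,t)$. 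Summing over $n\ge1$, interchanging sum and integral (valid by nonnegativity and monotone convergence), and peeling off the $n=1$ term exactly as in the passage from Eq.~\ref{eqn:pky} to Eq.~\ref{eqn:g-renew} and in the proof of Theorem~\ref{theo:AMR}, collapses $\sum_{n\ge2}P^{B_{MR}}_n(0,t)$ into a single $W_{B_{MR}}$ term and yields Eq.~\ref{eqn:theorem2}. Equivalently, one may introduce the recorded-renewal density $w_{B_{MR}}=\sum_{n\ge1}p^{B_{MR}}_n$ with $p^{B_{MR}}_n(y)\,dy=P(S_n\in dy,\ \text{cycles }1,\dots,n\ \text{recorded})$, observe $p^{B_{MR}}_n(t)=\int_0^t p^{B_{MR}}_{n-1}(y)f_{B_{MR}}(t-y\mid y)\,dy$, sum over $n$, and integrate in $t$.

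I expect the main obstacle to be the second step: verifying that conditioning on the survival event $\{T_i<\tilde T_i,\ i<n\}$ does not distort the joint law of $(S_{n-1},T_n,\tilde T_n)$ beyond the minimal-repair memory already carried by $f_T(\cdot\mid y)$. This is exactly where Process B-MR departs from Process A-MR: here CR carries no accumulated age into cycle $n$, so the survival factor is the plain $\bar F_{\tilde T}(x)$ rather than a conditional tail of $\tilde T$. Once that factorization is nailed down, the rest is the same bookkeeping already used for Theorems~\ref{theo:AMR} and~\ref{theo:B} — an interchange of summation and integration and the standard rearrangement of the generalized-renewal recursion — and, as a sanity check, setting $f_T(x\mid y)=f_T(x)$ should recover $W_B$ of Theorem~\ref{theo:B}, mirroring the reduction of $M_g$ to $M$.
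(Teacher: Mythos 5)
Your proposal is correct and follows essentially the same route as the paper: the paper likewise writes $P(N_{B_{MR}}(t)\ge n)$ as a nested integral with kernel $f_T(x_i\mid x_1+\cdots+x_{i-1})\,\bar F_{\tilde T}(x_i)$ (CR being fresh each cycle), defines the recursion $B_n(y,t)=\int_y^t B_{n-1}(z,t)\,f_{B_{MR}}(z-y\mid y)\,dz$ — exactly your $P^{B_{MR}}_n$ with the defective kernel — and sums over $n$, rearranging the convolution as in the g-renewal equation to obtain Eq.~\ref{eqn:theorem2}. Your ``defective g-renewal kernel'' framing is just a cleaner packaging of the same decomposition.
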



\subsection {Approximation for renewal function of Process B-MR.}

Because of the recursion and the kernel form $ f_{B_{MR}}(t|y)$, the exact $W_{B_{MR}}(t)$ (Eq. \ref{eqn:theorem2}) is in general hard to compute. Also, the existing approximation methods for the  g-renewal function that only involves NC do not apply to the bounded case.  In the following, we provide an approximation for Process B-MR.

We start by defining a probability, referred to as $ p_{\mathcal {B_{MR}}}(y)$, that the lifetime $\tilde T$ of CR is greater than the lifetime $T$ of NC, given that NC has been working for total time $y$:
\begin{eqnarray}\label{eqn:ProBMR}
p_{\mathcal {B_{MR}}}(y)\!\!\!\!\!\!\!\!&&
=P(    \tilde T\hiderel \geq T |y)
=  \int_0^\infty P( \tilde T \hiderel \geq  t) f_{  T}(t|y) dt\nonumber\\
&&=\int_0^\infty \bar F_{ \tilde T} (t)f_{  T}(t|y)dt
.\end{eqnarray}

We assume that in each of the renewals $i$, $ i\geq 1$, the probability of CR surviving NC given that NC has been working for time $y=S_{i-1}$ is approximately
equal to  Eq. \ref{eqn:ProBMR}.
The probability of CR surviving NC for $n$ consecutive renewals given that NC has been working for time $y=S_{i-1}$ is given as: \begin{eqnarray}\label{eqn:rho}
\rho_{S}(n)=\prod_{i=1}^{n} p_{ {B_{MR}}}(S_{i-1}),
\end{eqnarray}
with initial value $S_{0}=0$. 

The quantities $S_{i-1}$, $ i\geq 1$, are of course unknown, but 
since $S_{i-1}$ is the time that the first $i-1$ renewals take, and $M_g (t)$ gives the average number of renewals by time $t$,  $S_{i-1}$ can in principle be approximated by  the inverse function of the g-renewal function $M_g (t)=i-1$, i.e., $S_{i-1} \approx M^{-1}_g(i-1)$. In the case where  $M_g^{-1}(\cdot)$ can be computed,  we have that  $\rho_{S}(n)$ is approximated by 
$ \rho_M(n)=\prod_{i=1}^{n} p_{ {B_{MR}}}(M^{-1}_g(i-1))$.
In the case where $M_g^{-1}(\cdot)$ cannot be computed, we can approximate $\rho_{S}(n)$ as follows:  
 The approximate average time until the first renewal in Process B-MR is $\tau (1)=\int_0^\infty tf_T(t)dt$,  the approximate average time until the second renewal is $\tau (2)=\tau (1)+\int_0^\infty tf_T(t|\tau (1))dt$, and so on. Therefore, the approximate average time until the $i$th renewal of Process B-MR is given by
\begin{eqnarray}\tau (i)=\tau (i-1)+\int_0^\infty tf_T(t|\tau (i-1))dt,
\end{eqnarray}
where  $\tau(0)=0$.
We note that the $\tau(i)$ can be iteratively computed (using a simple loop rather than actual recursion) starting with 
$\tau (1)=\tau(0)+\int_0^\infty tf_T(t|\tau(0))dt = \int_0^\infty tf_T(t)dt$.
Therefore, the approximation for $\rho_S(n)$ becomes:
\begin{eqnarray}\label{eqn:rhotau}
\rho_S(n)\approx \rho_\tau(n)=\prod_{i=1}^{n} p_{ {B_{MR}}}(\tau(i-1)).
\end{eqnarray}

Now we are ready for the approximation of the renewal function of Process B-MR.

\begin{theorem}\label{theo:BMRapp1}
The renewal function $  W_{{B_{MR}}}(t)$ of Process B-MR  can be approximated by the function $ \hat W_{{B_{MR}}\_1}(t)$ given by:
\begin{eqnarray}\label{eqn:BMR}
\!\!\!\!\!\!\!\!\!\!\!\!\!&&\hat W_{B_{MR}\_1}(t)
=\sum_{k=1}^{\lceil M_g (t) \rceil} \rho_\tau (k).
\end{eqnarray}

\end{theorem}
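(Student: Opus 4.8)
The plan is to adapt the argument behind the Process~B approximation $\hat W_{B_2}(t)$ of Theorem~\ref{theo:Bapp} to the minimal-repair setting, in which the NC lifetime in the $i$th round is drawn from $f_T(t\,|\,S_{i-1})$ while a fresh CR (distributed as $F_{\tilde T}$) is installed at every NC failure. First I would write the renewal function as a tail sum,
\begin{equation*}
W_{B_{MR}}(t)=\E{N_{B_{MR}}(t)}=\sum_{n=1}^{\infty}P\bigl(N_{B_{MR}}(t)\ge n\bigr),
\end{equation*}
and identify $\{N_{B_{MR}}(t)\ge n\}$ with the event that (i) the fresh CR outlives the minimally repaired NC in each of the rounds $1,\dots,n$ --- call this ``round $i$ succeeds'' --- and (ii) the $n$th NC failure occurs by time $t$, i.e.\ $S_n\le t$, where $S_n$ is the accumulated NC operating time at that failure.

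The second step decouples the survival requirement (i) from the timing requirement (ii). Conditioning on the NC operating time $S_{i-1}=y$ accumulated before round $i$, the conditional probability that round $i$ succeeds is by definition $p_{\mathcal B_{MR}}(y)$ of Eq.~\ref{eqn:ProBMR}; treating the per-round survivals as conditionally independent given the sequence $S_0,S_1,\dots,S_{n-1}$ then gives $P(\text{rounds }1,\dots,n\text{ succeed})\approx\E{\rho_S(n)}$ with $\rho_S(n)=\prod_{i=1}^{n}p_{\mathcal B_{MR}}(S_{i-1})$ as in Eq.~\ref{eqn:rho}. The next approximation replaces the random partial sums $S_{i-1}$ by the deterministic mean-field values $\tau(i-1)$ generated by the recursion $\tau(i)=\tau(i-1)+\int_0^\infty t\,f_T(t\,|\,\tau(i-1))\,dt$; this amounts to evaluating each $S_{i-1}$ at (an estimate of) its mean and pulling the expectation through the product, yielding $\E{\rho_S(n)}\approx\rho_\tau(n)$ of Eq.~\ref{eqn:rhotau}. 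Finally the timing constraint $S_n\le t$ is handled deterministically: since $M_g(t)$ (Eq.~\ref{eqn:g-renew}) is precisely the mean number of indices $n$ with $S_n\le t$ for the unconstrained NC g-renewal process, the number of renewals that can be accommodated in $(0,t]$ is approximated by the cutoff $\lceil M_g(t)\rceil$, so that $P(N_{B_{MR}}(t)\ge n)\approx\rho_\tau(n)$ for $n\le\lceil M_g(t)\rceil$ and $\approx 0$ otherwise; the slight shortening of the NC lifetimes feeding $S_n$ that is induced by conditioning on round survival is neglected as a second-order effect.

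Summing the (approximate) tail then gives $W_{B_{MR}}(t)\approx\sum_{n=1}^{\lceil M_g(t)\rceil}\rho_\tau(n)=\hat W_{B_{MR}\_1}(t)$, which is the claim. As a consistency check I would verify that if NC is memoryless, $f_T(t\,|\,y)=f_T(t)$, then $p_{\mathcal B_{MR}}(y)$ collapses to the constant $p_B$ of Eq.~\ref{eqn:PB}, $M_g$ collapses to the classical $M$, $\rho_\tau(n)=p_B^{\,n}$, and $\sum_{n=1}^{\lceil M(t)\rceil}p_B^{\,n}=\frac{p_B}{1-p_B}\bigl(1-p_B^{\,\lceil M(t)\rceil}\bigr)$ recovers $\hat W_{B_2}(t)$ of Theorem~\ref{theo:Bapp} up to the ceiling, so the new formula is the natural minimal-repair generalization. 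The main obstacle is that none of the three decoupling steps is an identity: round $i$'s success is positively correlated with $S_{i-1}$ and with the successes of earlier rounds, and replacing $S_{i-1}$ by $\tau(i-1)$ and the random renewal count by $\lceil M_g(t)\rceil$ trades exactness for a closed form. The proof therefore presents these as controlled mean-field/deterministic approximations (in the same spirit as those used for $\hat W_{B_1}$ and $\hat W_{B_2}$), with the numerical section supplying the empirical validation.
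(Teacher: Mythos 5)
Your proposal is correct and takes essentially the same route as the paper's proof: the per-round survival probability is decoupled into $\rho_\tau(n)$ via the mean-field substitution $S_{i-1}\approx\tau(i-1)$, the timing constraint is referred to the unbounded g-renewal count, and the random count $N_g(t)$ is finally replaced by its mean through the cutoff $\lceil M_g(t)\rceil$. The only cosmetic difference is that you sum the tail probabilities $P(N_{B_{MR}}(t)\ge n)\approx\rho_\tau(n)P(N_g(t)\ge n)$ directly, whereas the paper decomposes over the exact-count probabilities $P(N_g(t)=n)$, reaches the same intermediate quantity $\E{\sum_{k=1}^{N_g(t)}\rho_\tau(k)}$ by an Abel-summation rearrangement, and then appeals to a bounded Jensen gap to justify evaluating at the mean.
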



 Because the approximation in Eq. \ref{eqn:BMR} results in a non-continuous (``zigzag'') curve, we propose an alternative approximation based on the following theorem:

\begin{theorem}\label{theo:BMRapp2}
The renewal function $  W_{{B_{MR}}}(t)$ of Process B-MR  can also be approximated by the function $ \hat W_{B_{MR}\_2}(t)$  given by:
\begin{subequations}
\begin{eqnarray}\label{eqn:BMR1}
\!\!\!\!\!\!\!\!\!\!\!\!\!\!\!\!\!\!\!\!&&\hat W_{B_{MR}\_2}(t)\nonumber\\
\!\!\!\!\!\!\!\!\!\!\!\!\!\!\!\!\!\!&&= M_g (t) \rho_{\tau} (\lceil M_g (t) \rceil)+\!\!\!\!\!\!\sum_{k=1}^{\lceil M_g (t) \rceil-1}\!\!\!\!\!\!\!\!k \Big(\rho_{\tau} (k)\!-\!\rho_{\tau} (k+1)\Big)\\
\!\!\!\!\!\!\!\!\!\!\!\!\!\!\!\!\!\!\!&&= M_g (t) \rho_{\tau} (\lceil M_g (t) \rceil)+\!\!\!\!\!\!\sum_{k=1}^{\lceil M_g (t) \rceil-1}\!\!\!\!\!\!\!\!k \Big(1-p_{\mathcal B_{MR}}(k)\Big) \rho_{\tau} (k).\label{eqn:BMR1_B}
\end{eqnarray}
\end{subequations}

\end{theorem}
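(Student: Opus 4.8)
The plan is to obtain $\hat W_{B_{MR}\_2}(t)$ from the already-established approximation $\hat W_{B_{MR}\_1}(t)=\sum_{k=1}^{\lceil M_g(t)\rceil}\rho_\tau(k)$ of Theorem~\ref{theo:BMRapp1} by an exact summation-by-parts rearrangement, followed by a single continuity-restoring substitution, and then to re-express the outcome using the recursion that defines $\rho_\tau$. Set $m:=\lceil M_g(t)\rceil$ and recall the probabilistic reading behind Theorem~\ref{theo:BMRapp1}: $\rho_\tau(k)$ approximates $P(N_{B_{MR}}(t)\ge k)$, so $\rho_\tau(k)-\rho_\tau(k+1)$ approximates the probability of exactly $k$ renewals. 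First I would carry out the elementary identity (valid purely algebraically, since the sum is finite and no convergence hypothesis on $\rho_\tau$ is needed)
\[
\sum_{k=1}^{m}\rho_\tau(k)=\sum_{k=1}^{m-1}k\bigl(\rho_\tau(k)-\rho_\tau(k+1)\bigr)+m\,\rho_\tau(m),
\]
which recasts $\hat W_{B_{MR}\_1}(t)$ into ``$\sum k\cdot\mathrm{pmf}$'' form with an explicit overflow term $m\,\rho_\tau(m)$.

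Second, because $\lceil M_g(t)\rceil$ jumps by one at discrete instants, $\hat W_{B_{MR}\_1}(t)$ is piecewise constant in $t$ (the ``zigzag''); to smooth it I would replace the integer coefficient $m$ of the overflow term by the real number $M_g(t)$, which yields exactly Eq.~\ref{eqn:BMR1}. I would then verify two things: (i) this substitution keeps the value within $\bigl(\lceil M_g(t)\rceil-M_g(t)\bigr)\rho_\tau(\lceil M_g(t)\rceil)\le\rho_\tau(\lceil M_g(t)\rceil)$ of $\hat W_{B_{MR}\_1}(t)$, a gap that is small precisely in the regime of many renewals where $\rho_\tau$ is small; and (ii) the resulting curve is genuinely continuous, since on an interval where $m$ is constant $\hat W_{B_{MR}\_2}(t)$ depends on $t$ only through the linear term $M_g(t)\rho_\tau(m)$, while at a point $t^\ast$ with $M_g(t^\ast)=m\in\mathbb{Z}$ the left limit $m\,\rho_\tau(m)+\sum_{k=1}^{m-1}k(\rho_\tau(k)-\rho_\tau(k+1))$ equals the right limit $m\,\rho_\tau(m+1)+\sum_{k=1}^{m}k(\rho_\tau(k)-\rho_\tau(k+1))$ by one more step of the same telescoping identity. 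Thus $\hat W_{B_{MR}\_2}$ is a continuous interpolant of the step function $\hat W_{B_{MR}\_1}$, agreeing with it at every integer value of $M_g(t)$, and so inherits the approximation guarantee for $W_{B_{MR}}(t)$ granted by Theorem~\ref{theo:BMRapp1}.

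Finally, to pass from Eq.~\ref{eqn:BMR1} to the equivalent form Eq.~\ref{eqn:BMR1_B}, I would use the recursion $\rho_\tau(k+1)=p_{\mathcal B_{MR}}(\tau(k))\,\rho_\tau(k)$, which is immediate from the product definition in Eq.~\ref{eqn:rhotau}, so that $\rho_\tau(k)-\rho_\tau(k+1)=\bigl(1-p_{\mathcal B_{MR}}(\tau(k))\bigr)\rho_\tau(k)$, and substitute term by term. As a sanity check I would record the limiting case in which CR never fails: then $\bar F_{\tilde T}\equiv1$, hence $p_{\mathcal B_{MR}}\equiv1$ and $\rho_\tau\equiv1$, and both Eq.~\ref{eqn:BMR1} and Eq.~\ref{eqn:BMR1_B} collapse to $M_g(t)$, recovering the pure g-renewal count, consistent with the remark following Theorem~\ref{theo:AMR}. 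The main obstacle I anticipate is not any heavy computation (the algebra is just the arithmetic of summation by parts) but articulating precisely in what sense this is an ``approximation'' theorem, i.e.\ justifying that the $m\mapsto M_g(t)$ substitution is legitimate, that continuity is truly gained without degrading accuracy, and that the boundary instants $M_g(t)\in\mathbb{Z}$ are handled cleanly.
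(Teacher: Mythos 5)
Your proposal is correct and follows essentially the same route as the paper's proof: the finite summation-by-parts identity $\sum_{k=1}^{n}\rho_\tau(k)=n\rho_\tau(n)+\sum_{k=1}^{n-1}k\bigl(\rho_\tau(k)-\rho_\tau(k+1)\bigr)$, followed by substituting $M_g(t)$ (with the ceiling for the discrete parts) and using $\rho_\tau(k+1)=p_{\mathcal B_{MR}}(\tau(k))\,\rho_\tau(k)$ to obtain the second form. Your added checks on continuity, the size of the interpolation gap, and the degenerate case where CR never fails go beyond what the paper records, and your explicit writing of the argument as $\tau(k)$ rather than the paper's shorthand $p_{\mathcal B_{MR}}(k)$ is the more careful reading of Eq.~\ref{eqn:rhotau}.
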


Eq. \ref{eqn:BMR1} or \ref{eqn:BMR1_B} produces a continuous smooth curve. We note that the particular form of  Eq. \ref{eqn:BMR1_B} reveals a rather intuitive meaning of the approximation for Process B-MR, namely that the average number of renewals $W_{{B_{MR}}}(t)$ in Process B-MR is approximated by the contribution of the average number $M_g(t)$ of the unbounded g-renewal process multiplied  by the probability $\rho_{\tau} (\lceil M_g (t) \rceil)$ that CR survives NC $\lceil M_g (t) \rceil$ times, plus the contribution of all integer values $k$ less than $\lceil M_g (t) \rceil$, each one multiplied by the probability that CR survives NC  exactly $k$ times (i.e, by the probability $\rho_\tau(k)\cdot(1-p_{\mathcal B_{MR}}(k))$.) 

We note that in Theorems \ref{theo:BMRapp1} and \ref{theo:BMRapp2}, if  the exact $M_g(t)$ cannot be computed (which is in general the case), the approximation function $\hat M_g(t)$  defined in Eq. \ref{eqn:approxg} is used.

We finally note that for the special  case where the lifetime of CR
follows the exponential distribution, the renewal functions of  Process B-MR and Process A-MR become identical, and also that that for the special case where the lifetime of NC
follows the exponential distribution, the renewal functions of  Process B-MR and Process B become identical.

\subsection{Numerical results.}
We demonstrate numerically how nicely the suggested approximation renewal function (Eq. \ref{eqn:BMR1} that uses $\rho_\tau (k)$) behaves in regards to the results of simulations in this subsection. We also provide  results  by using the $\rho_M (k)$ instead of $\rho_\tau (k)$ in Eq. \ref{eqn:BMR1} if $\rho_M (k)$ is available to compute (which is defined as $\hat W_{B_{MR}\_2'}$ in the plots). By comparing the $\hat W_{B_{MR}\_2}$ with $\hat W_{B_{MR}\_2'}$, we can see that $\tau(n)$  approximates well the $\hat M^{-1}_g(n)$. Also, a comparison between $\hat M^{-1}_g(n)$ and $\tau(n)$ is given in Table \ref{table:1}. 
Examples are shown in Figure \ref{Fig:ExampleBMR}.  The computation of an  exact renewal function is impossible in these cases. As can be seen, the proposed approximation formulas follow closely the simulation curves. We also observe that as the $\hat M^{-1}_g(n)$ is a little greater than $\tau(n)$,  the $\hat W_{B_{MR}\_2'}$ is a little higher than $\hat W_{B_{MR}\_2}$.

\begin{table}[tb]
\centering
\begin{tabular}{|l|ll|ll|}
\hline
\multirow{2}{*}{$n$} & \multicolumn{2}{l|}{Gamma($\theta\!=\!1,\kappa\!=\!2$)}& \multicolumn{2}{l|}{Rayleigh($\lambda=0.1$)} \\ \cline{2-5} 
& \multicolumn{1}{l|}{$M^{-1}_g(n)$} & $\tau(n)$ & \multicolumn{1}{l|}{$M^{-1}_g(n)$} & $\tau(n)$ \\ \hline
1& \multicolumn{1}{l|}{2.1462} & 2 & \multicolumn{1}{l|}{10}& 8.8623                     \\\hline
2& \multicolumn{1}{l|}{3.5052 }&3.3333 & \multicolumn{1}{l|}{14.1421}&  12.9459   \\ \hline
3& \multicolumn{1}{l|}{4.7490 }& 4.5641 & \multicolumn{1}{l|}{17.3205}&   16.1260  \\ \hline
10& \multicolumn{1}{l|}{12.6109 }& 12.3987& \multicolumn{1}{l|}{31.6228}&  30.5587   \\ \hline
50& \multicolumn{1}{l|}{54.0075 }& 53.7803& \multicolumn{1}{l|}{70.7107}&   69.9627 \\\hline
100& \multicolumn{1}{l|}{104.6602 }& 104.4308& \multicolumn{1}{l|}{100}&   99.3858  \\\hline
\end{tabular}\caption{Comparison between $\hat M^{-1}_g(n)$ and $\tau(n)$.}
\vspace{-0mm} 
\label{table:1}
\end{table}

\begin{figure}[t]
\centering  
\vspace{-1mm} 
\subfigure[$T \sim$ Gamma($ \kappa_1=2, \theta_1=10$); $\tilde T \sim$ Gamma($\kappa_2=2, \theta_2=0.5$).]{
\label{Fig.ExampleAMR.3MR} 
\includegraphics[width=0.23\textwidth]{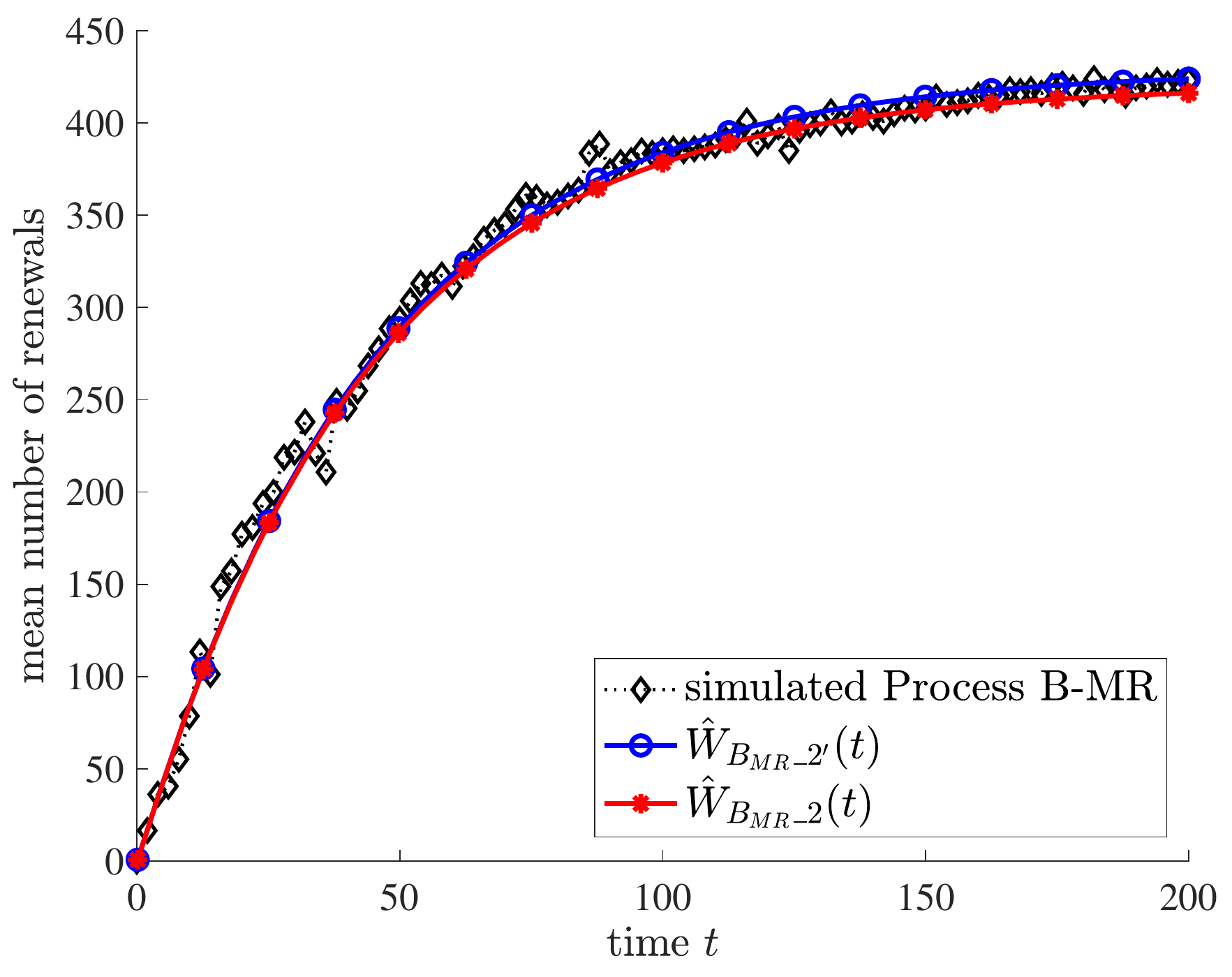}}
\subfigure[$T \sim$ Gamma($ \kappa_1=2, \theta_1=1$); $\tilde T \sim$ Gamma($\kappa_2=2, \theta_2=0.1$).]{
\label{Fig.ExampleAMR.4MR}
\includegraphics[width=0.23\textwidth]{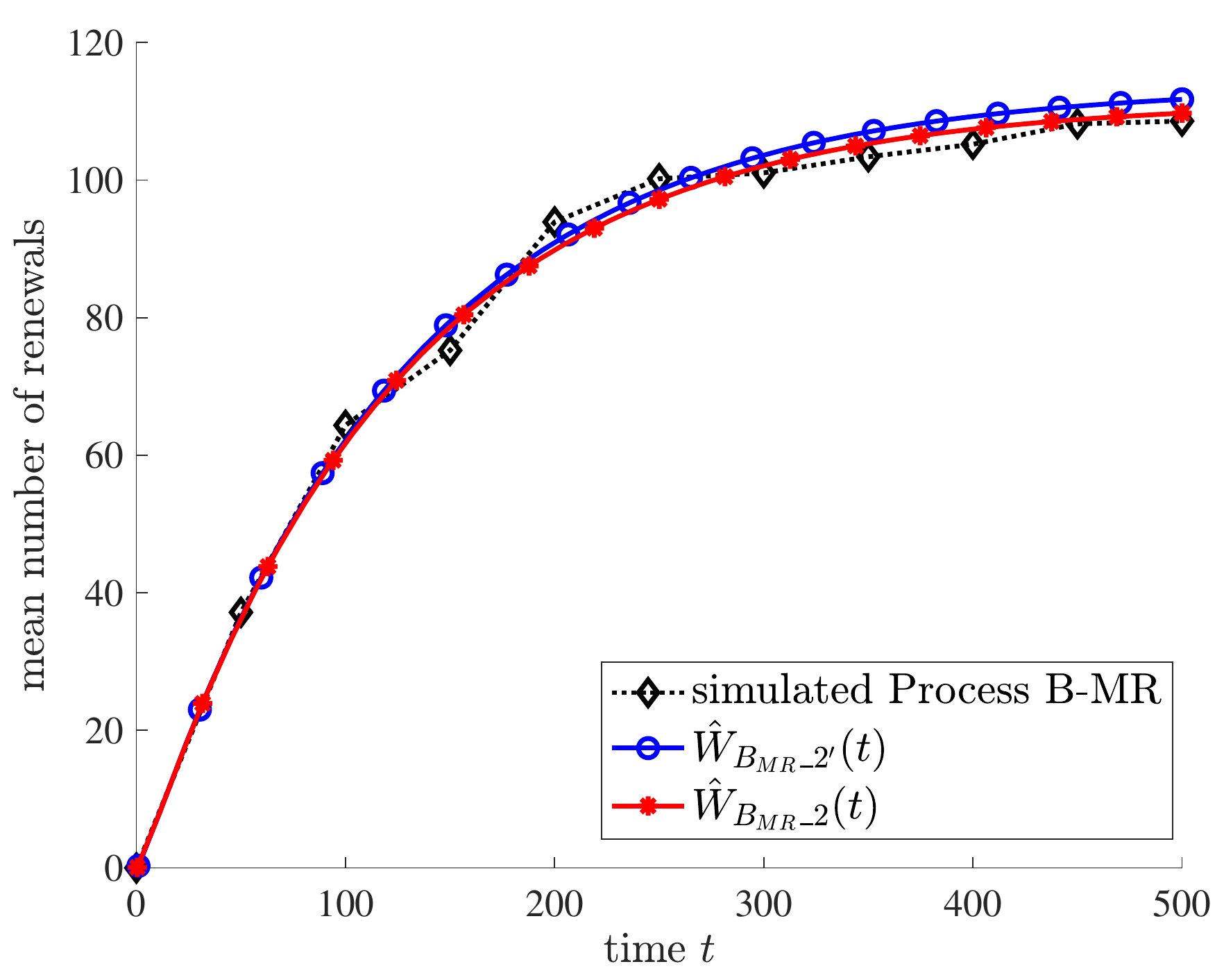}}
\subfigure[$T \sim$ Rayleigh($\lambda=0.1$);\qquad \qquad $\tilde T \sim$ Gamma($ \kappa=2, \theta=0.2$).]{
\label{Fig.ExampleAMR.5MR}
\includegraphics[width=0.23\textwidth]{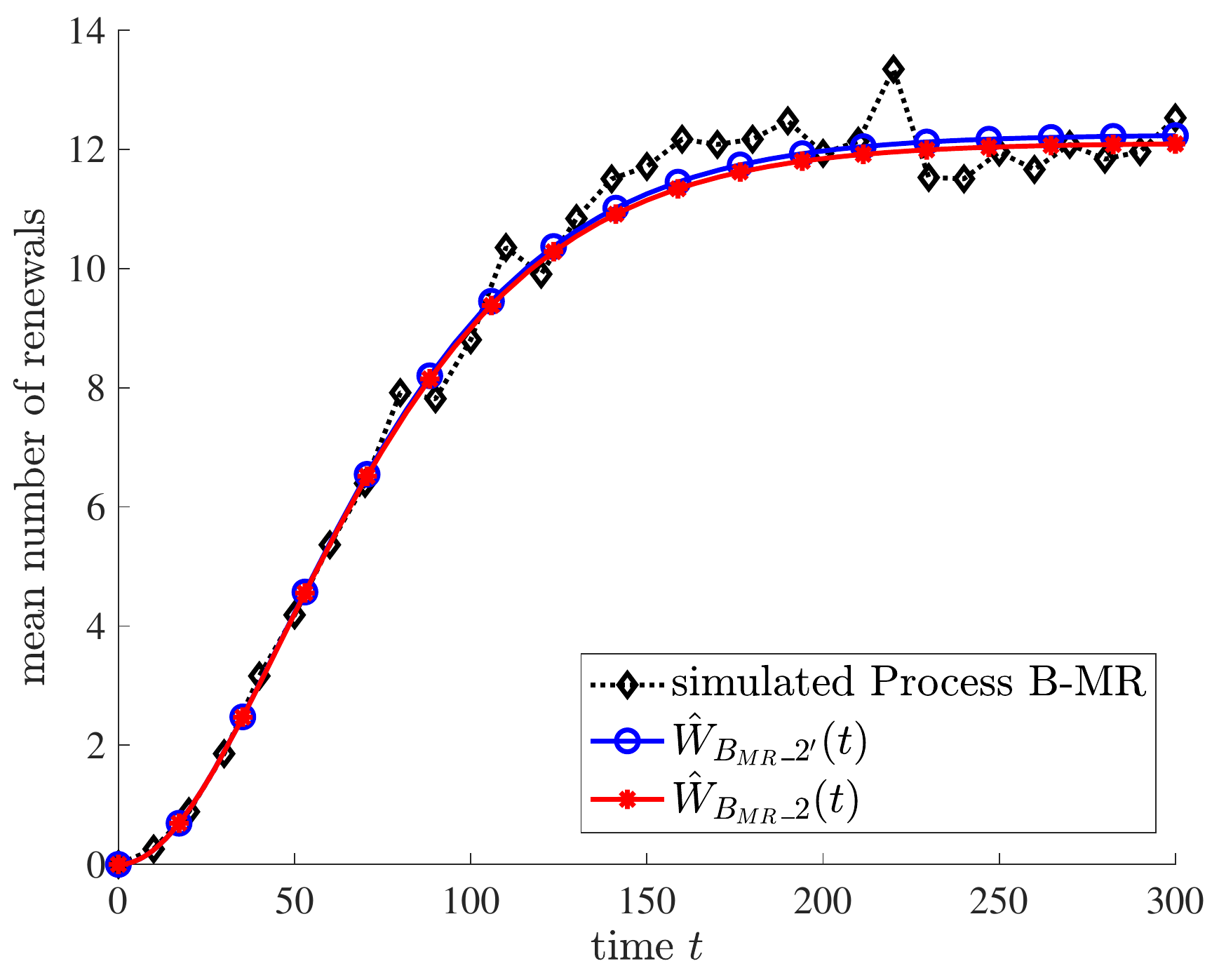}}
\subfigure[$T \sim$ Rayleigh($\lambda=0.1$);\qquad \qquad $\tilde T \sim$ Gamma($\kappa=2, \theta=0.3$).]{
\label{Fig.ExampleAMR.6MR}
\includegraphics[width=0.23\textwidth]{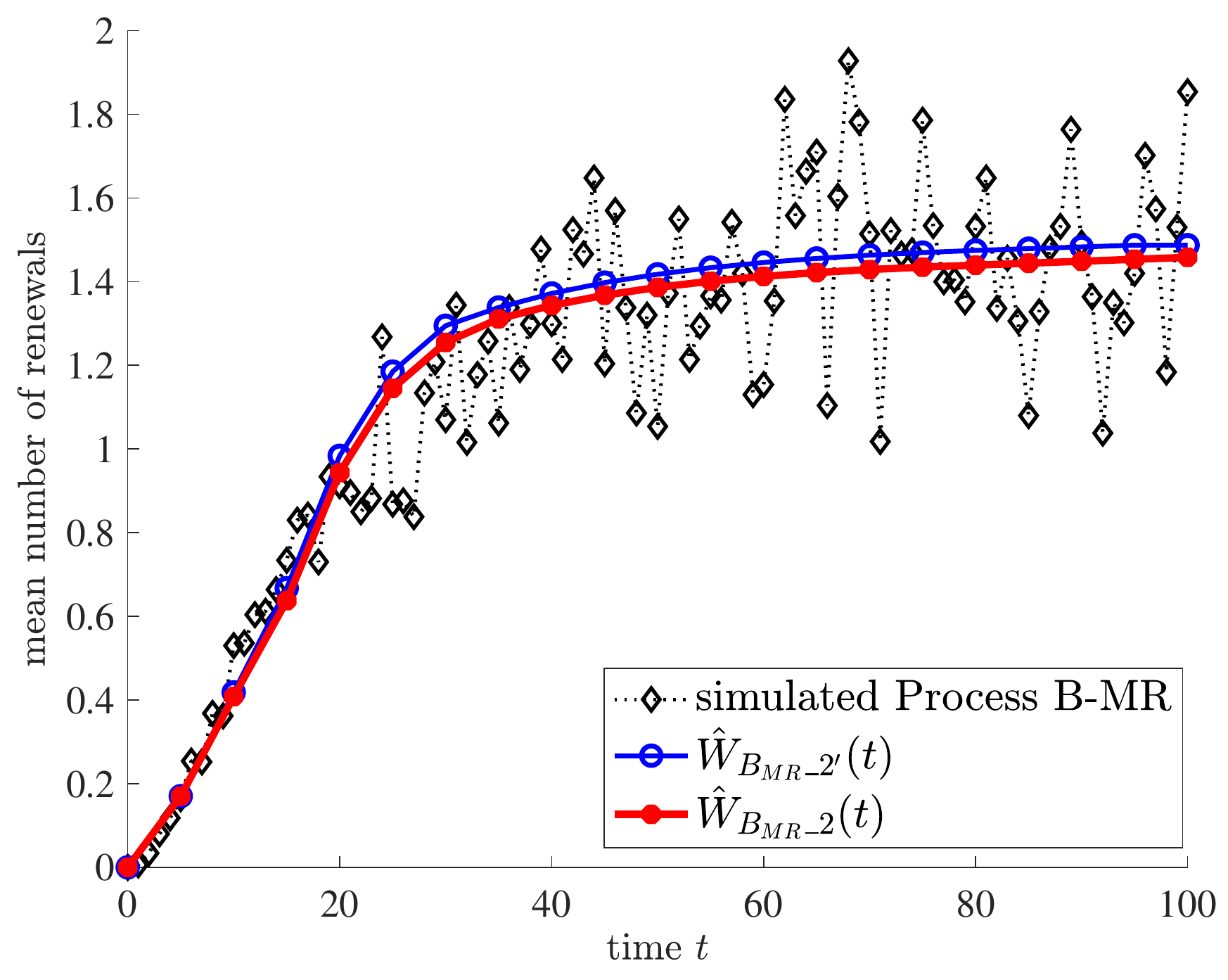}}
\caption{Example plots for Process B-MR.}
\vspace{-4mm} 
\label{Fig:ExampleBMR}
\end{figure}

\section{Illustration on wind turbine components.}\label{S:case} 
In this section, we illustrate the application of our formulas on actual lifetime distributions that have been obtained for wind turbine components (see, e.g., \cite{shafiee2015optimal}, \cite{WANG20191}).

\subsection{Blade and motor.}
\label{S:casestudy1}
We suppose the wind turbine blade as the NC component and the wind turbine motor as the CR component and are interested in the average number of  renewals of the blade until the motor stops working, without any maintenance action ever taken on the motor. Process A applies if the blade is replaced whenever it fails, while process A-MR applies if the blade whenever it malfunctions, is only minimally repaired. In this case, the failure of the blade is actually  taken to be the situation where its deterioration (due to cracks/delamination) exceeds a given level $D$ \cite{WANG20191}. 

The lifetime (deterioration process) of the blade has been modeled (see e.g., \cite{WANG20191}) by a homogeneous gamma process with scale parameter $\theta$ and shape parameter $\kappa$ and its pdf is given by
\begin{eqnarray}
\label{eqn:pdfblade}
f(t) = \frac{\theta^{\kappa t}}{\Gamma(\kappa t)} D^{\kappa t-1} e^{-\theta D}, t \geq 0; \theta, \kappa >0,
\end{eqnarray}
\noindent and the cdf is
\begin{eqnarray}
\label{eqn:cdfblade}
F(t) = \frac{\gamma(\kappa t, \theta D)}{\Gamma(\kappa t)}, t \geq 0; \theta, \kappa >0,
\end{eqnarray}
\noindent where $\gamma(. \, , .)$ and $\Gamma(.)$  represent the incomplete and complete gamma function respectively, i.e.
$\gamma(a,b) = \int_{b}^{\infty} z^{a-1} e^{-z} dz, a,b>0,$ and $\Gamma(a) = \int_{0}^{\infty} z^{a-1} e^{-z} dz, a>0.$

The specific values of the parameters in this modeling are $\kappa = 0.542$ and $\theta = 1.147$. The value of the  critical degradation level is $D = 20$cm for replacement (Process A), and  $D = 5$cm for minimal repair (Process A-MR). 

The time to failure of the motor has been modeled \cite{WANG20191} by a  Weibull distribution with scale parameter $\lambda$ and shape parameter $\nu$ and pdf  given by
\begin{equation}
 f(t;\lambda,\nu) = \begin{cases} \frac{\nu}{\lambda} \left( \frac{t}{\lambda} \right)^{\nu-1} e^{-( t/\lambda)^{\nu}} \text{  for $t\geq0$} \\0 \text {  otherwise} \end{cases}
\end{equation}

The specific values of the parameters are $\lambda= 3.5$ and  $\nu= 0.209$.
The plots of the simulation and the proposed approximation renewal function for Process A (Eq. \ref{eqn:approxA1}) are given in Fig. \ref{Fig:CaseStudy1}(a)  and the plots of the simulation and the proposed approximation renewal function for Process A-MR (Eq. \ref{eqn:approxAMR}) are given in Fig. \ref{Fig:CaseStudy1}(b). As can be seen the proposed approximations match the simulated results very well.
 In these plots, we provide also for reference/comparison the approximation curves of the standard renewal function of the blade in the case where the blade is considered as a stand-alone component without the limitation (``bound'')  imposed on it when working in conjunction with the CR component.
In Fig.  \ref{Fig:CaseStudy1}(a), the  standard renewal function  is  the classical renewal function whose approximation is given in Eq. \ref{eqn:appB}, and in Fig. \ref{Fig:CaseStudy1}(b) the  standard renewal function  is  the g-renewal function whose approximation is given in Eq. \ref{eqn:approxg}. For instance,
$t=100$, the average number of renewals of the blade would be 2.312 in the stand-alone case if the blade is fully replaced each time (Fig. \ref{Fig:CaseStudy1}(a)), whereas the actual number given by Process A is 0.3897. Also, by time $t=100$, the average number of renewals of the blade would be  519.9.  in the stand-alone case if the blade is minimally repaired each time (Fig. \ref{Fig:CaseStudy1}(b)), whereas the actual number given by Process A-MR is 82.02 (We note that the plot of $\hat W_{A_{MR}}(t)$ in Fig.\ref{Fig:CaseStudy1}(b) starts flattening after $2\times 10^5$ time units).

\begin{figure}[t]
\centering  
\vspace{-1mm} 
\subfigure[Process A]{
\label{Fig.CaseStudy1.1} 
\includegraphics[width=0.23\textwidth]{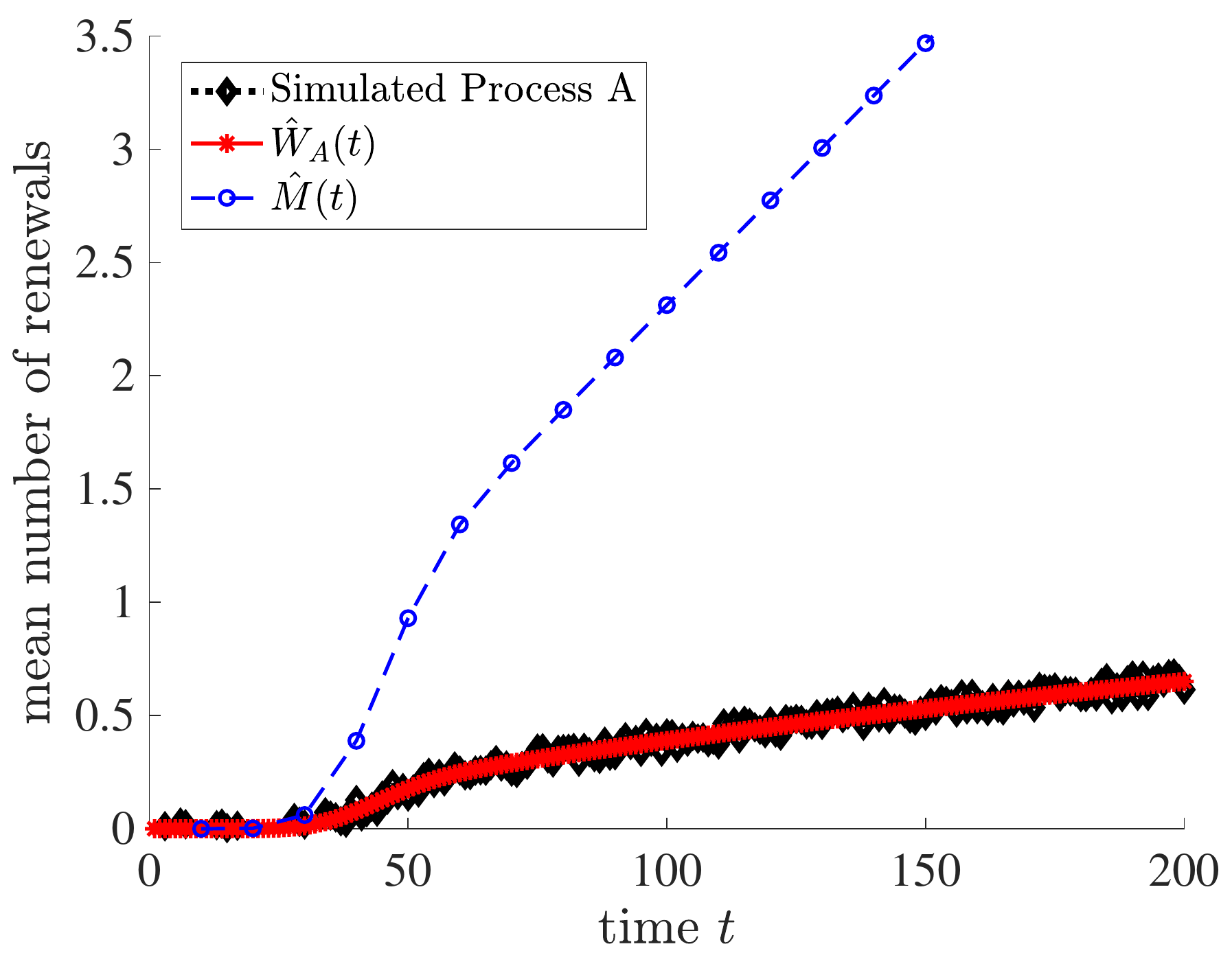}}
\subfigure[Process A-MR]{
\label{Fig.CaseStudy1.2} 
\includegraphics[width=0.23\textwidth]{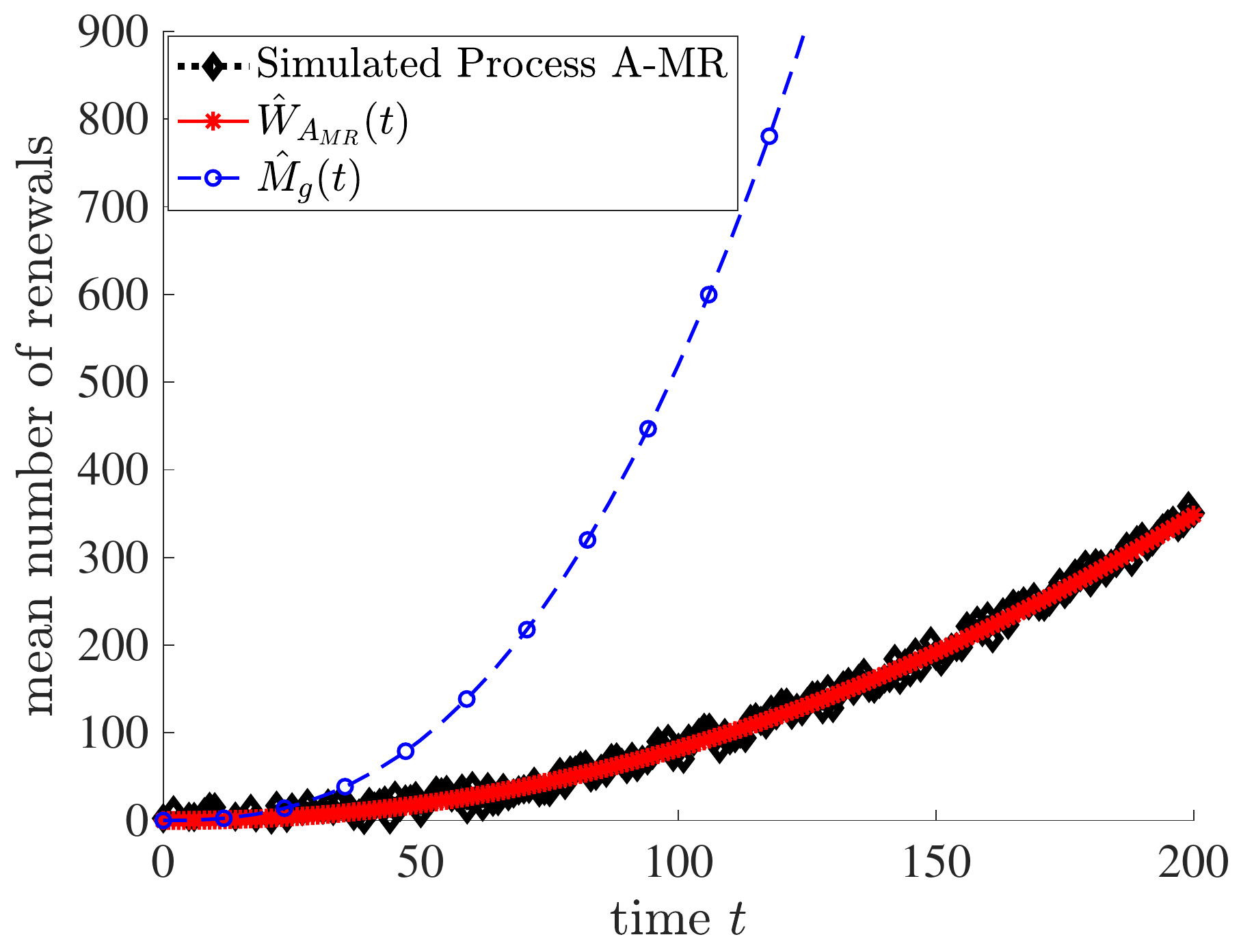}}
\caption{Process A and Process A-MR on wind turbine blade (as NC) and motor (as CR). 
}
\vspace{-4mm} 
\label{Fig:CaseStudy1}
\end{figure}

\subsection{Blade and HSS bearing.}
\label{S:casestudy2}
 In the previous illustration, we considered the number of the renewals of the blade (where ``renewals'' are either replacements or minimal repairs) until the motor collapses. No preventive replacement was done on the motor every time the blade failed since  the cost of replacing the motor every  time is presumably forbidding. In the following illustration, we present a situation where process B and process B-MR apply, that is, a situation where it is more plausible to preventively replace the CR component  every time the NC component fails. For this purpose, we consider the blade as the NC component and  the ``gearbox high speed shaft (HSS)'' bearing \cite{shafiee2015optimal}  as the CR component.  The gearbox HSS bearing is considered to be one of the most critical components in wind turbines \cite{shafiee2015optimal}.  Similarly to \cite{shafiee2015optimal}, where when one of the bearings fails (more precisely, its  degradation level  reaches its
critical fracture size),  corrective replacement is done on it  while 
 a PM action is taken for the non-failed bearings (such as the pitch bearings, the main bearing, the generator bearing, or the gearbox intermediate speed shaft  bearing \cite{shafiee2015optimal}), we assume  that whenever the blade fails then the gearbox HSS bearing is preventively replaced at that time, and we are interested in the average number of  renewals of the blade until the gearbox HSS bearing fails. If the policy is to replace the  blade every time it fails,  then process B applies, whereas if the policy is to minimally repair the blade every time it fails, then Process B-MR applies. The lifetimes (deterioration process) of either the blade or the gearbox HSS bearing  have been modeled  (see, e.g., \cite{shafiee2015optimal}) by the same overall gamma process as in Eq. \ref{eqn:pdfblade} and \ref{eqn:cdfblade}. The specific parameters  for the lifetime of the blade are $\kappa= 0.542$, and $\theta = 1.147$, and  its critical degradation level is $D = 20cm$ for replacement and $D = 5cm$ for minimal repair.
The specific parameters  for the lifetime of the gearbox HSS bearing are $\kappa= 0.724$, and $\theta = 1.52$, and  its critical degradation level  is $D = 30cm$ \cite{shafiee2015optimal}. 
The plots of the simulation and the proposed approximation renewal function for Process B (Eq. \ref{eqn:TB3}) are given in Fig. \ref{Fig:CaseStudy2}(a)  and the plots of the simulation and the proposed approximation renewal function for Process B-MR (Eq. \ref{eqn:BMR1})  are given in Fig. \ref{Fig:CaseStudy2}(b). As can be seen the proposed approximations match the simulated results very well.
 Similarly to Fig. \ref{Fig:CaseStudy1}, the approximation curves of the standard renewal function of
the blade when considered as a stand-alone component are also plotted.   For instance, by time $t=200$, the average number of renewals of the blade would be  4.625 in the stand-alone case if the blade is fully replaced each time (Fig. \ref{Fig:CaseStudy2}(a)), whereas the actual number given by Process B is 3.415. Also, by time $t=60$, the average number of renewals of the blade would be 144.96 in the stand-alone case if the blade is minimally repaired each time (Fig. \ref{Fig:CaseStudy2}(b)), whereas the actual number given by Process B-MR is 75.94.

\begin{figure}[t]
\centering  
\vspace{-1mm} 
\subfigure[Process B]{
\label{Fig.CaseStudy2.1} 
\includegraphics[width=0.23\textwidth]{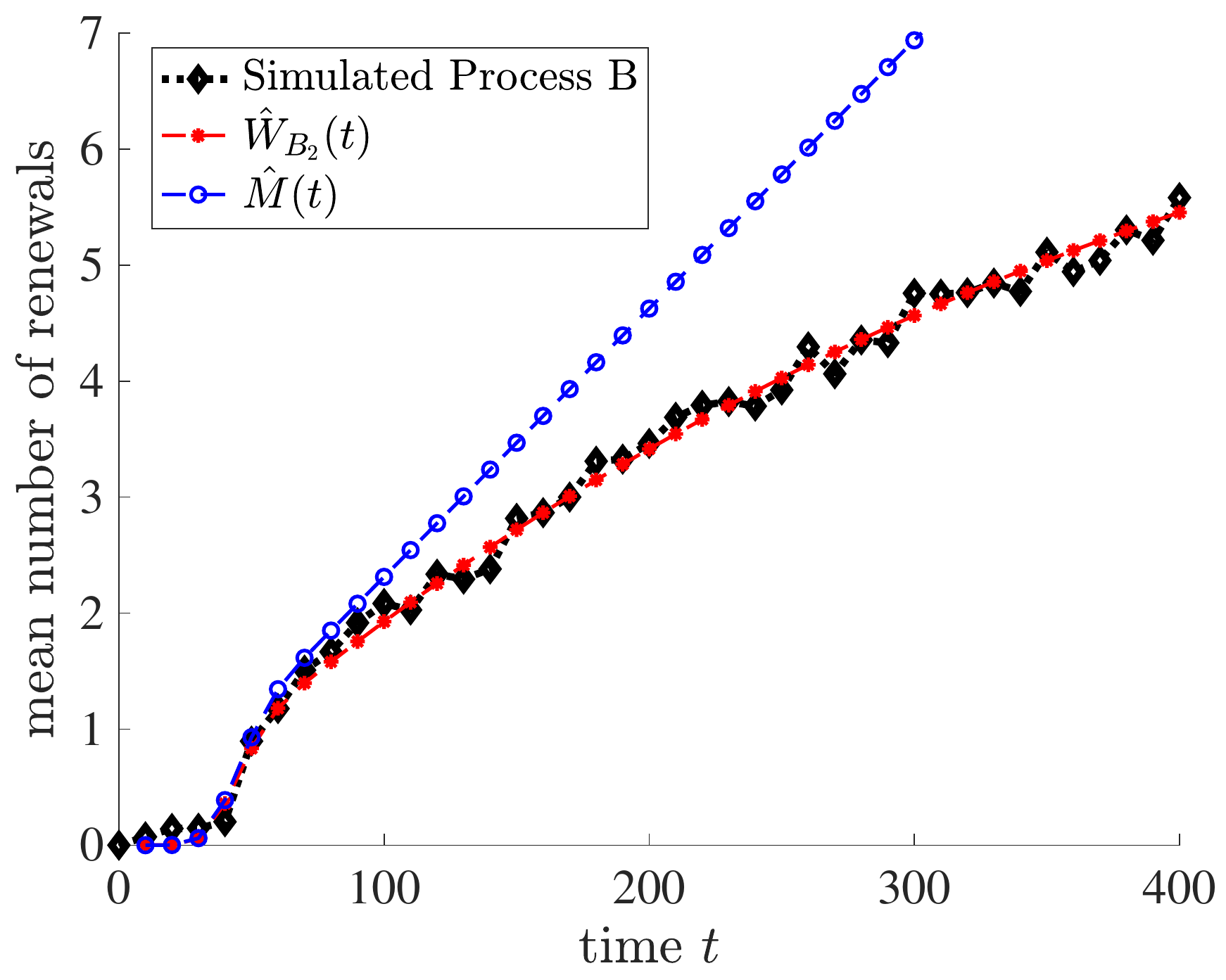}}
\subfigure[Process B-MR]{
\label{Fig.CaseStudy2.1} 
\includegraphics[width=0.23\textwidth]{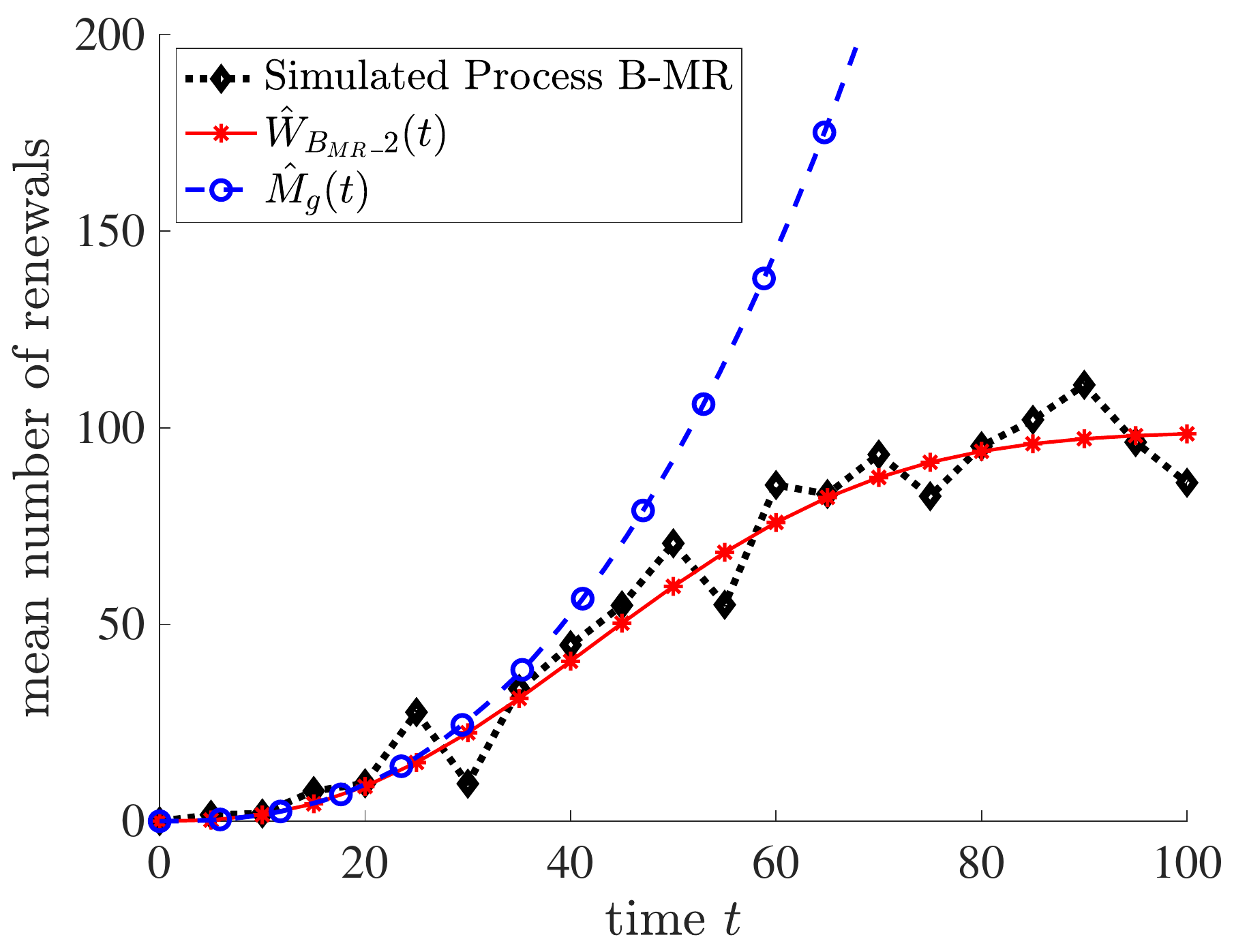}}
\caption{Process B and Process B-MR on wind turbine blade (as NC) and gearbox HSS bearing (as CR).}
\vspace{-4mm} 
\label{Fig:CaseStudy2}
\end{figure}

\section{conclusion.}\label{S:Conclusion} 

In this study, we developed new ``bounded" renewal theory functions for a  system consisting of a critical (CR) and non-critical (NC) component.
As long as the  CR component is still in operation, the NC component  can undergo corrective maintenance (replacement or minimal repair) whenever it fails. Whenever
the NC component fails, the CR component can optionally
be preventively replaced, but when the CR component fails, the system operation is irrevocably terminated.
  We formulated the new bounded renewal functions based on existing (classical and generalized) renewal theory functions and obtained  close approximations for them, validated by  simulations on various combinations of distributions, including actual distributions for wind turbine components.
In our upcoming work, we intend to explore standby arrangements as well as periodic/opportunistic maintenance policies.

\section{Appendix.}
\label{S:Appendix}

The proofs of the  theorems (and the lemmas they depend on) are given below.

\subsection{Proof of Lemma 1.}
\label{app:lemma1_proof}
\begin{proof}

Consider the counting procedure $N_{A}^{\rm nr}(t)$ , which counts the number of renewals in Process A without reevaluation of CR. 

The probability of $ N_{A}^{\rm nr}(t)\hiderel \geq n$ can be expressed as:
\begin{eqnarray}  \label{eqn:NASt}
P( N_{A}^{\rm nr}(t) \!\! \geq \!\!n)\!\!\!\!\!\!\!\!\!\!&&=P\Big((T_1 \!\leq \!t)
\!\cap\!(T_2 \!\leq \!t\!-\!T_1)\!\cap\!...\!\cap\!\nonumber \\
&&(T_n\! \leq\! t\!-\!S_{n-1}) \!\cap\!( \tilde T \!\geq\! T_n \!+\!S_{n-1})\Big).
\end{eqnarray}

Using the continuous lifetime distribution functions, this probability becomes:
\begin{eqnarray}\label{eqn:condiB}
 \!\!\!\!\!\!\!\!\!\!\!&&P( N_{A}^{\rm nr}(t)\!\!\hiderel \geq\!\! n)
 \!= \!\!\int_0^tf_T(x_1) \int_0^{t-x_1}\!\!\!\! f_T(x_2)...\nonumber \\
\!\!\!\!\!\!\!\!\!\!\!&&\int_0^{t-x_1-...-x_{n-1}}\!\!\!\!\!\!\!\!\!\!\!\!\!\!\!\!\!\!\!\!f_T(x_n)\bar F_{ \tilde T}(x_1\hiderel+...+x_n)dx_n... dx_2dx_1.
\end{eqnarray}

Now let $N_{A}^{\rm wr}(t)$ be the counting process that calculates the number of failures in Process A with a subsequent evaluation of CR after each occurrence of NC failure.

In this case, $P(N_{A}^{\rm wr}(t) \ge n)$ is given by:
\begin{eqnarray}
\!\!\!\!\!\!\!\!&&P(N_{A}^{\rm wr}(t)\hiderel \geq n)=P \Big((T_1\hiderel \leq t) \cap( \tilde T\hiderel \geq T_1 ) \cap\nonumber\\
\!\!\!\!\!\!\!\!&&(T_2\hiderel \leq t -T_1) \cap( \tilde T\hiderel \geq T_1+T_2|\tilde T\hiderel \geq T_1) \cap\nonumber\\
\!\!\!\!\!\!\!\!&&(T_3\!\!\hiderel \leq t \hiderel-T_1\hiderel-T_2) \hiderel\cap( \tilde T \hiderel \geq T_1\hiderel+T_2\hiderel+T_3|\tilde T \hiderel \geq T_1+T_2)\nonumber\\
\!\!\!\!\!\!\!&&\cap\,.\,.\,. \cap\nonumber\\
\!\!\!\!\!\!\!\!&&
( T_n\!\!\hiderel \leq t-S_{n-1})\cap( \tilde T \hiderel \geq T_n+S_{n-1}|\tilde T \hiderel \geq S_{n-1})\Big).
\end{eqnarray}
 
Using the continuous lifetime distribution functions, this probability is given by: 
\begin{eqnarray}\label{eqn:condiB}
 \!\!\!\!\!\!\!\!\!\!\!\!\!\!\!\!&&P( N_{A}^{\rm wr}(t)\!\!\hiderel \geq\!\! n) \!\!= \!\!\int_0^t\!\!\!f_T(x_1)\bar F_{ \tilde T}(x_1)\!\!\int_0^{t-x_1} \!\!\!\!\! \!\!f_T(x_2)\bar F_{ \tilde T}(x_2|x_1)... \nonumber\\
\!\!\!\!\!\!\!\!\!\!\!\!\!\!\!\!&&\int_0^{t-x_1-...-x_{n-1}}\!\!\!\!\!\!\!\!\!\!\!\!\!\!\!\!\!\!\!\!f_T(x_n)\bar F_{ \tilde T}(x_n|x_1+...+x_{n-1})dx_n... dx_2dx_1.
\end{eqnarray}

Given that $\bar F_{ \tilde T}(x|y)=\frac{\bar F_{ \tilde T}(x+y)}{\bar F_{ \tilde T}(y)}$, we have that
\begin{eqnarray}\label{eqn:condiB1}
 &&P( N_{A}^{\rm wr}(t)\!\!\hiderel \geq\!\! n) \nonumber\\
 &&= \!\!\int_0^tf_T(x_1)\bar F_{ \tilde T}(x_1)\int_0^{t-x_1} \!\! \!\!f_T(x_2)\frac{\bar F_{ \tilde T}(x_1+x_2)}{\bar F_{ \tilde T}(x_1)}... \nonumber\\
&&\int_0^{t-x_1-...-x_{n-1}}\!\!\!\!\!\!\!\!\!\!\!\!\!\!\!\!\!\!\!\!f_T(x_n)\frac{\bar F_{ \tilde T}(x_1+...+x_{n})}{\bar F_{ \tilde T}(x_1+...+x_{n-1})}dx_n... dx_2dx_1 \nonumber\\
&&= \!\!\int_0^tf_T(x_1) \int_0^{t-x_1}\!\!\!\! f_T(x_2)...\int_0^{t-x_1-...-x_{n-2}}\!\!\!\!\!\!\!\!\!\!\!\!\!\!\!\!\!\!\!\!f_T(x_{n-1}) \nonumber\\
&& \int_0^{t-x_1-...-x_{n-1}}\!\!\!\!\!\!\!\!\!\!\!\!\!\!\!\!\!\!\!\!\!\!\!\!f_T(x_n)\bar F_{ \tilde T}(x_1\!\hiderel+\!x_2\!\hiderel+\!...\!+\!x_n)dx_n... dx_2dx_1. \nonumber\\
&&=  P(N_{A}^{\rm nr}(t)\!\!\hiderel \geq\!\! n).
\end{eqnarray}
\end{proof}

\subsection{Proof of Theorem \ref{theo:A}.}
\label{app:theorem1_proof}

\begin{proof}
According to the definition of the mean number of renewals in any renewal process, the mean number of renewals in Process A is:
$$W_A(t) = \sum_{n=1}^\infty P(N_A(t)=n) \cdot n = \sum_{n=1}^\infty P(N_A(t)\ge n).$$

Due to Lemma \ref{lemma:A}, we can compute the probability $P(N_A(t)=n)$ without considering the reevaluation of CR every time NC fails. That is, $P(N_A(t)\ge n) = P(N_{A}^{\rm nr}(t) \ge n)$  from Eq. \ref{eqn:NASt}. Therefore,
\begin{dmath}  \label{eqn:PANA}
P(N_{A}(t)\ge n) = P(N_{A}^{\rm nr}(t)  \hiderel \geq n)\nonumber \\
=P\Big((T_1 \hiderel\leq t)
\cap(T_2 \hiderel\leq t-T_1)\cap...
\nonumber \\
\cap (T_n \hiderel\leq t-S_{n-1}) \cap( \tilde T \hiderel\geq T_n +S_{n-1})\Big)\nonumber \\
= P( T_1+T_2+... +T_n \hiderel\leq t\cap \tilde T \hiderel\geq  S_n) \nonumber \\
= P( S_{n}   \hiderel\leq  t \cap \tilde T \hiderel\geq   S_n)\nonumber \\
=P(S_n \hiderel\leq  \min(t,\tilde T)).
\end{dmath}


We note that when the event $\{\tilde T \hiderel <t\}$ occurs, the random variable $\min(t,\tilde T)$ equals $\tilde T$ and, as a result, has the same distribution function as $\tilde T$ in the range from 0 to $t$. The random variable $\min(t,\tilde T)$ also equals $t$ when the event $\{\tilde T \hiderel \geq t\}$ occurs.
Therefore, Eq. \ref{eqn:PANA} can be rewritten as
\begin{eqnarray} \label{eqn:} 
P(  N_{A}(t)  \geq n) \!\!\!\!\!\!\!\!&&=
 P(\tilde T \geq  t) P(S_n \leq  t|\tilde T  \geq   t) \nonumber \\ 
 &&+ P(\tilde T \hiderel < t) P(S_n \leq  \tilde T |\tilde T < t).
\end{eqnarray}
Let $  W_{A}(t)$ be the average number of renewals in $(0,t]$. The bounded renewal function can thus be expressed as follows:
\begin{eqnarray} \label{eqn:BCR1} 
W_{A}(t)\hiderel \!\!\!\!\!\!\!\!\!\!&&= \E { N_{A}(t)} \hiderel=\sum_{n=1}^{\infty}P(  N_{A}(t) \hiderel \geq n)\nonumber\\
&&= P(\tilde T\hiderel \geq   t)\sum_{n=1}^{\infty}P(S_n\hiderel \leq  t|\tilde T\hiderel \geq   t) \nonumber\\
&&+P(\tilde T \hiderel < t)\sum_{n=1}^{\infty}P(S_n\hiderel \leq  \tilde T |\tilde T \hiderel <\!\! t). 
\end{eqnarray}

Since $\tilde T $ and $S_n$ are independent, 
$P(S_n\hiderel \leq  t|\tilde T\!\!\hiderel \geq   \!\!t) =P(S_n\hiderel \leq  t)$ holds.
In addition, $\sum_{n=1}^{\infty}P(S_n\hiderel \leq  t)$ is equivalent to the basic classical renewal equation $M(t)$ (see Eq. \ref{eqn:class}). 

Therefore, Eq. \ref{eqn:BCR1} equals
\begin{eqnarray}
 W_{A}(t)\!\!\!\!\!\!\!\!&&=M(t)  \bar F_{\tilde T}(t)\!\!+\!\!\sum_{n=1}^{\infty}P(S_n\!\!\hiderel \leq\!\!  \tilde T \cap \tilde T\!\! \hiderel <\!\! t) \nonumber\\
&&=M(t)  \bar F_{\tilde T}(t)\!\!+\!\! \int_0^t \!\!\sum_{n=1}^{\infty}P(S_n\hiderel \leq  x ) f_{\tilde T}(x)dx \nonumber\\
\iff  W_{A}(t)\!\!\!\!\!\!\!\!&&\hiderel=M(t)  \bar F_{\tilde T}(t)+\int_0^t  M(x)  f_{\tilde T}(x)dx.\nonumber
\end{eqnarray}
\end{proof}

\subsection{Proof of Theorem \ref{theo:AMR}.}
\label{app:theorem2_proof}

\begin{proof}

Let  $ N_{A_{MR}}(t)$ denote the counting process that counts the number of Process A-MR renewals that occurred during the time interval $(0,t]$.
 
Let $T_i$ indicate the lifetime of NC between the ($i-1$)th and $i$th renewals (minimal repairs). We know that $P(T_i \le t) = F_T(t|y=S_{i-1})$, where now $S_i=\sum^i_{j=1}{T_j}$.

Due to Lemma \ref{lemma:AMR}, we can compute the probability $P(N_{A_{MR}}(t)=n)$ without considering the reevaluation of CR every time NC fails. 

The probability of $ N_{A_{MR}}(t)\geq n$ is written as:
\begin{eqnarray}
\!\!\!\!\!\!\!\!\!\!\!\!\!\!\!\!\!\!\!\!&&P( N_{A_{MR}}(t)\hiderel\! \geq\! n)\!=P\Big((T_1 \hiderel\leq t)
\cap(T_2 \hiderel\leq t-T_1)\cap...\cap
\nonumber \\
&& (T_n \hiderel\leq t-S_{n-1}) \cap( \tilde T \hiderel\geq S_{n})\Big)
.\end{eqnarray}
 
By using the corresponding distribution functions, the probability of $ N_{A_{MR}}(t) \geq n$ is 
\begin{eqnarray}\label{eqn:NAMRgpden}
\!\!\!\!\!\!\!\!\!\!\!\!\!\!\!\!\!\!\!&& P( N_{A_{MR}}(t)\!\!\hiderel \geq \!n) \!\!= \!\!\int_0^t\!\!f_T(x_1|0)\!\!\int_0^{t-x_1} \!\!\!\!\!\!\!\!\! \!\!f_T(x_2|x_1)... \nonumber\\
\!\!\!\!\!\!\!\!\!\!\!\!\!\!\!\!\!\!&&\int_0^{t-x_1-x_2-...-x_{n-1}}\!\!\!\!\!\!\!\!\!\!\!\!\!\!\!\!\!\!\!\!\!\!\!\!\!\!\!\!\!\!\!\!\!\!\!\!\!\!\!\!\!f_T(x_n|x_1\!+\!...\!+\!x_{n-1})\bar F_{ \tilde T}(x_1\!+\!...\!+\!x_n)dx_n... dx_{2}dx_1
.\end{eqnarray}
By a change of variables, Eq. \ref{eqn:NAMRgpden} can be rewritten as:
\begin{eqnarray}\label{eqn:NBn}
\!\!\!\!\!\!\!\!\!\! \!\!\!\!\!\!\!\!\!\!&&P( N_{A_{MR}}(t)\!\!\hiderel \geq\!\! n) \nonumber\\
\!\!\!\!\!\!\!\!\!\!\!\!\!\!\!\!\!\!&&=\underbrace{ \!\!\int_0^t\!\!\!\!\!f_T(y_1|0)\!\!\!\int_{y_1}^{t} \!\!\!\! \!\!f_T(y_2\!\!-\!\!y_1|y_{1})...\int_{y_{n-2}}^{t} \!\! \!\!\!\!\!\!\!\!\!\!f_T(y_{n-1}\!\!-\!\!y_{n-2}|y_{n-2})}_{A}\nonumber\\
\!\!\!\!\!\!\!\!\!\!\!\!\!\!\!\!\!\!&&\cdot \underbrace{\int_{y_{n-1}}^{t}\!\!\!f_T(y_n-y_{n-1}|y_{n-1})\bar F_{ \tilde T}(y_n)dy_n}_{B}\underbrace{dy_{n-1}...dy_{2}dy_{1}}_{A}
.\end{eqnarray}

Then, by doing the integration by parts for the last integral ($B$), we obtain:
\begin{eqnarray}\label{eqn:NBny}
B\!\!\!\!\!\!\!\!\!\!&&=\int_{y_{n-1}}^{t}\!\!\!f_T(y_n\hiderel -y_{n-1}|y_{n-1})\bar F_{ \tilde T}(y_n)dy_n \nonumber\\
&&=\underbrace{\bar F_{ \tilde T}(t)F_T(t \hiderel-y_{n-1}|y_{n-1})}_{{C}}\nonumber\\
&&+\underbrace{\int_{y_{n-1}}^{t}F_T(y_n\hiderel-y_{n-1}|y_{n-1})f_{ \tilde T}(y_n)dy_n}_{D}
.\end{eqnarray}

By using  $P_n(y,t)=\int_y^t P_{n-1}(y_1,t)f_T (y_1-y|y)dy_1$ (see Eq. \ref{eqn:pky}), we obtain
\begin{dmath}
A \cdot C
=\bar F_{ \tilde T}(t)\int_0^t P_{n-1}(y_1,t)f_T (y_1|0)dy_1=\bar F_{ \tilde T}(t)P_n(0,t)
.\end{dmath}

Furthermore, by change the order of integration, we can also obtain the following equation:
\begin{eqnarray}
\!\!\!\!\!\!\!\!\!&&A\cdot D=\!\!\int_0^t f_{ \tilde T}(y_n)\!\!\int_0^{y_n}f_T(y_1|0)\int_{y_1}^{y_n} \!\! \!\!f_T(y_2\hiderel-y_1|y_{1})... \nonumber\\ 
\!\!\!\!\!\!&&\int_{y_{n-2}}^{y_n}\!\!\!\!\!\!\!\!f_T(y_{n\!-\!1}\!\!\hiderel-\!\!y_{n\!-\!2}|y_{n\!-\!2}) F_{  T}(y_{n}\!\!-\!\!y_{n\!-\!1}|y_{n\!-\!1})dy_{n\!-\!1}... dy_{2}dy_1dy_n\nonumber\\
\!\!\!\!\!\!\!\!&&=\!\!\int_0^t f_{ \tilde T}(y_n)\int_0^{y_n} P_{n-1}(y_1,y_n)f_T (y_1|0)dy_1dy_n\nonumber\\
\!\!\!\!\!\!\!\!\!\!\!\!&&=\!\!\int_0^t f_{ \tilde T}(y_n)P_{n}(0,y_n)dy_n
.\end{eqnarray}

Since $P_{n}(0,t)=P(N_g(t)\hiderel\geq n)$ (see Eq. \ref{eqn:pky1}), we have that 
\begin{eqnarray}
\!\!\!\!\!\!\!\!\!\!\!\!&&P( N_{A_{MR}}(t)\!\!\hiderel \geq\!\! n)
 \hiderel=A\cdot C+A\cdot D\nonumber\\
\!\!\!\! \!\!\!\!\!\!\!\!&&=\bar F_{ \tilde T}(t)P(N_g(t)\!\!\geq\!\! n)\!\!+\!\!\int_0^t f_{ \tilde T}(y_n)P(N_g(y_n)\!\!\geq \!\!n)dy_n
.\end{eqnarray}

Let $  W_{A_{MR}}(t)$  be the average number of renewals throughout the time period $(0,t]$. The renewal function can thus be expressed as follows:
\begin{eqnarray}\label{eqn:MAMR}
\!\!\!\!&& W_{A_{MR}}(t)=\sum_{n=1}^\infty  P( N_{A_{MR}}(t)\hiderel \geq n)\nonumber\\ 
\!\!\!\!&&=\!\!\sum_{n=1}^\infty \!\!\bar F_{ \tilde T}(t)P(N_g(t)\!\!\hiderel\geq\!\! n)\!\!+\!\!\sum_{n=1}^\infty\!\! \int_0^t \!\!f_{ \tilde T}(y_n)P(N_g(y_n)\!\!\hiderel\geq\!\! n)dy_n\nonumber\\ 
\!\!\!\!&&\iff  W_{A_{MR}}(t)\!\!\hiderel=\!\! \bar F_{ \tilde T}(t)M_g (t)\!\!\hiderel+\!\!\int_0^t\!\! f_{ \tilde T}(y)M_g(y)dy\nonumber
.\end{eqnarray}

\end{proof}

\subsection{Proof of Theorem \ref{theo:B}.}
\label{app:Theorem3_proof}

\begin{proof}

Let $N_B(t)$ be the counting process that counts the number of Process B renewals in  $(0,t]$. 
Based on the definition for any renewal process, Process B has a mean number of renewals given by:
\begin{dmath*}\label{eqn:WA1}
 W_{B}(t)\!\hiderel=\!\E{ N_{B}(t)}\!\hiderel=\!\!\!\sum_{n=1}^\infty\!\! nP( N_{B}(t)\hiderel \!=\!n)
\!\hiderel=\!\!\sum_{n=1}^\infty \!\!P( N_{B}(t)\!\hiderel \geq  \!n)
.\end{dmath*}

The probability that CR survives NC in each renewal  is $P( \tilde T_i\hiderel \geq T_i)$, where $i \ge 1$. Consequently, the probability that $N_B(t)$ is larger than or equal to $n$ can be expressed as follows:
\begin{dmath} \label{eqn:prob}
P( N_{B}(t)\!\!\hiderel \geq\!\! n)\!\!\hiderel=\!\!P\Big((T_1\!\!\hiderel \leq\!\! t) \!\!\hiderel\cap\!\!( \tilde T_1\!\!\hiderel \geq \!\!T_1 )\!\!\hiderel\cap\!\!
 (T_2\!\!\hiderel \leq\!\! t\!\! \hiderel-\!\!T_1) \!\!\hiderel\cap\!\!( \tilde T_2\!\!\hiderel \geq \!\!T_2)\!\!\hiderel\cap \\
 (T_3\!\!\hiderel \leq \!\!t \!\!-\!\!T_1\!\!\hiderel-\!\!T_2) \!\! \hiderel\cap\!\! ( \tilde T_3\!\! \hiderel \geq \!\!T_3)\!\! \hiderel\cap...\cap\!\!
( T_n\!\!\hiderel \leq \!\!t \!\!\hiderel-\!\!S_{n-1})\!\! \hiderel \cap\!\! ( \tilde T_n \!\!\hiderel \geq\!\! T_n)\Big) 
.\end{dmath}

Re-expressing in terms of the relevant distribution functions, Eq. \ref{eqn:prob} becomes:
\begin{eqnarray}\label{eqn:NBn}
\!\!\!\!\!\!\!\!&& \!\!P(N_{B}(t)\!\!\hiderel \geq\!\! n) \!\!\nonumber\\ 
\!\!\!\!\!\!\!\!&&= \!\!\int_0^t\!\!\!\!f_T(x_1)\bar F_{ \tilde T}(x_1)\!\!
 \int_0^{t-x_1}\!\!\!\! \!\!\!\!\!\! \!\!f_T(x_2)\bar F_{ \tilde T}(x_2)... \!\!
\int_0^{t-x_1-...-x_{n-2}}\!\!\!\!\!\!\!\!\!\!\!\!\!\!\!\!\!\!\!\!\!\!\!\!\!\!\!\!\!\!\!\!\!\! f_T(x_{n-1})\bar F_{ \tilde T}(x_{n-1})\nonumber\\ 
 \!\!\!\!\!\!\!\!&&\cdot\int_0^{t-x_1-...-x_{n-1}}\!\!\!\!\!\!\!\!\!\!\!\!\!\!\!\!\!\!\!\!f_T(x_n)\bar F_{ \tilde T}(x_n)dx_ndx_{n-1}...dx_2dx_1
.\end{eqnarray}

Letting 
$
 f_{B}(x)=f_T(x)\bar F_{ \tilde T}(x),
$
and 
$ F_{B}(x)=\int_0^xf_T(u)\bar F_{ \tilde T}(u)du,$
Eq. \ref{eqn:NBn} can be rewritten as 
\begin{eqnarray}\label{eqn:NBn_1}
\!\!\!\!\!\!\!\!&& \!\!P( N_{B}(t)\!\!\hiderel \geq\!\! n) \!\!\hiderel= \!\!\int_0^t\!\! f_{B}(x_1)
\int_0^{t-x_1}\!\!\!\! \!\! \!\! f_B(x_2)... \nonumber\\ 
\!\!\!\!\!\!\!\!&&\!\!
\cdot\int_0^{t-x_1-...-x_{n-2}}\!\!\!\!\!\!\!\!\!\!\!\!\!\!\!\!\!\!\!\!\!\!\!\!\!\!\!\!\!\! f_B(x_{n-1})
 F_B((t\!\!-\!\!x_1\!\!-\!\!...\!\!-\!\!x_{n-2})\!\!-\!\!x_{n-1})dx_{n-1}...dx_2dx_1.\nonumber\\ 
 \!\!\!\!\!\!\!\!&&
\end{eqnarray}

Eq. \ref{eqn:NBn_1} can be seen as the convolution of $ F_{B}(x)$ with $ f_{B}(x)$ for $n-1$ times. 
The  Laplace transform $\mathcal L( )$ of  probability $P( N_{B}(t)\!\!\hiderel \geq\!\! n)$ is equal to  $\mathcal L( P( N_{B}(t)\!\!\hiderel \geq\!\! n))=  f_{B}^*(s)^{n-1}\frac{ f_{B}^*(s)}{s}=\frac{1}{s}(  f_{B}^*(s))^n$, and therefore,
the Laplace transform of $ W_{B}(t)$ is equal to 
\begin{dmath}\label{eqn:LapNAt}
 W_{B}^*(s)=\sum_{n=1}^\infty \frac{1}{s} ( f_{B}^*(s))^n\hiderel=\frac{ f_{B}^*(s)}{s(1- f_{B}^*(s))}
.\end{dmath}

Applying the inverse Laplace transform of $ W_{B}^*(s)=\frac{ f_{B}^*(s)}{s(1- f_{B}^*(s))}\Rightarrow W_{B}^*(s)=\frac{ f_{B}^*(s)}{s}+ W_{B}^*(s) f_{B}^*(s)$, we finally obtain the renewal function $ W_{B}(t)$ of Process B  as:
\begin{dmath}
 W_{B}(t)=  F_{B}(t)+\int_0^t  W_{B}(t-x) f_{B}(x)dx=  F_{B}(t)+\int_0^t  w_{B}(t-x) F_{B}(x)dx
.\end{dmath}
\end{proof}

\subsection{Proof of Theorem \ref{theo:Bapp}.}
\label{app:Theorem4_proof}

The proof of Theorem 4 depends on the following lemmas.


\begin{lemma} \label{lem:WB1}
The renewal function $W_B(t)$ for  Process B can be approximated by the function 
\begin{dmath}\label{eqn:TB2}
  \hat W_{B_1}(t) 
  \hiderel=\frac{p_B}{1-p_B}(1-\E{p_B^{N(t)}})
 ,\end{dmath} where $N(t)$ indicates the number of renewals in a classical renewal process involving only component NC.
\end{lemma}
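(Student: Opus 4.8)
\textbf{Proof plan for Lemma~\ref{lem:WB1}.}
The plan is to obtain $\hat W_{B_1}(t)$ by a \emph{decoupling} of the survival of CR from the realized lifetimes of NC, and then to evaluate the resulting expectation in closed form. I would start from the exact identity established in the proof of Theorem~\ref{theo:B}, namely that $P(N_B(t)\ge n)=P\big(\tilde T_1\ge T_1,\ \ldots,\ \tilde T_n\ge T_n,\ S_n\le t\big)$. Since the pairs $(T_i,\tilde T_i)$ are i.i.d., the events $\{\tilde T_i\ge T_i\}$ are already mutually independent with common probability $p_B$ (Eq.~\ref{eqn:PB}); the only dependence obstructing factorization is that $\bigcap_{i=1}^n\{\tilde T_i\ge T_i\}$ and $\{S_n\le t\}$ both involve the $T_i$. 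The approximation consists precisely in treating these two events as independent,
\[
P(N_B(t)\ge n)\ \approx\ p_B^{\,n}\,P(S_n\le t)=p_B^{\,n}\,P\big(N(t)\ge n\big),
\]
where $N(t)$ is the classical renewal counting process for NC alone, so that $P(S_n\le t)=P(N(t)\ge n)$ as in Section~\ref{S:class}.

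Equivalently --- and this is the framing I would emphasize --- the approximation replaces $N_B(t)$ by $\min\big(N(t),G\big)$, where $G$ is a geometric random variable, taken \emph{independent} of the NC renewal process, with $P(G\ge n)=p_B^{\,n}$; intuitively $G$ is the index of the first cycle in which CR outlasts is outlasted by (i.e.\ fails before) NC, and the decoupling amounts to assuming this index is independent of how fast NC accumulates renewals. Summing tail probabilities then gives $\hat W_{B_1}(t)=\sum_{n=1}^\infty P\big(\min(N(t),G)\ge n\big)=\sum_{n=1}^\infty P(N(t)\ge n)\,p_B^{\,n}$.

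To reach the stated closed form I would condition on $N(t)$: for a fixed integer $m$, $\E{\min(m,G)}=\sum_{n=1}^m P(G\ge n)=\sum_{n=1}^m p_B^{\,n}=\tfrac{p_B}{1-p_B}\big(1-p_B^{\,m}\big)$, a finite geometric sum. Taking the expectation over $m=N(t)$ yields
\[
\hat W_{B_1}(t)=\E{\tfrac{p_B}{1-p_B}\big(1-p_B^{\,N(t)}\big)}=\tfrac{p_B}{1-p_B}\big(1-\E{p_B^{\,N(t)}}\big),
\]
which is Eq.~\ref{eqn:TB2}. (The same identity can be reached without conditioning, by interchanging the order of summation in $\sum_{n\ge1}p_B^{\,n}P(N(t)\ge n)$ and resumming; or at the transform level by substituting $f_B^*(s)\approx p_B f_T^*(s)$ into $W_B^*(s)=f_B^*(s)/\big(s(1-f_B^*(s))\big)$ from the proof of Theorem~\ref{theo:B} and recognizing the inverse transform of $\tfrac{p_B}{1-p_B}(1-\E{p_B^{N(t)}})$.)

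The genuinely delicate point --- and the reason the statement is phrased with ``$\approx$'' rather than ``$=$'' --- is the decoupling step. The events $\{S_n\le t\}$ and $\bigcap_i\{\tilde T_i\ge T_i\}$ are positively correlated, since both are favored by small values of the $T_i$, so $p_B^{\,n}P(S_n\le t)\le P(N_B(t)\ge n)$ and $\hat W_{B_1}(t)$ systematically underestimates $W_B(t)$. I would therefore present Lemma~\ref{lem:WB1} as an approximation identity, and note that this one-sided bias is consistent with the ordering $W_B(t)>\hat W_{B_2}(t)>\hat W_{B_1}(t)$ reported in Section~\ref{sec:exampleB}: the sharper approximation $\hat W_{B_2}(t)$ of Theorem~\ref{theo:Bapp} is obtained by replacing the ``average of $p_B^{\,N(t)}$'' by the single term $p_B^{\,M(t)}$ evaluated at the mean $M(t)=\E{N(t)}$, which by convexity of $x\mapsto p_B^{\,x}$ (Jensen) is not smaller, nudging the estimate back toward $W_B(t)$.
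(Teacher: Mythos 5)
Your proposal is correct and follows essentially the same route as the paper: the key approximation --- treating CR's survival of NC in each cycle as an independent Bernoulli($p_B$) event decoupled from the renewal epochs, leading to $\hat W_{B_1}(t)=\sum_{n\ge 1}p_B^{\,n}P(N(t)\ge n)$ and then the geometric resummation to $\tfrac{p_B}{1-p_B}\bigl(1-\E{p_B^{N(t)}}\bigr)$ --- is exactly the paper's, and your tail-probability summation is algebraically equivalent to the paper's telescoping of $\sum_n n\,P(N_B(t)=n)$ under the approximation $P(N_B(t)=n)\approx P(N(t)=n)p_B^{\,n}+P(N(t)>n)p_B^{\,n}(1-p_B)$. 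Your added remarks (the $\min(N(t),G)$ reading, the positive-correlation argument for why $\hat W_{B_1}$ underestimates $W_B$, and the Jensen comparison with $\hat W_{B_2}$) are correct but are embellishments on the same argument rather than a different proof.
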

\begin{proof}


The probability $P(N_{B}(t) = n)$ of having exactly $n$ renewals by time $t$ in Process B  can be related to the probability $P(N(t) = n)$ of having exactly $n$ renewals by time $t$ in a classical renewal process where only the NC component is present, as follows: Every renewal in the classical renewal process corresponds to a renewal in Process B as long as CR survives NC. That is, $P(N_{B}(t) = n)$ equals $P(N(t) = n)$ if  CR survives NC for $n$ consecutive renewals. 
Assuming that in each of these renewals the probability of CR surviving NC is approximately\footnote{We note that this is an approximation and not exact because $p_B$ (Eq. \ref{eqn:PB}) is computed for $t\rightarrow \infty$, i.e.,  $p_B=P( \tilde T > T)= \int_0^\infty  f_{ T}(x)\bar F_{\tilde T}(x)dx = \lim_{t\rightarrow \infty}\int_0^t  f_{ T}(x)\bar F_{\tilde T}(x)dx$, whereas each renewal has a finite time duration which is, moreover, different in each renewal.}   equal to  $p_B=\int_0^\infty  f_{ T}(x)\bar F_{\tilde T}(x)dx$ (Eq. \ref{eqn:PB}), we have that this event has probability $p_B^n$. On the other hand, if CR survives NC for $n$ consecutive renewals and dies before NC during the $(n+1)$th renewal, then the number of renewals in Process B is going to remain at $n$,  while the number of renewals in the classical renewal process keeps increasing for ever. 
Therefore, $P( N_{B}(t)=n)$ can be approximated by 
 $$P(N_{B}(t)=n)\approx P(N(t)=n)p_B^{n}+P(N(t) >n)p_B^{ n}(1-p_B).$$
Hence, the mean number of renewals in Process B $ W_{B}(t) =\sum_{n=0}^\infty n P( N_{B}(t)\hiderel =n)$ 
can be approximated by a function $\hat W_{B_1}(t)$ given as:
\begin{eqnarray}\label{eqn:AappW1}
\!\! \!\!\!\! \hat W_{B_1}(t) \!\! \!\!\!\!\!\!\!\!\!\!&&=\!\!\sum_{n=0}^\infty n[P(N(t)\hiderel =n)p_B^{ n}\!+\!P(N(t)\hiderel \!\!>\!\!n)p_B^{ n}(1\!-\!p_B)]\nonumber \\
\!\!\!\!\!\! &&=\sum_{n=0}^\infty n\{[P(N(t)\!\!\hiderel \geq \!\!n)\!\!-\!\!P(N(t)\!\!\hiderel \geq\!\! n+1)]p_B^{ n}\nonumber \\
 \!\!\!\!\!\!&&+P(N(t)\hiderel \geq n+1)p_B^{n}(1-p_B)\}\nonumber \\
\!\!\!\!\!\! &&=\sum_{n=0}^\infty n[P(N(t)\hiderel \geq n)p_B^{ n}-P(N(t)\hiderel \geq n\hiderel+1)p_B^{ n+1}]\nonumber \\
\!\!\!\!\!\!&&  = 1\,[P(N(t)\hiderel \geq 1)p_B-P(N(t)\hiderel \geq 2)p_B^{ 2}]\nonumber \\
\!\!\!\!\!\!&& \,+\, 2\,[P(N(t)\hiderel \geq 2)p_B^{ 2}-P(N(t)\hiderel \geq 3)p_B^{ 3}]\nonumber \\
\!\!\!\!\!\!&& \,+ \,3\,[P(N(t)\hiderel \geq 3)p_B^{ 3}-P(N(t)\hiderel \geq 4)p_B^{ 4}]\nonumber \\
\!\!\!\!\!\!&& \,+ \,.\,.\,.\nonumber \\
\!\!\!\!\!\!&&=\sum_{n=1}^\infty P(N(t)\hiderel \geq n)p_B^{ n}
.\end{eqnarray}



Furthermore, Eq. \ref{eqn:AappW1} can be written as:
\begin{eqnarray}\label{eqn:AappW2}
 \!\! \!\!\!\!\!\!\!\!&& \hat W_{B_1}(t)  =\sum_{n=1}^\infty P(N(t)\hiderel \geq n)p_B^{ n}\nonumber\\
&&= p_B[P(N(t)\!\!\hiderel = \!\!1)+P(N(t)\!\!\hiderel = \!\!2)+...+P(N(t)\!\!\hiderel = \!\!n))]\nonumber\\
&& +~p_B^2[P(N(t)\!\!\hiderel = \!\!2)+P(N(t)\!\!\hiderel = \!\!3)+...+P(N(t)\!\!\hiderel = \!\!n))]\nonumber\\
&& +~p_B^3[P(N(t)\!\!\hiderel = \!\!3)+P(N(t)\!\!\hiderel = \!\!4)+...+P(N(t)\!\!\hiderel = \!\!n))]\nonumber\\
 &&+ \,.\,.\,.\nonumber\\
 &&=\sum_{n=1}^\infty  P(N(t) \hiderel =n)\sum_{k=1}^{n} p_B^{ k}
.\end{eqnarray}

Since $\sum_{k=1}^{n} p_B^{ k}=\frac{p_B}{1-p_B}(1-p_B^n)$, Eq. \ref{eqn:AappW2} can be further rewritten  as:
\begin{eqnarray}\label{eqn:BappWB1}
 \hat W_{B_1}(t) \!\!\!\!\!\!\!\! &&=\frac{p_B}{1-p_B}\sum_{n=1}^\infty (1-p_B^n)P(N(t) \hiderel =n)\nonumber\\
  &&=\frac{p_B}{1-p_B}\sum_{n=0}^\infty (1-p_B^n)P(N(t) \hiderel =n)\nonumber\\
  &&=\frac{p_B}{1-p_B}[1-\sum_{n=0}^\infty p_B^nP(N(t) \hiderel =n)]\nonumber\\
 && =\frac{p_B}{1-p_B}(1-\E{p_B^{N(t)}}).
 \end{eqnarray}
\end {proof}

\begin{lemma}\label{lem:normol}

Let $\mu_T$ and $\sigma_T$ represent the mean and standard deviation of the common underlying lifetime distribution in the classical renewal process. Then

\begin{dmath}\label{eqn:TB3_1}
 \frac{N(t)-M(t)}{\frac{\sigma_T}{\mu_T} \sqrt{M(t)}}\sim \mathcal N(0,1),
\end{dmath}
and
\begin{dmath}\label{eqn:TB3_2}
 P(\frac{N(t)-M(t)}{\frac{\sigma_T}{\mu_T} \sqrt{M(t)}}<y) \approx \Phi(y), \text{\,\,\,as $t\rightarrow \infty$,}
\end{dmath}
where $M(t)=\E{N(t)} $ is the  classical renewal function (Eq. \ref{eqn:class}), $\mathcal N(0,1)$ denotes the standard normal distribution, and $\Phi(\cdot)$ denotes the distribution function of the standard normal distribution $\mathcal N(0,1)$.
\end{lemma}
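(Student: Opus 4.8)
The plan is to recognize this as the classical central limit theorem for the renewal counting process $N(t)$, and to derive it from the ordinary CLT applied to the i.i.d.\ partial sums $S_n=\sum_{i=1}^n T_i$, using the duality $\{N(t)\ge n\}=\{S_n\le t\}$ recorded in Section~\ref{S:class}. A convenient way to organize the argument is to first establish the textbook form, that $\big(N(t)-t/\mu_T\big)\big/\sqrt{t\,\sigma_T^2/\mu_T^3}$ is asymptotically $\mathcal N(0,1)$, and then observe that the statement in Eq.~\ref{eqn:TB3_1}--\ref{eqn:TB3_2} is merely an equivalent reparametrization of this, since $M(t)=t/\mu_T+o(\sqrt t)$ (from the refined renewal theorem, valid under the finite-variance hypothesis) implies that replacing $t/\mu_T$ by $M(t)$ in the centering and $t$ by $\mu_T M(t)$ under the radical in the normalization leaves the limit unchanged.

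For the textbook form, I would fix $y\in\mathbb R$ and let $n=n(t)$ be the smallest integer with $n\ge t/\mu_T+y\sqrt{t\,\sigma_T^2/\mu_T^3}$; note $n(t)\to\infty$. Then, because $N(t)$ is integer-valued and $\{N(t)\ge n\}=\{S_n\le t\}$, we have $P\big(N(t)<t/\mu_T+y\sqrt{t\sigma_T^2/\mu_T^3}\big)=P\big(N(t)\le n-1\big)=1-P(S_n\le t)=P(S_n>t)$. Standardizing the sum, this equals $P\!\big((S_n-n\mu_T)/(\sigma_T\sqrt n)>(t-n\mu_T)/(\sigma_T\sqrt n)\big)$. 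I would then compute the deterministic threshold: from the definition of $n$, $n\mu_T=t+y\sigma_T\sqrt{t/\mu_T}\,(1+o(1))$ and $\sigma_T\sqrt n=\sigma_T\sqrt{t/\mu_T}\,(1+o(1))$, so $(t-n\mu_T)/(\sigma_T\sqrt n)\to -y$ as $t\to\infty$. Combining the CLT $(S_n-n\mu_T)/(\sigma_T\sqrt n)\Rightarrow\mathcal N(0,1)$ with this convergent threshold and the continuity of $\Phi$ gives $P(S_n>t)\to P(Z>-y)=\Phi(y)$ for $Z\sim\mathcal N(0,1)$, which is Eq.~\ref{eqn:TB3_2}; Eq.~\ref{eqn:TB3_1} is the informal restatement of the same limit law.

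The main obstacle is making the passage to the limit rigorous rather than formal. Three points need care: (i) the error from replacing the real number $t/\mu_T+y\sqrt{t\sigma_T^2/\mu_T^3}$ by the nearby integer $n$, which contributes $O(1)$ and is washed out after dividing by $\sqrt t$; (ii) verifying $M(t)-t/\mu_T=o(\sqrt t)$ so that centering by $M(t)$ instead of $t/\mu_T$ is harmless, which follows from the sharpened elementary renewal theorem under $\sigma_T<\infty$; and (iii) justifying that the (non-uniform) convergence in distribution of the standardized sum may be evaluated at a moving, convergent argument, for which it suffices to invoke a uniform rate such as Berry--Esseen, or equivalently a Slutsky/Pólya-type argument using the uniform continuity of $\Phi$. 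None of these is deep, but they are the steps where an honest proof must spend its effort.
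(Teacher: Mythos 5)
Your proof is correct and follows essentially the same route as the paper's (the standard Ross-style argument): the duality $\{N(t)<n\}=\{S_n>t\}$, the CLT for the standardized partial sums, and the computation that the deterministic threshold converges to $-y$. The only difference is organizational --- the paper centers at $M(t)$ from the outset with $k=\lfloor M(t)+\tfrac{\sigma_T}{\mu_T}y\sqrt{M(t)}\rfloor$ and invokes the elementary renewal theorem to evaluate the threshold, whereas you prove the $t/\mu_T$-centered form first and transfer via $M(t)-t/\mu_T=o(\sqrt{t})$; your version is in fact more explicit about the technical points (the moving argument in the CLT, the refined renewal estimate) that the paper passes over.
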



\begin{proof}
Following a similar proof as in \cite{ross1996stochastic}, let $y$ be any given number  and let $k=\lfloor M(t)+\frac{\sigma_T}{\mu_T}y \sqrt{M(t)}\rfloor$. Then, 
\begin{eqnarray}\label{eqn:N(t) nomal}
 P{( \frac{N(t)-M(t)}{\frac{\sigma_T}{\mu_T} \sqrt{M(t)}}<y)}\!\!\!\!\!\!\!\!\!\!&&=P\Big(N(t) \hiderel< M(t)+\frac{\sigma_T}{\mu_T}y\sqrt{M(t)}\Big)\nonumber\\
 &&= P\Big(N(t) \hiderel< \lfloor M(t)+\frac{\sigma_T}{\mu_T}y \sqrt{M(t)}\rfloor\Big)\nonumber\\
 &&= P{(N(t)<k)}\hiderel=P{(S_k>t)}\nonumber\\
 &&=P{( \frac{S_k-k \mu_T}{\sigma_T \sqrt{k}}>\frac{t-k \mu_T}{\sigma_T \sqrt{k}})}.
\end{eqnarray}

By using the fact from the   elementary renewal theorem (see, e..g, \cite{rausand2003system}) that $\lim_{t\rightarrow \infty}M(t)= \frac{t}{\mu_T}$, and substituting into the expression for $k$, we have that, as $t \rightarrow \infty$,

$$\frac{t-k\mu_T}{\sigma_T \sqrt{k}}\rightarrow -y(1+\frac{y\sigma_T}{\sqrt{t\mu_T }})^{-\frac{1}{2}}\rightarrow -y.
$$

Now, by the central limit theorem, the distribution of $( \frac{S_k-k \mu_T}{\sigma_T \sqrt{k}})$ converges to the standard normal distribution as $t\rightarrow \infty$ (and thereby $k \rightarrow \infty$).
Therefore, based on Eq. \ref{eqn:N(t) nomal}, we have that for large $t$, the probability 
$P{( \frac{N(t)-M(t)}{\frac{\sigma_T}{\mu_T} \sqrt{M(t)}}<y)}$  is approximated by the probability $1 -\Phi(-y)$, where $\Phi()$ is the cdf of the standard normal distribution, and  since $1-\Phi(-y) = \Phi(y)$,
we have that 
$ \frac{N(t)-M(t)}{\frac{\sigma_T}{\mu_T} \sqrt{M(t)}}\sim \mathcal N(0,1).$
\end{proof}

\begin{lemma}\label{cor:greater}
The exact renewal function $W_B(t)$ of Process B is asymptotically equal to $\frac{p_B}{1-p_B}$ as $t\rightarrow \infty$, i.e., 
\begin{dmath}
\lim_{t\rightarrow \infty} W_{B}(t)=
\frac{ p_B}{1- p_B}
.\end{dmath}
\end{lemma}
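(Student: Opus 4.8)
The plan is to read off the limit directly from the series representation of $W_B(t)$ obtained in the proof of Theorem~\ref{theo:B}, exploiting the fact that Process~B is a \emph{terminating} (defective) renewal process. Recall from that proof that $W_B(t)=\sum_{n=1}^{\infty}P(N_B(t)\ge n)$ and that the Laplace transform of the $n$th summand is $\tfrac1s\big(f_B^{*}(s)\big)^{n}$, so that $P(N_B(t)\ge n)=\int_0^t f_B^{*n}(u)\,du$, where $f_B(x)=f_T(x)\bar F_{\tilde T}(x)$ and $f_B^{*n}$ denotes its $n$-fold convolution. The key structural point is that $f_B$ is a \emph{sub}-density: $\int_0^\infty f_B(u)\,du=p_B<1$ by Eq.~\ref{eqn:PB}.

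First I would compute the limit of a single term. Since $f_B^{*n}\ge 0$, the function $t\mapsto\int_0^t f_B^{*n}(u)\,du$ is non-decreasing and bounded, hence convergent, and by the standard convolution identity (the total mass of an $n$-fold convolution is the product of the total masses, Fubini--Tonelli),
\[
\lim_{t\to\infty}P(N_B(t)\ge n)=\int_0^\infty f_B^{*n}(u)\,du=\Big(\int_0^\infty f_B(u)\,du\Big)^{\!n}=p_B^{\,n}.
\]
Next I would pass the limit through the infinite sum: because each $P(N_B(t)\ge n)$ is non-decreasing in $t$, the monotone convergence theorem applies to the series $W_B(t)=\sum_{n\ge1}P(N_B(t)\ge n)$, giving
\[
\lim_{t\to\infty}W_B(t)=\sum_{n=1}^{\infty}\lim_{t\to\infty}P(N_B(t)\ge n)=\sum_{n=1}^{\infty}p_B^{\,n}=\frac{p_B}{1-p_B},
\]
where the last step is the geometric series, convergent precisely because $p_B<1$.

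The only delicate point — and hence the main obstacle — is justifying the interchange of $\lim_{t\to\infty}$ with the infinite sum; I would dispose of it via monotonicity in $t$ of each $P(N_B(t)\ge n)$ (a longer time horizon can only increase the count), which makes the monotone convergence theorem directly applicable. This is also why I prefer the series route to a final-value-theorem argument applied to $W_B^{*}(s)=f_B^{*}(s)/\big(s(1-f_B^{*}(s))\big)$ from the proof of Theorem~\ref{theo:B}: the latter would additionally require establishing a priori that $\lim_{t\to\infty}W_B(t)$ exists. (In the degenerate case $p_B=1$, i.e.\ $\tilde T>T$ almost surely, both sides equal $+\infty$ and the statement holds in the extended sense; for non-degenerate lifetimes one has $p_B<1$ strictly, so the displayed identity is finite.)
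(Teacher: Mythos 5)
Your proof is correct, but it takes a genuinely different route from the paper's. The paper works entirely in the Laplace domain: it writes $W_B^*(s)=\frac{f_B^*(s)}{s(1-f_B^*(s))}$, uses the derivative property to get $f_B^*(s)=sF_B^*(s)$, applies the final value theorem once to conclude $\lim_{s\to 0}sF_B^*(s)=p_B$, and then applies it a second time to $sW_B^*(s)$ to obtain $\frac{p_B}{1-p_B}$. You instead stay in the time domain, read off $P(N_B(t)\ge n)=\int_0^t f_B^{*n}(u)\,du$ from the same convolution structure, identify $f_B$ as a defective density with total mass $p_B<1$ so that each term tends to $p_B^{\,n}$, and sum the geometric series after interchanging limit and sum via monotone convergence. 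Your route buys something real: the final value theorem is only valid when the time-domain limit is known to exist (equivalently, when $sW_B^*(s)$ has no poles in the closed right half-plane other than possibly at the origin), and the paper's proof does not verify this hypothesis, so it is in that sense circular or at least incomplete; your monotonicity argument establishes existence and the value of the limit simultaneously, and you correctly flag this as the reason for preferring the series route. The paper's approach is shorter and reuses the transform machinery already set up in the proof of Theorem~\ref{theo:B}, but yours is the more self-contained and rigorous of the two. One small remark: your parenthetical about the degenerate case $p_B=1$ is fine as stated, though the lemma as written implicitly assumes $p_B<1$ for the right-hand side to be finite.
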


\begin{proof}
From Eq. \ref{eqn:LapNAt}, we know that  the Laplace transform of $ W_{B}(t)$ is $W_{B}^*(s)=\frac{ f_{B}^*(s)}{s(1- f_{B}^*(s))}$. Also, using the derivative property, we have that $f_{B}^*(s)=sF_B^*(s)-F_B(0)=sF_B^*(s)-0=sF_B^*(s)$. Since
$\lim_{t\rightarrow \infty}F_B(t)=\lim_{t\rightarrow \infty}\int_0^tf_T(u)\bar F_{ \tilde T}(u)du\hiderel \\ =\int_0^\infty f_T(u)\bar F_{ \tilde T}(u)du\hiderel=p_B,
$
we also have, based on the final value theorem, that
$
\lim_{s\rightarrow 0}sF_B^*(s)\hiderel=\lim_{t\rightarrow \infty}F_B(t)=p_B.
$

By using the final value theorem again, we have that
\begin{eqnarray}
\lim_{t\rightarrow \infty} W_{B}(t)\!\!\!\!\!\!\!\!\!\!&&=\lim_{s\rightarrow 0} s W_{B}^*(s)=\lim_{s\rightarrow 0}\frac{ f_{B}^*(s)}{1- f_{B}^*(s)} \nonumber \\
&& =\lim_{s\rightarrow 0}\frac{ sF_B^*(s)}{1- sF_B^*(s)}=\frac{ p_B}{1- p_B}.
\end{eqnarray}
 
\end{proof}

The proof of Theorem 4 based on Lemmas  \ref{lem:WB1},   \ref{lem:normol}, and  \ref{cor:greater} is now  given.

\begin{proof}
From Lemma \ref{lem:WB1}, we know that the renewal function of Process B can be approximated by Eq. \ref{eqn:TB2}, i.e., 
\begin{eqnarray}\label{eqn:BappWB2}
 \hat W_{B_1}(t)  =\frac{p_B}{1-p_B}\cdot\Big\{1-\E{p_B^{N(t)}}\Big\}
.\end{eqnarray} 

From Lemma \ref{lem:normol}, we know that $N(t)$ is asymptotically normal with mean $\E{N(t)}=M(t)$ and standard deviation  $\sigma_{N(t)}=\frac{\sigma_T}{\mu_T} \sqrt{M(t)}$, where $\mu_T$ and $\sigma_T$ denote the mean and standard deviation of the lifetime of component NC.
Therefore, the random variable $p_B^{N(t)}=\exp{N(t)\ln{p_B}}$ is asymptotically log-normally distributed and its mean is given by 
\begin{eqnarray}
\E{p_B^{N(t)}}\!\!\!\!\!\!\!\!\!
&&=\exp{M(t)\ln{(p_B)}+[\frac{\sigma_T}{\mu_T} \sqrt{M(t)}\ln{(p_B)}]^2/2}\nonumber\\
&&=\exp{M(t)\ln{(p_B)}(1+\frac{\sigma_T^2}{2\mu_T^2}\ln{(p_B)})}
=(p_B)^{M(t)\cdot \delta},~~
\end{eqnarray}
where $\delta = (1+\frac{\sigma_T^2}{2\mu_T^2}\ln{(p_B)})$.
Therefore, Eq. \ref{eqn:BappWB2} is approximately equal to:

\begin{dmath}\label{eqn:W2=W3}
  \hat W_{B_1}(t) 
\approx \frac{p_B}{1-p_B}\Big(1-(p_B)^{\delta M(t)}\Big).
\end{dmath}



 From Lemma \ref{cor:greater}, we know that 
$W_B(t)$ is asymptotically equal to $\frac{p_B}{1-p_B}$ as  $t \rightarrow \infty$.
Now, since
\begin{eqnarray}\label{eqn:small}
&&\frac{p_B}{1-p_B}-\frac{p_B}{1-p_B}(1-(p_B)^{\delta M(t)})\nonumber\\
&&\geq 
\frac{p_B}{1-p_B}-\frac{p_B}{1-p_B}(1-(p_B)^{M(t)})
\end{eqnarray}
(due to the fact that $0\leq p_B\leq 1$, $\ln{(p_B)}\leq 0$, and $\delta \le 1$), we have that the expression $\frac{p_B}{1-p_B}(1-(p_B)^{M(t)})$ is asymptotically closer to $\lim_{t\rightarrow \infty} W_{B}(t)=
\frac{ p_B}{1- p_B}$ than $\frac{p_B}{1-p_B}(1-(p_B)^{\delta M(t)})$. Therefore, we use the former expression (denoted by $\hat{W}_{B_2}(t)$) as our approximation for $W_B(t)$, i.e.,
\begin{equation}
    W_B(t) \approx \hat W_{B_2}(t) =\frac{p_B}{1-p_B}\Big(1-(p_B)^{M(t)}\Big).  
\end{equation}

\end{proof}

\subsection{Proof of Theorem \ref{theo:BMR}.}
\label{app:Theorem6_proof}

\begin{proof}

Let $ N_{B_{MR}}(t)$  be the counting process that counts the number of renewals for Process B-MR in the time interval$(0,t]$.
 Let $S_n$ indicate the time that NC has been in operation just before its $n$th minimal (and instantaneous) repair. 

Since the NC is minimally repaired to as-bad-as-old state,  the $i$th lifetime of  NC $T_i$ (for $i>0$), which indicates the lifetime of NC between the ($i-1$)th and $i$th renewals (minimal repairs), follows  $P(T_i \le t) = F_T(t|y=S_{i-1})=\frac{F_{ T}(x+y)-F_{ T}(y)}{1- F_{ T}(y)}$, where now $S_i=\sum^i_{j=1}{T_j}$.
Likewise, 
the conditional survival function is: 
$
\bar F_{ T}(x|y)=\frac{\bar F_{ T}(x+y)}{\bar F_{ T}(y)},
$
and the conditional pdf is:
$
f_{ T}(x|y)=\frac{f_{ T}(x+y)}{1-F_{ T}(y)},
$
where $F_{ T}(x)$ (pdf $f_{ T}(x)$) is the initial lifetime (real age $S_0=0$) distribution of the NC. The definitions of CDF, survival function and pdf of the lifetime of CR is same as that in Process B.

The probability of $ N_{B_{MR}}(t)\geq n$ is written as:
\begin{eqnarray}
\!\!\!\!\!\!\!\!&&P( N_{B_{MR}}(t)\hiderel \geq n)=P\Big((T_1\hiderel \leq t) \cap( \tilde T_1\hiderel \geq T_1 )\cap \nonumber\\ 
\!\!\!\!\!\!\!\!&&(T_2\!\!\hiderel \leq t - T_1) \cap( \tilde T_2\hiderel \geq T_2)\cap (T_3\!\!\hiderel \leq t -T_2 -T_1) 
\nonumber\\
\!\!\!\!\!\!\!\!&&\cap( \tilde T_3 \hiderel \geq T_3)\cap 
...\cap( T_n\!\!\hiderel \leq\! t-S_{n-1})\! \cap\!( \tilde T_n \hiderel \!\geq\! T_n)\Big)
.\end{eqnarray}
 
By using the corresponding distribution functions, $\bar F_{ \tilde T}(x)$ and $f_{ T}(x|y)$, we obtain the probability for $ N_{B_{MR}}(t) \geq n$ as 
\begin{eqnarray}\label{eqn:NCgpden}
\!\!\!\!\!\!\!\!\!\!\!\!&& P( N_{B_{MR}}(t)\!\!\hiderel \geq \!n) \!\!= \!\!\int_0^t\!\!f_T(x_1|0)\bar F_{ \tilde T}(x_1)\!\!\int_0^{t-x_1} \!\!\!\!\!\!\!\!\! \!\!f_T(x_2|x_1)\bar F_{ \tilde T}(x_2)... \nonumber\\
\!\!\!\!\!\!\!\!\!\!\!&&\int_0^{t-x_1-x_2-...-x_{n-1}}\!\!\!\!\!\!\!\!\!\!\!\!\!\!\!\!\!\!\!\!\!\!\!\!\!\!\!\!\!\!\!\!\!\!\!\!\!f_T(x_n|x_1+x_2+...+x_{n-1})\bar F_{ \tilde T}(x_n)dx_n... dx_{2}dx_1
.\end{eqnarray}
By a change of variables, Eq. \ref{eqn:NCgpden} can also be written as:
\begin{eqnarray}\label{eqn:NBMR}
\!\!\!\!\!\!\!\!\!\!\!\!\!\!\!\!\!\!&& P( N_{B_{MR}}(t)\!\!\hiderel \geq\!\! n)\nonumber\\
 \!\!\!\!\!\!\!\!\!\!\!\!\!\!\!\!\!\!&&= \!\!\int_0^tf_T(y_1|0)\bar F_{ \tilde T}(y_1)\int_{y_1}^{t} \!\! f_T(y_2-y_1|y_1)\bar F_{ \tilde T}(y_2-y_1)... \nonumber\\
\!\!\!\!\!\!\!\!\!\!\!\!\!\!\!\!\!\!&&\int_{y_{n-1}}^{t}\!\!\!\!\!\!f_T(y_n-y_{n-1}|y_{n-1})\bar F_{ \tilde T}(y_n-y_{n-1})dy_n... dy_{2}dy_1
.\end{eqnarray}

Define $B_n(y,t)$ as the probability that the counting process $ N_{B_{MR}}(t)$ increases by $n$ or more in the interval $(y,t]$ given that the system has been working for time $y$:
\begin{eqnarray}
B_n(y,t)\!\!\!\!\!\!\!\!\!\!
&&=P{( N_{B_{MR}}(t)- N_{B_{MR}}(y) \geq n| y)}\nonumber\\
&&=\!\!\int_y^tB_{n-1}(z,t) f_T(z-y|y)\bar F_{ \tilde T}(z-y)dz
.\end{eqnarray}

Then Eq. \ref{eqn:NBMR} can be expressed further as:
\begin{eqnarray}
 P( N_{B_{MR}}(t)\!\hiderel \geq\! n) \!\!\!\!\!\!\!\!\!\!\!\!&&=B_n(0,t)\nonumber\\
 &&= \!\!\int_0^t\!\!\!\!B_{n-1}(y_1,t) f_T(y_1|0)\bar F_{ \tilde T}(y_1)dy_1
.\end{eqnarray}

Therefore, $W_{B_{MR}}(t)$ can be expressed as
\begin{eqnarray} 
&&W_{B_{MR}}(t)=\E{ N_{B_{MR}}(t)}\nonumber\\
&&=\!\sum_{n=1}^\infty P( N_{B_{MR}}(t)\geq n)\!=\!\sum_{n=1}^\infty B_n(0,t)\nonumber\\
&&=\!\!P{( N_{B_{MR}}(t)\!\!\geq\!\! 1)}\!\!\hiderel+\!\!\!\sum_{n=2}^\infty  \!\!\int_0^t\!\! B_{n-1}(y,t) f_T(y|0)\bar F_{ \tilde T}(y)dy\nonumber.
\end{eqnarray} 

Then, by using the commutative property of convolution, we get 
\begin{eqnarray}
 \!\!\!\!\!\!\!\!\!\!\!\!\!&&W_{B_{MR}}(t)\!= \!\!F_{B_{MR}}(t|0)\!\!\hiderel+ \! \!\!\int_0^t\!\! \sum_{n=1}^\infty \!\!B_{n}(0,y) f_T(t\!-\!y|y)\bar F_{ \tilde T}(t\!-\!y)dy\nonumber\\
\!\!\!\!\!\!\!\!\!\!\!\!\!&&\iff \!\! W_{B_{MR}}(t)\!=\!\! F_{B_{MR}}(t|0)+\!\!\int_0^t \!\! W_{B_{MR}}(y) f_{B_{MR}}(t-y|y)dy\nonumber
.\end{eqnarray}
where $ F_{B_{MR}}(t|y)\!\!=\!\! \int_0^t\!\!f_T(x|y)\bar F_{ \tilde T}(y)dx$, and $ f_{B_{MR}}(t|y)\!\!=\!\! f_T(t|y)\bar F_{ \tilde T}(t)$.
\end{proof}

\subsection{Proof of Theorem \ref{theo:BMRapp1}.}
\label{app:Theorem7_proof}

\begin{proof}



The probability $P(N_{B_{MR}}(t) = n)$ of having exactly $n$ renewals by time $t$ in Process B-MR for $n\geq 1$ can be related to the probability $P(N_g(t) = n)$ of having exactly $n$ renewals by time $t$ in a  g-renewal process that only involves the NC, as follows: Every renewal in the g-renewal process corresponds to a renewal in Process B-MR as long as CR survives NC given that NC has been working for time $S_{n-1}=y$. That is, $P(N_{B_{MR}}(t) = n)$ equals $P(N_g(t) = n)$ if  CR survives NC for $n$ consecutive renewals (an event that has probability   $\rho_{\tau}(n)$ (Eq. \ref{eqn:rhotau})).
On the other hand, if CR survives NC for $n$ consecutive renewals and dies before NC during the $(n+1)$th renewal, then the number of renewals in Process B-MR is going to remain at $n$,  while the number of renewals in the g-renewal process keeps increasing forever. 
Therefore, $P( N_{B_{MR}}(t)=n)$ for $n\geq 1$ can be approximated by 
 \begin{eqnarray}
 \!\!\!\!\!\!\!\!\!\!\!\!&&P(N_{B_{MR}}(t)\!=\!n)\approx P(N_g(t)=n)\rho_{\tau}(n)\nonumber \\
   \!\!\!\!\!\!\!\!\!\!\!\!\!\!&&+P(N_g(t) \!>\!n)(1\!-\!p_{ B_{MR}}(S_n))\rho_{\tau}(n).
 \end{eqnarray}

Applying the same explanation as the Process B proof (Lemma \ref{lem:WB1}), we have that the approximation in this case is a variant of  Eq. \ref{eqn:AappW2} and is equal to:
\begin{eqnarray}\label{eqn:BMR_0}
W_{B_{MR}}(t)\!\approx\!\sum_{n=1}^\infty \!P(N_g(t)=n)\sum_{k=1}^{n}\rho_{\tau} (k).
  \end{eqnarray}

Letting $V(n)=\sum_{k=1}^{n}\rho_{\tau} (k)$, we furthermore have that 
\begin{eqnarray}\label{eqn:BMR_E}
W_{B_{MR}}(t)\!\approx \E{V(N_g(t))}.
  \end{eqnarray}
  
 The computation of this average is even more complicated than that for Process B due to the ``history'' hidden in function $N_g(t)$.
However, since ${V(N_g(t))}$ is a convex function (being the summation of probability products that can be approximated by exponential functions), we know that the Jensen gap $\E{V(N_g(t))}-V(\E{N_g(t)})$ is bounded \cite{Mitrinovi1993}, and therefore we use 
$V(\E{N_g(t)}) = V(M_g(t))$ to approximate\footnote{To put this approximation into perspective, we note that for Process B,  $p_{ B_{MR}}(y)=p_B$, and $N_g(t)=N(t)$ since $p_{\mathcal B_{MR}}(y)$ is independent of the past time $y$. Then, $V(n) = \sum_{k=1}^{n}\rho_{\tau} (k)=\sum_{k=1}^{n}p_{ B}^k=\frac{p_{ B}}{(1-p_{ B})}(1-p_{ B}^n)$. Therefore, Eq. \ref{eqn:BMR_E} becomes Eq. \ref{eqn:TB2}, i.e., $\hat W_{B_1}(t)=\frac{p_{ B}}{(1-p_{ B})}(1-\E{p_{ B}^{N(t)}})$, and Eq. \ref{eqn:BMR} becomes Eq. \ref{eqn:TB3}, i.e.,  $\hat W_{B_2}(t)=\frac{p_{ B}}{(1-p_{ B})}(1-p_{ B}^{\E{N(t)}})$, which may be seen as involving the Jensen gap, although Eq.  \ref{eqn:TB3} was obtained from Eq. \ref{eqn:TB2} without appealing to the Jensen gap. }
$\E{V(N_g(t))}$. 
Therefore $W_{B_{MR}}(t)$ is now approximated by the function $\hat W_{B_{MR}\_1}(t)$ as, 
\begin{eqnarray} \label{eqn:WBMR_1}
\!\!\!\!\!\!\!\!\!\!\!\!\!&& 
 \hat W_{B_{MR}\_1}(t) = V(M_g (t)) =\sum_{k=1}^{\lceil M_g (t) \rceil}\rho_{\tau} (k),\nonumber
\end{eqnarray}
where we have used  the ceiling function $\lceil  \cdot \rceil$ to force $M_g(t)$ to be an integer, as required by $V(n)$.
 \end{proof}

\subsection{Proof of Theorem \ref{theo:BMRapp2}.}
\label{app:Theorem8_proof}
\begin{proof}
We note  first that:

 \begin{eqnarray}
 \!\!\!\!\!\!\!\!\!\!&&n\rho_{\tau} (n)+\sum_{k=1}^{n-1}k (1-p_{\mathcal B_{MR}}(k))\cdot \rho_{\tau} (k)\nonumber\\
 \!\!\!\!\!\!\!\!\!\!&&=n\rho_{\tau} (n)+\sum_{k=1}^{n-1}k \cdot \Big(\rho_{\tau} (k)-\rho_{\tau} (k+1)\Big)\nonumber\\
 \!\!\!\!\!\!\!\!\!\!&&=n\rho_{\tau} (n)\!\!+\!\!\Big(\rho_{\tau} (1)\!-\!\rho_{\tau} (2)\Big)\!\!+\!\!2\Big(\rho_{\tau} (2)\!-\!\rho_{\tau} (3)\Big)\!\!
 \nonumber\\
\!\!\!\!\!\!\! &&+3\Big(\rho_{\tau} (3)\!-\!\rho_{\tau} (4)\Big)+...+(n-1)  \Big(\rho_{\tau} (n-1)\!-\!\rho_{\tau} (n)\Big)\nonumber\\
\!\!\!\!\!\!\!\!\!\!&&=\rho_{\tau} (1)+\rho_{\tau} (2)+...+[n\rho_{\tau} (n)-(n-1) \rho_{\tau} (n)]\nonumber\\
\!\!\!\!\!\!\!\!\!\!&&=\sum_{k=1}^{n} \rho_{\tau} (k).
  \end{eqnarray}
  
  That is,
  
  \begin{subequations}
 \begin{eqnarray}
 \!\!\!\!\!\!\!\!\!\!\!\!\sum_{k=1}^{n} \rho_{\tau} (k)\!\!\!\!\!\!\!\!\!\!\!\!&&=n\rho_{\tau} (n)\!\!+\!\!\sum_{k=1}^{n-1}k \cdot \Big(\rho_{\tau} (k)\!-\!\rho_{\tau} (k+1)\Big)\\
&& =n\rho_{\tau} (n)\!\!+\!\!\sum_{k=1}^{n-1}k (1-p_{\mathcal B_{MR}}(k))\cdot \rho_{\tau} (k).
  \end{eqnarray}
    \end{subequations}

Based on the above and  Eq. \ref{eqn:BMR}  we substitute $n$ by $M_g(t)$  and use the ceiling function $\lceil M_g (t) \rceil$ to obtain the discrete value needed for the summations and $\rho_\tau(n)$.
 
 \end{proof}

  \linespread{1.0}
\bibliographystyle{IEEEtran}
\bibliography{References}

\begin{thebibliography}{10}
\providecommand{\url}[1]{#1}
\csname url@samestyle\endcsname
\providecommand{\newblock}{\relax}
\providecommand{\bibinfo}[2]{#2}
\providecommand{\BIBentrySTDinterwordspacing}{\spaceskip=0pt\relax}
\providecommand{\BIBentryALTinterwordstretchfactor}{4}
\providecommand{\BIBentryALTinterwordspacing}{\spaceskip=\fontdimen2\font plus
\BIBentryALTinterwordstretchfactor\fontdimen3\font minus
  \fontdimen4\font\relax}
\providecommand{\BIBforeignlanguage}[2]{{%
\expandafter\ifx\csname l@#1\endcsname\relax
\typeout{** WARNING: IEEEtran.bst: No hyphenation pattern has been}%
\typeout{** loaded for the language `#1'. Using the pattern for}%
\typeout{** the default language instead.}%
\else
\language=\csname l@#1\endcsname
\fi
#2}}
\providecommand{\BIBdecl}{\relax}
\BIBdecl

\bibitem{rausand2003system}
M.~Rausand and A.~Hoyland, \emph{System reliability theory: models, statistical
  methods, and applications}.\hskip 1em plus 0.5em minus 0.4em\relax John Wiley
  \& Sons, 2003, vol. 396.

\bibitem{XU2018141}
M.~Xu, X.~Jin, S.~Kamarthi, and M.~Noor-E-Alam, ``A failure-dependency modeling
  and state discretization approach for condition-based maintenance
  optimization of multi-component systems,'' \emph{Journal of Manufacturing
  Systems}, vol.~47, pp. 141--152, 2018.

\bibitem{CASTANIER2005109}
B.~Castanier, A.~Grall, and C.~Bérenguer, ``A condition-based maintenance
  policy with non-periodic inspections for a two-unit series system,''
  \emph{Reliability Engineering \& System Safety}, vol.~87, no.~1, pp.
  109--120, 2005.

\bibitem{zhang2011condition}
Z.~Zhang, S.~Wu, and B.~Li, ``A condition-based and opportunistic maintenance
  model for a two-unit deteriorating system,'' in \emph{2011 International
  Conference on Quality, Reliability, Risk, Maintenance, and Safety
  Engineering}.\hskip 1em plus 0.5em minus 0.4em\relax IEEE, 2011, pp.
  590--595.

\bibitem{li2018maintenance}
M.~Li, N.~Wang, B.~Xiao, and L.~Ma, ``Maintenance optimization of two-units
  system with general time distributions,'' in \emph{2018 International
  Conference on Sensing, Diagnostics, Prognostics, and Control (SDPC)}.\hskip
  1em plus 0.5em minus 0.4em\relax IEEE, 2018, pp. 119--124.

\bibitem{zequeira2004maintenance}
R.~Zequeira and C.~Berenguer, ``Maintenance cost analysis of a two-component
  parallel system with failure interaction,'' in \emph{Annual Symposium
  Reliability and Maintainability, 2004-RAMS}.\hskip 1em plus 0.5em minus
  0.4em\relax IEEE, 2004, pp. 220--225.

\bibitem{salari2017optimal}
N.~Salari and V.~Makis, ``Optimal preventive and opportunistic maintenance
  policy for a two-unit system,'' \emph{The International Journal of Advanced
  Manufacturing Technology}, vol.~89, no. 1-4, pp. 665--673, 2017.

\bibitem{shafiee2015optimal}
M.~Shafiee and M.~Finkelstein, ``An optimal age-based group maintenance policy
  for multi-unit degrading systems,'' \emph{Reliability Engineering \& System
  Safety}, vol. 134, pp. 230--238, 2015.

\bibitem{WANG20191}
J.~Wang, V.~Makis, and X.~Zhao, ``Optimal condition-based and age-based
  opportunistic maintenance policy for a two-unit series system,''
  \emph{Computers \& Industrial Engineering}, vol. 134, pp. 1--10, 2019.

\bibitem{zhang2018optimal}
N.~Zhang, M.~Fouladirad, and A.~Barros, ``Optimal imperfect maintenance cost
  analysis of a two-component system with failure interactions,''
  \emph{Reliability Engineering \& System Safety}, vol. 177, pp. 24--34, 2018.

\bibitem{syamsundar2021estimating}
A.~Syamsundar, V.~N.~A. Naikan, and S.~Wu, ``Estimating maintenance
  effectiveness of a repairable system under time-based preventive
  maintenance,'' \emph{Computers \& Industrial Engineering}, vol. 156, p.
  107278, 2021.

\bibitem{wang2017preventive}
G.-J. Wang, Y.-L. Zhang, and R.~C. Yam, ``Preventive maintenance models based
  on the generalized geometric process,'' \emph{IEEE Transactions on
  Reliability}, vol.~66, no.~4, pp. 1380--1388, 2017.

\bibitem{sun2018scheduling}
Q.~Sun, Z.-S. Ye, and W.~Peng, ``Scheduling preventive maintenance considering
  the saturation effect,'' \emph{IEEE Transactions on Reliability}, vol.~68,
  no.~2, pp. 741--752, 2018.

\bibitem{6045309}
Y.~Liu, H.-Z. Huang, and X.~Zhang, ``A data-driven approach to selecting
  imperfect maintenance models,'' \emph{IEEE Transactions on Reliability},
  vol.~61, no.~1, pp. 101--112, 2012.

\bibitem{6046108}
L.~Doyen and O.~Gaudoin, ``Modeling and assessment of aging and efficiency of
  corrective and planned preventive maintenance,'' \emph{IEEE Transactions on
  Reliability}, vol.~60, no.~4, pp. 759--769, 2011.

\bibitem{KIJIMA1988194}
M.~Kijima, H.~Morimura, and Y.~Suzuki, ``Periodical replacement problem without
  assuming minimal repair,'' \emph{European Journal of Operational Research},
  vol.~37, no.~2, pp. 194 -- 203, 1988.

\bibitem{IJRQSE}
G.~Li and D.~Kagaris, ``Renewals in systems with two serially connected
  components,'' \emph{International Journal of Reliability, Quality and Safety
  Engineering}, 2022, {DOI}:10.1142/S0218539322500127.

\bibitem{Kaga2206}
G.~Li and D.~Kagaris, ``On the number of bounded renewals in two-unit systems
  with critical components,'' in \emph{2022 IEEE International Conference on
  Prognostics and Health Management (ICPHM)}, Detroit, MI, Jun. 2022.

\bibitem{Bartholomew}
D.~J. Bartholomew, ``An approximate solution of the integral equation of
  renewal theory,'' \emph{Journal of the Royal Statistical Society: Series B
  (Methodological)}, vol.~25, no.~2, pp. 432--441, 1963.

\bibitem{deligonul_1985}
Z.~Deligönül, ``An approximate solution of the integral equation of renewal
  theory,'' \emph{Journal of Applied Probability}, vol.~22, no.~4, p.
  926–931, 1985.

\bibitem{tortorella2005numerical}
M.~Tortorella, ``Numerical solutions of renewal-type integral equations,''
  \emph{INFORMS Journal on Computing}, vol.~17, no.~1, pp. 66--74, 2005.

\bibitem{ran2006some}
L.~Ran, L.~Cui, and M.~Xie, ``Some analytical and numerical bounds on the
  renewal function,'' \emph{Communications in Statistics-Theory and Methods},
  vol.~35, no.~10, pp. 1815--1827, 2006.

\bibitem{garg1998approximations}
A.~Garg and J.~R. Kalagnanam, ``Approximations for the renewal function,''
  \emph{IEEE Transactions on Reliability}, vol.~47, no.~1, pp. 66--72, 1998.

\bibitem{Kijima1986}
M.~Kijima and U.~Sumita, ``A useful generalization of renewal theory: Counting
  processes governed by non-negative markovian increments,'' \emph{Journal of
  Applied Probability}, vol.~23, no.~1, pp. 71--88, 1986.

\bibitem{kijima2002generalized}
M.~Kijima, ``Generalized renewal processes and general repair models,'' in
  \emph{Stochastic Models in Reliability and Maintenance}.\hskip 1em plus 0.5em
  minus 0.4em\relax Springer, 2002, pp. 145--164.

\bibitem{ross1996stochastic}
S.~M. Ross, J.~J. Kelly, R.~J. Sullivan, W.~J. Perry, D.~Mercer, R.~M. Davis,
  T.~D. Washburn, E.~V. Sager, J.~B. Boyce, and V.~L. Bristow, \emph{Stochastic
  processes}.\hskip 1em plus 0.5em minus 0.4em\relax Wiley New York, 1996,
  vol.~2.

\bibitem{Mitrinovi1993}
D.~S. Mitrinovi{\'{c}}, J.~E. Pe{\v{c}}ari{\'{c}}, and A.~M. Fink, \emph{Convex
  Functions and Jensen's Inequality}.\hskip 1em plus 0.5em minus 0.4em\relax
  Dordrecht: Springer Netherlands, 1993, pp. 1--19.

\end{thebibliography}

\end{document}